\newtheorem{theorem}{Theorem} [section]
\newtheorem{corollary}[theorem]{Corollary}	
\newtheorem{lemma}[theorem]{Lemma}
\newtheorem{remark}[theorem]{Remark}
\theoremstyle{definition}
\newcommand{\R}{\mathbb{R}}
\tikzset{
	master/.style={
		execute at end picture={
			\coordinate (lower right) at (current bounding box.south east);
			\coordinate (upper left) at (current bounding box.north west);
		}
	},
	slave/.style={
		execute at end picture={
			\pgfresetboundingbox
			\path (upper left) rectangle (lower right);
		}
	}
}
\let\oldbibliography\thebibliography
\renewcommand{\thebibliography}[1]{\oldbibliography{#1}
\setlength{\itemsep}{-0.5pt}}
\def\XXint#1#2#3{{\setbox0=\hbox{$#1{#2#3}{\int}$}
\vcenter{\hbox{$#2#3$}}\kern-.5\wd0}}
\tikzset{->-/.style={decoration={
				markings,
				mark=at position #1 with {\arrow{latex}}},postaction={decorate}}}
	\tikzset{-<-/.style={decoration={
				markings,
				mark=at position #1 with {\arrowreversed{latex}}},postaction={decorate}}}
\tikzset{cross/.style={cross out, draw, 
         minimum size=2*(#1-\pgflinewidth), 
         inner sep=0pt, outer sep=0pt}}
\numberwithin{equation}{section}
\def\bigO{{\cal O}}
\begin{document}
\title{\vspace*{-1.5cm} Exponential moments for disk counting statistics \\ of random normal matrices in the critical regime}
\author{Christophe Charlier\footnote{Centre for Mathematical Sciences, Lund University, 22100 Lund, Sweden. e-mail: christophe.charlier@math.lu.se
} \, and Jonatan Lenells\footnote{Department of Mathematics, KTH Royal Institute of Technology, 10044 Stockholm, Sweden. e-mail: jlenells@kth.se}}

\maketitle

\begin{abstract}
We obtain large $n$ asymptotics for the $m$-point moment generating function of the disk counting statistics of the Mittag-Leffler ensemble. We focus on the critical regime where all disk boundaries are merging at speed $n^{-\smash{\frac{1}{2}}}$, either in the bulk or at the edge. As corollaries, we obtain two central limit theorems and precise large $n$ asymptotics of all joint cumulants (such as the covariance) of the disk counting function. Our results can also be seen as large $n$ asymptotics for $n\times n$ determinants with merging planar discontinuities.
\end{abstract}
\noindent
{\small{\sc AMS Subject Classification (2020)}: 41A60, 60B20, 60G55.}

\noindent
{\small{\sc Keywords}: Determinants with merging planar discontinuities, Moment generating functions, Random matrix theory, Asymptotic analysis.}

\section{Introduction and statement of results}\label{section: introduction}
In recent years, there has been a growing interest in both the physics and mathematics literature in understanding the counting statistics of various two-dimensional point processes, see e.g. \cite{LeeRiser2016, CE2020, LMS2018, L et al 2019, ES2020, FenzlLambert, SDMS2020, Charlier 2d jumps, SDMS2021, BGL2022, AkemannSungsoo}.
Most of the focus, so far, has been on the one-point counting statistics (see however \cite{FenzlLambert,Charlier 2d jumps}).  In this work we study the $m$-point counting statistics for general $m \in \mathbb{N}_{>0}$ of the Mittag-Leffler ensemble in the critical regime where all disk boundaries are merging. 

\medskip The Mittag-Leffler ensemble with parameters $b>0$ and $\alpha > -1$ is the following probability density function for $n$ points in the complex plane
\begin{align}\label{def of point process}
\frac{1}{n!Z_{n}} \prod_{1 \leq j < k \leq n} |z_{k} -z_{j}|^{2} \prod_{j=1}^{n}|z_{j}|^{2\alpha}e^{-n |z_{j}|^{2b}}, \qquad z_{1},\ldots,z_{n} \in \mathbb{C},
\end{align}
where $Z_{n}$ is the normalization constant. This is a two-dimensional determinantal point process arising in the random normal matrix model \cite{Mehta} and generalizing the complex Ginibre process (which corresponds to $(b,\alpha)=(1,0)$ \cite{Ginibre}). As $n \to + \infty$, the zero counting measure of the average characteristic polynomial of \eqref{def of point process} converges weakly to the measure $\mu(d^{2}z) = \smash{\frac{b^{\smash{2}}}{\pi}}|z|^{\smash{2b-2}}d^{\smash{2}}z$, and the support of $\mu$ is the disk centered at $0$ of radius $b^{-\smash{\frac{1}{2b}}}$ \cite{SaTo}. The Mittag-Leffler ensemble has been widely studied over the years, see e.g. \cite{CZ1998, AV2003, FBKSZ2012, AK2013, AKS2018, BS2021} for various universality and finite-$n$ results. We also refer the interested reader to \cite{HKPV2010} for more background on two-dimensional point processes. 

\newpage For $y>0$, we let $\mathrm{N}(y):=\#\{z_{j}: |z_{j}| < y\}$, i.e. $\mathrm{N}(y)$ is the random variable that counts the number of points in the disk centered at $0$ of radius $y$. In this paper we study the joint statistics of $\mathrm{N}(r_{1}),\ldots,\mathrm{N}(r_{m})$, where $m \in \mathbb{N}_{>0}$ is arbitrary but fixed, in the critical situation where $n \to + \infty$ and all radii are merging near a certain value $r \in (0,b^{-\frac{1}{2b}}]$:
\begin{align}\label{def of rell}
0 < r_{1} < \ldots <r_{m}, \qquad r_{\ell} = r \bigg( 1+\frac{\sqrt{2}\, \mathfrak{s}_{\ell}}{r^{b}\sqrt{n}} \bigg)^{\frac{1}{2b}}, \qquad \mathfrak{s}_{\ell}\in \mathbb{R},
\end{align}
see also Figure \ref{fig: disk counting statistics}. The case $r \in (0,b^{-\frac{1}{2b}})$ corresponds to ``the bulk regime" and $r = b^{-\frac{1}{2b}}$ to ``the edge regime". Our main results can be summarized as follows:
\begin{itemize}
\item Theorems \ref{thm:main thm} and \ref{thm:main thm edge} give precise large $n$ asymptotics for the moment generating function $\mathbb{E}\big[ \prod_{j=1}^{m} e^{u_{j}\mathrm{N}(r_{j})} \big]$, up to and including the fourth term of order $n^{-\frac{\smash{1}}{2}}$. Theorem \ref{thm:main thm} deals with the bulk regime and Theorem \ref{thm:main thm edge} deals with the edge regime. 
\item Corollary \ref{coro:correlation} $(a)$ establishes precise large $n$ asymptotics for all joint cumulants of $\mathrm{N}(r_{1}),\ldots,$ $\mathrm{N}(r_{m})$ in the bulk regime. The analogue of that for the edge regime is given in Corollary \ref{coro:correlation} $(c)$. We also obtain several central limit theorems for the joint fluctuations of $\mathrm{N}(r_{1}),\ldots,\mathrm{N}(r_{m})$; Corollary \ref{coro:correlation} $(b)$ concerns the bulk regime and Corollary \ref{coro:correlation} $(d)$ the edge regime. 
\end{itemize}

\begin{figure}
\begin{center}
\begin{tikzpicture}[master]
\node at (0,0) {\includegraphics[width=5cm]{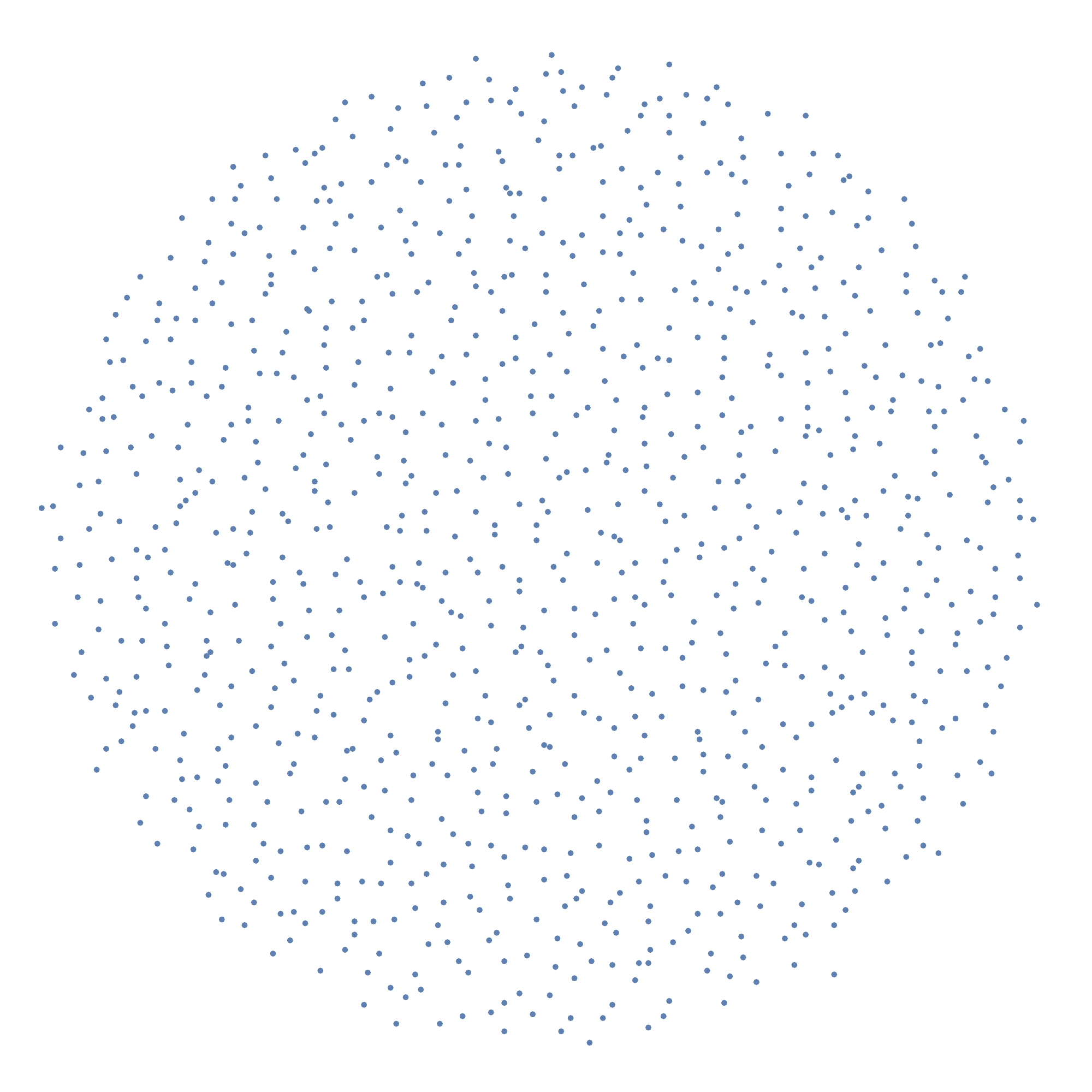}};
\draw[black] (0,0) circle (0.85cm);
\draw[black] (0,0) circle (1cm);
\draw[black] (0,0) circle (1.08cm);
\end{tikzpicture} \hspace{1.5cm} \begin{tikzpicture}[slave]
\node at (0,0) {\includegraphics[width=5cm]{Ginibre_1000.pdf}};
\draw[black] (0,0) circle (2.13cm);
\draw[black] (0,0) circle (2.3cm);
\end{tikzpicture}
\end{center}
\caption{\label{fig: disk counting statistics}Left: three merging disks in the bulk; this case is covered by Theorem \ref{thm:main thm} with $m=3$. Right: two merging disks at the edge; this case is covered by Theorem \ref{thm:main thm edge} with $m=2$. For both pictures, $b=1$ and $\alpha=0$.}
\end{figure}
The problem of determining the large $n$ asymptotics for the one-point generating function, i.e. the case $m=1$ in our setting, was already considered in \cite{CE2020, L et al 2019, FenzlLambert}. The work \cite{CE2020} considers counting statistics of Ginibre-type ensembles in a more general geometric setting, and second order asymptotics for the generating function of counting statistics of general domains (not just centered disks) are obtained in \cite[Proposition 8.1]{CE2020}. The work \cite{L et al 2019} focuses on disk counting statistics of rotation-invariant ensembles with a general potential, and second order asymptotics for the generating function are given in \cite[eq (34)]{L et al 2019}. More precise asymptotics, including the third term, were obtained in \cite[Proposition 2.3]{FenzlLambert} for the Ginibre ensemble (see also \cite[Propositions 2.4 and 2.9]{FenzlLambert} for analogous results on other rotation-invariant processes). Large $n$ asymptotics for $\mathbb{E}\big[ \prod_{j=1}^{m} e^{u_{j}\mathrm{N}(r_{j})} \big]$, including the fourth term and for general $m \in \mathbb{N}_{>0}$, were then obtained in \cite{Charlier 2d jumps} for Mittag-Leffler ensembles, in the case where $r_{1},\ldots,r_{m}$ are fixed (independent of $n$). This case contrasts with the regime \eqref{def of rell} considered here. In the setting of \cite{Charlier 2d jumps}, the general case $m \geq 2$ is actually not much different from the case $m=1$. Indeed, it follows from \cite[Theorem 1.1]{Charlier 2d jumps} that for general $b>0,\alpha>-1$, $u_{1},\ldots,u_{m}\in \mathbb{R}$ and $r_{1}<\ldots<r_{m}$ fixed, we have
\begin{align}\label{decoupling}
\mathbb{E}\bigg[ \prod_{j=1}^{m} e^{u_{j}\mathrm{N}(r_{j})} \bigg] = \prod_{j=1}^{m}\mathbb{E}\Big[ e^{u_{j}\mathrm{N}(r_{j})} \Big] \times \bigg( 1+\bigO\bigg( \frac{(\ln n)^{2}}{n} \bigg) \bigg), \qquad \mbox{as } n \to + \infty.
\end{align}
In fact, by straightforward modifications of the proof of \cite{Charlier 2d jumps}, the error term in \eqref{decoupling} can be shown to be exponentially small, see also \cite[Remark 1.5]{Charlier 2d jumps}. In other words, when $r_{1},\ldots,r_{m}$ are fixed, the $m$-point generating function can be expressed asymptotically as the product of $m$ one-point generating functions (up to an exponentially small error term). This, in turn, implies that all cumulants involving two random variables or more among $\mathrm{N}(r_{1}),\ldots,\mathrm{N}(r_{m})$ are exponentially small --- for the covariance, this fact (namely that $\mbox{Cov}(\mathrm{N}(r_{1}),\mathrm{N}(r_{2}))=\bigO(e^{-cn})$ as $n \to + \infty$) was already noticed in \cite[Theorem 1.7]{Rider} for $(b,\alpha)=(1,0)$, and for the joint fluctuations of $\mathrm{N}(r_{1}),\ldots,\mathrm{N}(r_{m})$, this fact (namely that they become independent Gaussian random variables in the large $n$ limit) was already proved in \cite[Proposition 1.3]{FenzlLambert} for Ginibre-type ensembles. (We mention en passant that the decoupling \eqref{decoupling} is a particular feature of two-dimensional point processes such as \eqref{def of point process}. Indeed, the analogues of \eqref{decoupling} for the one-dimensional sine, Airy, Bessel and Pearcey point processes involve explicit constant pre-factors of order 1, and the associated covariances are not small but of order $1$, see e.g. \cite{ChCl3, ChMor}.) In the regime considered here, namely \eqref{def of rell}, the $m$-point generating function does not decouple as in \eqref{decoupling}, and all joint cumulants of $\mathrm{N}(r_{1}),\ldots,\mathrm{N}(r_{m})$ have non-trivial asymptotics as $n \to + \infty$.


\medskip Let us introduce the following functions:
\begin{align}
& \mathcal{H}_{1}(t; \vec{u},\vec{\mathfrak{s}}):= 1 + \sum_{\ell=1}^{m} \frac{e^{u_{\ell}}-1}{2}\exp\bigg[ \sum_{j=\ell+1}^{m}u_{j} \bigg] \mathrm{erfc}(t-\mathfrak{s}_{\ell}), \label{function H1} \\
& \mathcal{H}_{2}(t; \vec{u},\vec{\mathfrak{s}}):= 1 + \sum_{\ell=1}^{m} \frac{e^{-u_{\ell}}-1}{2}\exp\bigg[ -\sum_{j=1}^{\ell-1}u_{j} \bigg] \mathrm{erfc}(t+\mathfrak{s}_{\ell}), \label{function H2} \\
& \mathcal{G}_{1}(t; \vec{u},\vec{\mathfrak{s}}) := \frac{1}{\mathcal{H}_{1}(t; \vec{u},\vec{\mathfrak{s}})} \sum_{\ell=1}^{m}(e^{u_{\ell}}-1)\exp\bigg[ \sum_{j=\ell+1}^{m}u_{j} \bigg] \frac{e^{-(t-\mathfrak{s}_{\ell})^{2}}}{\sqrt{2\pi}} \frac{1-2\mathfrak{s}_{\ell}^{2}+t\mathfrak{s}_{\ell}-5t^{2}}{3}, \label{function G1} \\
& \mathcal{G}_{2}(t; \vec{u},\vec{\mathfrak{s}}) := \frac{1}{\mathcal{H}_{1}(t; \vec{u},\vec{\mathfrak{s}})} \sum_{\ell=1}^{m} (e^{u_{\ell}}-1)\exp\bigg[ \sum_{j=\ell+1}^{m}u_{j} \bigg] \frac{e^{-(t-\mathfrak{s}_{\ell})^{2}}}{18\sqrt{2\pi}} \bigg( 50t^{5}-70t^{4}\mathfrak{s}_{\ell}-t^{3}\big( 73-62\mathfrak{s}_{\ell}^{2}\big)  \nonumber \\
& +t^{2}\mathfrak{s}_{\ell}\big(33-50\mathfrak{s}_{\ell}^{2}\big) - t \big( 3+18\mathfrak{s}_{\ell}^{2}-16\mathfrak{s}_{\ell}^{4} \big) - \mathfrak{s}_{\ell} \big( 3-22\mathfrak{s}_{\ell}^{2}+8\mathfrak{s}_{\ell}^{4} \big) \bigg), \label{function G2}
\end{align}
where $t \in \mathbb{R}$, $\vec{u}=(u_{1},\ldots,u_{m}) \in \mathbb{C}^{m}$, $\vec{\mathfrak{s}}=(\mathfrak{s}_{1},\ldots,\mathfrak{s}_{m})\in \mathbb{R}^{m}$, and $\mathrm{erfc}$ is the complementary error function
\begin{align}\label{def of erfc}
\mathrm{erfc} (t) = \frac{2}{\sqrt{\pi}}\int_{t}^{\infty} e^{-x^{2}}dx.
\end{align}
The functions $\mathcal{H}_{j}$, $j=1,2$, appear in the denominators of \eqref{function G1}--\eqref{function G2} and inside logarithms in the statements of our main theorems. The next lemma ensures that $\{\mathcal{G}_{j}, \ln \mathcal{H}_{j}\}_{j=1}^{2}$ are well-defined and real-valued for $t \in \mathbb{R}$, $\vec{u}=(u_{1},\ldots,u_{m}) \in \mathbb{R}^{m}$, $\mathfrak{s}_{1}<\ldots<\mathfrak{s}_{m}$. Here and below, $\ln$ always denotes the principal branch of the logarithm. 

\newpage
\begin{lemma}\label{Hjpositivelemma}
$\mathcal{H}_{j}(t; \vec{u},\vec{\mathfrak{s}})>0$ for $j=1,2$ and for all $t \in \mathbb{R}$, $\vec{u}=(u_{1},\ldots,u_{m}) \in \mathbb{R}^{m}$, $\mathfrak{s}_{1}<\ldots<\mathfrak{s}_{m}$. 
\end{lemma}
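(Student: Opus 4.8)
The plan is to rewrite each $\mathcal{H}_j$, by a telescoping substitution followed by summation by parts, as a finite sum of terms each of which is visibly strictly positive, the only subtlety being that one boundary term acquires an unfavorable sign and must be absorbed into the constant $1$ via the reflection identity $\mathrm{erfc}(x)+\mathrm{erfc}(-x)=2$.

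\emph{The function $\mathcal{H}_1$.} I would set $v_\ell := \exp\big[\sum_{j=\ell+1}^{m}u_j\big]$ for $\ell=0,1,\ldots,m$, so that $v_\ell>0$, $v_m=1$, and $(e^{u_\ell}-1)\exp\big[\sum_{j=\ell+1}^{m}u_j\big] = v_{\ell-1}-v_\ell$. This turns the defining formula into
\begin{align*}
\mathcal{H}_1(t;\vec{u},\vec{\mathfrak{s}}) = 1 + \frac{1}{2}\sum_{\ell=1}^{m}(v_{\ell-1}-v_\ell)\,\mathrm{erfc}(t-\mathfrak{s}_\ell),
\end{align*}
and summation by parts then rewrites the right-hand side as
\begin{align*}
\mathcal{H}_1(t;\vec{u},\vec{\mathfrak{s}}) = \Big(1-\tfrac12\mathrm{erfc}(t-\mathfrak{s}_m)\Big) + \frac{v_0}{2}\,\mathrm{erfc}(t-\mathfrak{s}_1) + \frac12\sum_{\ell=1}^{m-1}v_\ell\big(\mathrm{erfc}(t-\mathfrak{s}_{\ell+1})-\mathrm{erfc}(t-\mathfrak{s}_\ell)\big).
\end{align*}
I would then conclude: the first bracket equals $\tfrac12\mathrm{erfc}(\mathfrak{s}_m-t)>0$ by the reflection identity; the second term is positive since $v_0>0$ and $\mathrm{erfc}>0$ on $\mathbb{R}$; and each summand in the last sum is positive because $v_\ell>0$ and, as $\mathfrak{s}_{\ell+1}>\mathfrak{s}_\ell$ while $\mathrm{erfc}$ is strictly decreasing, $\mathrm{erfc}(t-\mathfrak{s}_{\ell+1})>\mathrm{erfc}(t-\mathfrak{s}_\ell)$. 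Hence $\mathcal{H}_1>0$.

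\emph{The function $\mathcal{H}_2$.} The argument is identical with the substitution $w_\ell := \exp\big[-\sum_{j=1}^{\ell-1}u_j\big]$ for $\ell=1,\ldots,m+1$ (so $w_\ell>0$, $w_1=1$, and $(e^{-u_\ell}-1)\exp\big[-\sum_{j=1}^{\ell-1}u_j\big]=w_{\ell+1}-w_\ell$). Here one uses that $t+\mathfrak{s}_1<\cdots<t+\mathfrak{s}_m$, so $\ell\mapsto\mathrm{erfc}(t+\mathfrak{s}_\ell)$ is decreasing; after summation by parts the potentially negative boundary contribution is $1-\tfrac12\mathrm{erfc}(t+\mathfrak{s}_1)=\tfrac12\mathrm{erfc}(-t-\mathfrak{s}_1)>0$, and the remaining terms are manifestly positive, giving $\mathcal{H}_2>0$.

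I do not expect any substantial obstacle: the proof reduces to a one-line telescoping identity together with the elementary facts that $\mathrm{erfc}$ is positive, strictly decreasing, and satisfies $\mathrm{erfc}(x)+\mathrm{erfc}(-x)=2$. The only point requiring care is the bookkeeping of the boundary terms in the summation by parts and recognizing that the single term carrying an unfavorable sign is exactly cancelled by part of the constant $1$ through the reflection identity; the strict ordering $\mathfrak{s}_1<\cdots<\mathfrak{s}_m$ and the reality of $\vec{u}$ are used precisely to guarantee, respectively, positivity of the erfc-differences and positivity of the weights $v_\ell,w_\ell$.
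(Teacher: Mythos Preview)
Your proof is correct. Both arguments ultimately rest on the same positive decomposition of $\mathcal{H}_1$ into a term $\tfrac12[2-\mathrm{erfc}(t-\mathfrak{s}_m)]$ plus a sum of weighted erfc-increments, but you and the paper arrive there by different routes. The paper first reduces $\mathcal{H}_2$ to $\mathcal{H}_1$ via the identity $\mathcal{H}_1(t;\vec{u},\vec{\mathfrak{s}})=e^{u_1+\cdots+u_m}\mathcal{H}_2(-t;\vec{u},\vec{\mathfrak{s}})$, then observes that $\partial_{u_1}\mathcal{H}_1>0$ and therefore only checks positivity of the limit $\mathcal{H}_1|_{u_1=-\infty}$, which is precisely your decomposition with the $\tfrac{v_0}{2}\,\mathrm{erfc}(t-\mathfrak{s}_1)$ term removed. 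Your Abel-summation argument is more direct: it exhibits $\mathcal{H}_1$ itself (not merely a lower bound for it) as a sum of strictly positive terms, avoiding the monotonicity/limit step altogether, at the modest cost of repeating the computation for $\mathcal{H}_2$ rather than invoking the symmetry identity.
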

\begin{proof}
In view of the identity $\mathcal{H}_1(t; \vec{u},\vec{\mathfrak{s}}) = e^{u_{1} + \dots + u_{m}} \mathcal{H}_2(-t; \vec{u},\vec{\mathfrak{s}})$, it is enough to consider $\mathcal{H}_1$. Since
\begin{align*}
\partial_{u_{1}}\mathcal{H}_{1}(t; \vec{u},\vec{\mathfrak{s}}) = \frac{e^{u_{1}+\ldots+u_{m}}}{2}\mathrm{erfc}(t-\mathfrak{s}_{1})>0,
\end{align*}
we only have to check that $\mathcal{H}_{1}|_{u_{1}=-\infty}\geq 0$. It is easy to verify that
\begin{align*}
\mathcal{H}_{1}(t; \vec{u},\vec{\mathfrak{s}})|_{u_{1}=-\infty} = \frac{1}{2}[2-\mathrm{erfc}(t-\mathfrak{s}_{m})] + \sum_{\ell=2}^{m} \frac{e^{u_{\ell}+\ldots+u_{m}}}{2}[\mathrm{erfc}(t-\mathfrak{s}_{\ell})-\mathrm{erfc}(t-\mathfrak{s}_{\ell-1})].
\end{align*}
Since $\mathbb{R}\ni x \mapsto \mathrm{erfc}(x)$ is decreasing from $2$ to $0$ and $\mathfrak{s}_{1}<\ldots<\mathfrak{s}_{m}$, each of the $m$ terms in the above right-hand side is $>0$, which implies $\mathcal{H}_{1}|_{u_{1}=-\infty}>0$.
\end{proof}

The following two theorems are our main results. 
\begin{theorem}\label{thm:main thm}(Merging radii in the bulk)

\noindent Let $m \in \mathbb{N}_{>0}$, $r \in (0,b^{-\frac{1}{2b}})$, $\mathfrak{s}_{1},\ldots,\mathfrak{s}_{m} \in \mathbb{R}$, $\alpha > -1$ and $b>0$ be fixed parameters such that $\mathfrak{s}_{1}<\ldots<\mathfrak{s}_{m}$, and for $n \in \mathbb{N}_{>0}$, define
\begin{align*}
r_{\ell} = r \bigg( 1+\frac{\sqrt{2}\, \mathfrak{s}_{\ell}}{r^{b}\sqrt{n}} \bigg)^{\frac{1}{2b}}, \qquad \ell=1,\ldots,m.
\end{align*}
For any fixed $x_{1},\ldots,x_{m} \in \mathbb{R}$, there exists $\delta > 0$ such that 
\begin{align}\label{asymp in main thm}
\mathbb{E}\bigg[ \prod_{j=1}^{m} e^{u_{j}\mathrm{N}(r_{j})} \bigg] = \exp \bigg( C_{1} n + C_{2} \sqrt{n} + C_{3} +  \frac{C_{4}}{\sqrt{n}} + \bigO\bigg(\frac{(\ln n)^{2}}{n}\bigg)\bigg), \qquad \mbox{as } n \to + \infty
\end{align}
uniformly for $u_{1} \in \{z \in \mathbb{C}: |z-x_{1}|\leq \delta\},\ldots,u_{m} \in \{z \in \mathbb{C}: |z-x_{m}|\leq \delta\}$, where
\begin{align*}
& C_{1} = b r^{2b} \sum_{j=1}^{m}u_{j}, \\
& C_{2} = \sqrt{2}\, b r^{b} \int_{0}^{+\infty} \Big( \ln \mathcal{H}_{1}(t; \vec{u},\vec{\mathfrak{s}}) + \ln \mathcal{H}_{2}(t; \vec{u},\vec{\mathfrak{s}}) \Big) dt, \\
& C_{3} = - \bigg( \frac{1}{2}+\alpha \bigg)\sum_{j=1}^{m}u_{j} + 4b \int_{0}^{+\infty} t\Big( \ln \mathcal{H}_{1}(t; \vec{u},\vec{\mathfrak{s}}) - \ln \mathcal{H}_{2}(t; \vec{u},\vec{\mathfrak{s}}) \Big) dt + \sqrt{2}\, b \int_{-\infty}^{+\infty} \mathcal{G}_{1}(t;\vec{u},\vec{\mathfrak{s}}) dt, \\
& C_{4} = \frac{6\sqrt{2}\, b}{r^{b}} \int_{0}^{+\infty} t^{2}\Big( \ln \mathcal{H}_{1}(t; \vec{u},\vec{\mathfrak{s}}) + \ln \mathcal{H}_{2}(t; \vec{u},\vec{\mathfrak{s}}) \Big) dt \\
& \hspace{0.8cm} + \frac{b}{r^{b}} \int_{-\infty}^{+\infty} \bigg( 4t \, \mathcal{G}_{1}(t; \vec{u},\vec{\mathfrak{s}}) - \frac{\mathcal{G}_{1}(t; \vec{u},\vec{\mathfrak{s}})^{2}}{\sqrt{2}} + \mathcal{G}_{2}(t; \vec{u},\vec{\mathfrak{s}}) \bigg)dt.
\end{align*}
In particular, since $\mathbb{E}\big[ \prod_{j=1}^{m} e^{u_{j}\mathrm{N}(r_{j})} \big]$ depends analytically on $u_{1},\ldots,u_{m} \in \mathbb{C}$ and is positive for $u_{1},\ldots,u_{m} \in \mathbb{R}$, the asymptotic formula \eqref{asymp in main thm} together with Cauchy's formula shows that
\begin{align}\label{der of main result}
\partial_{u_{1}}^{k_{1}}\ldots \partial_{u_{m}}^{k_{m}} \bigg\{ \ln \mathbb{E}\bigg[ \prod_{j=1}^{m} e^{u_{j}\mathrm{N}(r_{j})} \bigg] - \bigg( C_{1} n + C_{2} \sqrt{n} + C_{3} +  \frac{C_{4}}{\sqrt{n}} \bigg) \bigg\} = \bigO\bigg(\frac{(\ln n)^{2}}{n}\bigg), \quad \mbox{as } n \to + \infty,
\end{align}
for any $k_{1},\ldots,k_{m}\in \mathbb{N}$, and $u_{1},\ldots,u_{m}\in \mathbb{R}$. 
\end{theorem}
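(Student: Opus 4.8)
The plan is to reduce the moment generating function to a product over a single index via the rotational invariance of the ensemble, and then to perform an Euler--Maclaurin analysis of the resulting sum of logarithms, using uniform asymptotics of the incomplete gamma function; this parallels the fixed-radii analysis of \cite{Charlier 2d jumps}, the new feature being that the $m$ disk boundaries now produce a \emph{single} transition window of width $\bigO(\sqrt n)$ instead of $m$ separate ones. First I would use that, by rotational invariance of \eqref{def of point process} (expand the squared Vandermonde, integrate out the arguments, and recognise the resulting symmetric density of the moduli as that of an unordered family of independent radii---Kostlan's observation), $\mathrm{N}(r_{\ell})$ is equal in distribution, jointly over $\ell$, to $\#\{k:\gamma_{k}<nr_{\ell}^{2b}\}$, where $\gamma_{1},\ldots,\gamma_{n}$ are independent with $\gamma_{k}\sim\mathrm{Gamma}(\tfrac{k+\alpha}{b},1)$. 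A short computation (decompose according to which interval $[nr_{\ell-1}^{2b},nr_{\ell}^{2b})$ contains $\gamma_{k}$, with $r_{0}=0$, $r_{m+1}=+\infty$, and sum by parts) then gives the exact identity
\begin{align*}
\mathbb{E}\bigg[\prod_{j=1}^{m}e^{u_{j}\mathrm{N}(r_{j})}\bigg]=\prod_{k=1}^{n}\bigg(1+\sum_{\ell=1}^{m}(e^{u_{\ell}}-1)\exp\Big[\sum_{j=\ell+1}^{m}u_{j}\Big]P\big(\tfrac{k+\alpha}{b},nr_{\ell}^{2b}\big)\bigg),
\end{align*}
where $P(a,x)=\frac{1}{\Gamma(a)}\int_{0}^{x}t^{a-1}e^{-t}\,dt$ is the regularised lower incomplete gamma function. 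Taking logarithms reduces the problem to the asymptotics of $\sum_{k=1}^{n}f_{n}(k)$ with $f_{n}(k):=\ln\big(1+\sum_{\ell=1}^{m}(e^{u_{\ell}}-1)\exp[\sum_{j=\ell+1}^{m}u_{j}]\,P(\tfrac{k+\alpha}{b},nr_{\ell}^{2b})\big)$; note the resemblance of the argument of this logarithm to $\mathcal{H}_{1}$ in \eqref{function H1}, with $P$ in the role of $\tfrac{1}{2}\mathrm{erfc}$.

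Since $r\in(0,b^{-\frac{1}{2b}})$ we have $0<br^{2b}<1$, and $nr_{\ell}^{2b}=nr^{2b}+\sqrt{2n}\,r^{b}\mathfrak{s}_{\ell}$, so the shape parameter $\tfrac{k+\alpha}{b}$ and the argument $nr_{\ell}^{2b}$ are comparable precisely in the window $k=bnr^{2b}+\bigO(\sqrt n\ln n)$, which for large $n$ lies well inside $\{1,\ldots,n\}$ (this is where the bulk hypothesis $r<b^{-\frac{1}{2b}}$ enters; at $r=b^{-\frac{1}{2b}}$ the window would reach $k=n$). Setting $t_{k}:=\frac{k-bnr^{2b}}{\sqrt{2n}\,br^{b}}$ and inserting the classical uniform asymptotic expansion of the incomplete gamma function in the regime where the shape parameter is close to the argument, one obtains, for bounded $t_{k}$,
\begin{align*}
P\big(\tfrac{k+\alpha}{b},nr_{\ell}^{2b}\big)=\tfrac{1}{2}\mathrm{erfc}(t_{k}-\mathfrak{s}_{\ell})+\frac{a^{(1)}_{\ell}(t_{k})}{\sqrt n}+\frac{a^{(2)}_{\ell}(t_{k})}{n}+\bigO\Big(\frac{1}{n^{3/2}}\Big),
\end{align*}
with $a^{(1)}_{\ell},a^{(2)}_{\ell}$ explicit ($e^{-(t-\mathfrak{s}_{\ell})^{2}}$ times polynomials in $t,\mathfrak{s}_{\ell}$, into which the $\alpha$-shift and the refined expansion of the $\mathrm{erfc}$ argument feed). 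Substituting this into $f_{n}(k)$ and expanding the logarithm returns $f_{n}(k)=\ln\mathcal{H}_{1}(t_{k};\vec u,\vec{\mathfrak{s}})+n^{-1/2}\mathcal{G}_{1}(t_{k};\vec u,\vec{\mathfrak{s}})+n^{-1}\big(\text{explicit combination of }\mathcal{G}_{1},\mathcal{G}_{2}\big)+\bigO(n^{-3/2})$, and the precise shapes of $\mathcal{H}_{1},\mathcal{G}_{1},\mathcal{G}_{2}$ in \eqref{function H1}--\eqref{function G2} are exactly what this produces. For $t_{k}\to-\infty$ every $P$ is exponentially close to $1$, so $f_{n}(k)=\sum_{j}u_{j}+(\text{exp.\ small})$, and the reflection identity $\mathcal{H}_{1}(t;\vec u,\vec{\mathfrak{s}})=e^{u_{1}+\dots+u_{m}}\mathcal{H}_{2}(-t;\vec u,\vec{\mathfrak{s}})$ (used already in the proof of Lemma~\ref{Hjpositivelemma}) shows that the corrected contribution from that side is governed by $\ln\mathcal{H}_{2}$; for $t_{k}\to+\infty$, $f_{n}(k)$ is itself exponentially small.

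Next I would evaluate $\sum_{k=1}^{n}f_{n}(k)$ by splitting off the leading count $\#\{1\le k\le n:t_{k}<0\}\cdot\sum_{j}u_{j}=bnr^{2b}\sum_{j}u_{j}+\bigO(\sqrt n)$ and handling the remaining sum over the window by Euler--Maclaurin in the variable $t_{k}$ (step $(\sqrt{2n}\,br^{b})^{-1}$). The leading count yields $C_{1}n$; the leading $\mathrm{erfc}$-type integral yields $\sqrt{2n}\,br^{b}\int_{0}^{\infty}(\ln\mathcal{H}_{1}+\ln\mathcal{H}_{2})\,dt=C_{2}\sqrt n$; at order $\bigO(1)$ one collects the term $-(\tfrac{1}{2}+\alpha)\sum_{j}u_{j}$ (from the shift $\tfrac{k+\alpha}{b}$ versus $\tfrac{k}{b}$ and the non-integral location of the jump), the Euler--Maclaurin/endpoint term $4b\int_{0}^{\infty}t(\ln\mathcal{H}_{1}-\ln\mathcal{H}_{2})\,dt$ (from the mismatch between the sum and the integral and the refined $\mathrm{erfc}$-argument expansion), and $\sqrt{2}\,b\int_{\mathbb{R}}\mathcal{G}_{1}\,dt$ (from the leading correction to $P$), which assemble to $C_{3}$; the next order, carrying the $t^{2}$-moment of the leading term together with the integral of $4t\mathcal{G}_{1}-\mathcal{G}_{1}^{2}/\sqrt{2}+\mathcal{G}_{2}$, assembles to $C_{4}/\sqrt n$. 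Throughout, the tails $|t_{k}|\gtrsim\ln n$ are exponentially small (forcing the $\sqrt n\ln n$ window width), and the Euler--Maclaurin remainder over that window is $\bigO((\ln n)^{2}/n)$, which is the stated error. Uniformity for $u_{1},\ldots,u_{m}$ in a small complex polydisc around $(x_{1},\ldots,x_{m})$ survives each step because $\mathcal{H}_{j}$ is bounded away from $0$ there (by Lemma~\ref{Hjpositivelemma} and continuity). This proves \eqref{asymp in main thm}.

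I expect the main obstacle to be the uniform-in-$k$ three-term expansion of $P(\tfrac{k+\alpha}{b},nr_{\ell}^{2b})$ with $\bigO(n^{-3/2})$ remainder, valid across the \emph{entire} transition window and matching smoothly onto the exponentially small tails, together with tracking the accumulated summation error sharply enough to reach the $n^{-1/2}$ term; the bookkeeping of the $\alpha$-shift and of the non-integrality of $bnr^{2b}$ at the $n^{-1/2}$ level is the other delicate point. Finally, \eqref{der of main result} is a soft consequence: $\mathbb{E}[\prod_{j}e^{u_{j}\mathrm{N}(r_{j})}]=\sum_{n_{1},\ldots,n_{m}}\mathbb{P}[\mathrm{N}(r_{1})=n_{1},\ldots,\mathrm{N}(r_{m})=n_{m}]\prod_{j}e^{u_{j}n_{j}}$ is a finite sum (each $\mathrm{N}(r_{j})\in\{0,\ldots,n\}$), hence entire in $\vec u$, and positive on $\mathbb{R}^{m}$; by \eqref{asymp in main thm} it is nonvanishing on $\prod_{j}\{|u_{j}-x_{j}|\le\delta\}$ for $n$ large, so it admits a holomorphic logarithm there which restricts to the ordinary logarithm at real points, while the $C_{j}$ are holomorphic on that polydisc since $\mathcal{H}_{j}$ stays bounded away from $0$ there by Lemma~\ref{Hjpositivelemma} and continuity and the $t$-integrals converge locally uniformly. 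Thus $g:=\ln\mathbb{E}[\prod_{j}e^{u_{j}\mathrm{N}(r_{j})}]-(C_{1}n+C_{2}\sqrt n+C_{3}+C_{4}/\sqrt n)$ is holomorphic on the polydisc and, by \eqref{asymp in main thm}, equals $\bigO((\ln n)^{2}/n)$ there; Cauchy's estimates on a slightly smaller polydisc then bound each mixed derivative $\partial_{u_{1}}^{k_{1}}\cdots\partial_{u_{m}}^{k_{m}}g$ at real $\vec u$ by $\bigO((\ln n)^{2}/n)$, which is \eqref{der of main result}.
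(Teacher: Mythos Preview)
Your proposal is correct and takes essentially the same approach as the paper: reduce to the exact product formula via rotational invariance, split the sum according to the single transition window around $k\approx bnr^{2b}$, expand using Temme's uniform asymptotics for the incomplete gamma function, and convert the sum to integrals by a Riemann-sum/Euler--Maclaurin argument over a window of width $\sim\sqrt{n\ln n}$, producing the $\bigO((\ln n)^{2}/n)$ error. The paper differs only in bookkeeping---it uses the nonlinear variable $M_{j}=\sqrt{n}(\lambda_{j}-1)$ rather than your linear $t_{k}$, passes through intermediate functions $f_{1},f_{2},f_{3}$ before the final change of variables to $\mathcal{H}_{j},\mathcal{G}_{j}$, and explicitly tracks the fractional-part oscillations $\theta_{\pm}^{(n,M)}$ (your ``non-integrality of $bnr^{2b}$'') before they cancel in the final assembly.
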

\begin{theorem}\label{thm:main thm edge}(Merging radii at the edge)

\noindent Let $m \in \mathbb{N}_{>0}$, $\mathfrak{s}_{1},\ldots,\mathfrak{s}_{m} \in \mathbb{R}$, $\alpha > -1$ and $b>0$ be fixed parameters such that $\mathfrak{s}_{1}<\ldots<\mathfrak{s}_{m}$, and for $n \in \mathbb{N}_{>0}$, define
\begin{align*}
r_{\ell} = b^{-\frac{1}{2b}} \bigg( 1+\sqrt{2b}\frac{\mathfrak{s}_{\ell}}{\sqrt{n}} \bigg)^{\frac{1}{2b}}, \qquad \ell=1,\ldots,m.
\end{align*}
For any fixed $x_{1},\ldots,x_{m} \in \mathbb{R}$, there exists $\delta > 0$ such that 
\begin{align}\label{asymp in main thm edge}
\mathbb{E}\bigg[ \prod_{j=1}^{m} e^{u_{j}\mathrm{N}(r_{j})} \bigg] = \exp \bigg( C_{1} n + C_{2} \sqrt{n} + C_{3} +  \frac{C_{4}}{\sqrt{n}} + \bigO\bigg(\frac{(\ln n)^{2}}{n}\bigg)\bigg), \qquad \mbox{as } n \to + \infty
\end{align}
uniformly for $u_{1} \in \{z \in \mathbb{C}: |z-x_{1}|\leq \delta\},\ldots,u_{m} \in \{z \in \mathbb{C}: |z-x_{m}|\leq \delta\}$, where
\begin{align*}
& C_{1} = \sum_{j=1}^{m}u_{j} , \\
& C_{2} = \sqrt{2b} \int_{0}^{+\infty}\ln \mathcal{H}_{2}(t;\vec{u},\vec{\mathfrak{s}})dt , \\
& C_{3} = \bigg( \frac{1}{2}+\alpha \bigg)\ln \mathcal{H}_{2}(0;\vec{u},\vec{\mathfrak{s}}) - 4b \int_{0}^{+\infty} t \ln \mathcal{H}_{2}(t;\vec{u},\vec{\mathfrak{s}})dt +\sqrt{2} \, b \int_{-\infty}^{0}\mathcal{G}_{1}(t;\vec{u},\vec{\mathfrak{s}})dt, \\
& C_{4} = 6\sqrt{2} \; b^{\frac{3}{2}}\int_{0}^{+\infty} t^{2} \ln \mathcal{H}_{2}(t; \vec{u},\vec{\mathfrak{s}}) dt + b^{3/2} \int_{-\infty}^{0} \bigg( 4t \, \mathcal{G}_{1}(t; \vec{u},\vec{\mathfrak{s}}) - \frac{\mathcal{G}_{1}(t; \vec{u},\vec{\mathfrak{s}})^{2}}{\sqrt{2}} + \mathcal{G}_{2}(t; \vec{u},\vec{\mathfrak{s}}) \bigg)dt \\
& \hspace{0.8cm} - \frac{1+6\alpha+6\alpha^{2}}{12\sqrt{2b}}\frac{\mathcal{H}_{2}'(0;\vec{u},\vec{\mathfrak{s}})}{\mathcal{H}_{2}(0;\vec{u},\vec{\mathfrak{s}})} + \bigg(\frac{1}{2}+\alpha\bigg)\sqrt{b} \, \mathcal{G}_{1}(0; \vec{u},\vec{\mathfrak{s}}).
\end{align*}
In particular, since $\mathbb{E}\big[ \prod_{j=1}^{m} e^{u_{j}\mathrm{N}(r_{j})} \big]$ depends analytically on $u_{1},\ldots,u_{m} \in \mathbb{C}$ and is positive for $u_{1},\ldots,u_{m} \in \mathbb{R}$, the asymptotic formula \eqref{asymp in main thm} together with Cauchy's formula shows that
\begin{align}\label{der of main result edge}
\partial_{u_{1}}^{k_{1}}\ldots \partial_{u_{m}}^{k_{m}} \bigg\{ \ln \mathbb{E}\bigg[ \prod_{j=1}^{m} e^{u_{j}\mathrm{N}(r_{j})} \bigg] - \bigg( C_{1} n + C_{2} \sqrt{n} + C_{3} +  \frac{C_{4}}{\sqrt{n}} \bigg) \bigg\} = \bigO\bigg(\frac{(\ln n)^{2}}{n}\bigg), \quad \mbox{as } n \to + \infty,
\end{align}
for any $k_{1},\ldots,k_{m}\in \mathbb{N}$, and $u_{1},\ldots,u_{m}\in \mathbb{R}$. 
\end{theorem}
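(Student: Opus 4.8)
We outline a proof of Theorem~\ref{thm:main thm edge}; it would be carried out in parallel with the proof of Theorem~\ref{thm:main thm}, the edge regime being the degenerate case in which the cross-over of the radial modes is pushed to the very top of the spectrum.

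\textbf{An exact finite-product formula.} Since the density \eqref{def of point process} is rotationally invariant, the absolute values of the $n$ points decouple; grouping them by the annuli $\{r_{p}\le|z|<r_{p+1}\}$ (with $r_{0}=0$, $r_{m+1}=+\infty$) and summing by parts yields the exact identity
\begin{align*}
\mathbb{E}\bigg[\prod_{j=1}^{m}e^{u_{j}\mathrm{N}(r_{j})}\bigg]=\prod_{k=1}^{n}F_{k},\qquad F_{k}:=1+\sum_{\ell=1}^{m}(e^{u_{\ell}}-1)\exp\bigg[\sum_{j=\ell+1}^{m}u_{j}\bigg]\frac{\gamma\big(\tfrac{k+\alpha}{b},nr_{\ell}^{2b}\big)}{\Gamma\big(\tfrac{k+\alpha}{b}\big)},
\end{align*}
where $\gamma(a,x)=\int_{0}^{x}t^{a-1}e^{-t}dt$; this is the analogue in the merging regime of the product formulas already used in \cite{Charlier 2d jumps, L et al 2019}. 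The task reduces to the large $n$ asymptotics of $\ln\mathbb{E}[\,\cdot\,]=\sum_{k=1}^{n}\ln F_{k}$.

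\textbf{Incomplete-gamma asymptotics and the cross-over window.} Writing $a_{k}:=\tfrac{k+\alpha}{b}$, the ratio $\gamma(a_{k},nr_{\ell}^{2b})/\Gamma(a_{k})$ is $1$ for $a_{k}\ll nr_{\ell}^{2b}$ and $0$ for $a_{k}\gg nr_{\ell}^{2b}$, up to exponentially small errors, and crosses over in a window of width $\asymp\sqrt{a_{k}}\asymp\sqrt n$, the transition being governed by Temme's uniform asymptotic expansion, schematically $\gamma(a,x)/\Gamma(a)=\tfrac12\mathrm{erfc}\big(-\tfrac{x-a}{\sqrt{2a}}\big)+e^{-(x-a)^{2}/(2a)}\sum_{j\ge1}a^{-j/2}p_{j}\big(\tfrac{x-a}{\sqrt a}\big)$ with explicit polynomials $p_{j}$. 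At the edge ($r=b^{-1/(2b)}$) one has $a_{k}\approx nr^{2b}$ precisely when $k\approx n-\alpha$, so the window abuts the hard cut-off $k=n$. Substituting $k=n-\alpha+\sqrt{2bn}\,t$ (hence $dk=\sqrt{2bn}\,dt$) and using $nr_{\ell}^{2b}=\tfrac nb+\sqrt{\tfrac{2n}{b}}\,\mathfrak{s}_{\ell}$ turns $\tfrac{nr_{\ell}^{2b}-a_{k}}{\sqrt{2a_{k}}}$ into $\mathfrak{s}_{\ell}-t+O(n^{-1/2})$, so that $F_{k}=\mathcal{H}_{1}(t;\vec{u},\vec{\mathfrak{s}})+O(n^{-1/2})$ on the window; expanding the error-function argument and the Temme prefactors $a_{k}^{-j/2}=b^{j/2}n^{-j/2}(1+O(n^{-1/2}))$ and then taking logarithms via $\ln(1+x)=x-\tfrac{x^{2}}{2}+\cdots$ gives
\begin{align*}
\ln F_{k}=\ln\mathcal{H}_{1}(t;\vec{u},\vec{\mathfrak{s}})+n^{-1/2}\sqrt b\,\mathcal{G}_{1}(t;\vec{u},\vec{\mathfrak{s}})+n^{-1}\,\Psi(t;\vec{u},\vec{\mathfrak{s}})+\cdots
\end{align*}
on the window, where $\mathcal{G}_{1}$ is the combination of the first Temme correction $p_{1}$ and the sub-leading error-function argument that arises, and $\Psi$ is an explicit combination of $t^{2}\ln\mathcal{H}_{1}$, $t\,\mathcal{G}_{1}$, $\mathcal{G}_{2}$ and $\mathcal{G}_{1}^{2}$ (the last coming from the quadratic term of $\ln(1+x)$). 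Because $k$ runs only up to $n$, i.e. $t\le\tfrac{\alpha}{\sqrt{2bn}}$, effectively only the part $t\le0$ of the window is present; via the reflection identity $\mathcal{H}_{1}(t;\vec{u},\vec{\mathfrak{s}})=e^{u_{1}+\dots+u_{m}}\mathcal{H}_{2}(-t;\vec{u},\vec{\mathfrak{s}})$ of Lemma~\ref{Hjpositivelemma} this replaces $\mathcal{H}_{1}$ by $\mathcal{H}_{2}$ and converts the full-line integrals of Theorem~\ref{thm:main thm} into the half-line integrals $\int_{0}^{+\infty}(\cdots)dt$, $\int_{-\infty}^{0}\mathcal{G}_{j}(\cdots)dt$ appearing in $C_{2},C_{3},C_{4}$.

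\textbf{Summation, uniformity, and the main obstacle.} Split $\{1,\dots,n\}$ into (i) indices well below the crossover, where $F_{k}=e^{u_{1}+\dots+u_{m}}$ up to an exponentially small error by telescoping, contributing $C_{1}n=(u_{1}+\dots+u_{m})n$, and (ii) the crossover window of width $\asymp\sqrt n$, on which one applies the Euler--Maclaurin formula to $\sum_{k}\big(\ln F_{k}-(u_{1}+\dots+u_{m})\big)$ (the summand decays at the lower end, matching region (i)), keeping enough terms; integrating by parts turns the error-function-argument corrections into the $\int t^{k}\ln\mathcal{H}_{2}(t)\,dt$ integrals of $C_{3},C_{4}$. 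Combining the Jacobian $\sqrt{2bn}=\sqrt{2b}\,\sqrt n$ with the $b^{j/2}n^{-j/2}$ prefactors yields $C_{2}\sqrt n$ from the leading integral, $\sqrt2\,b\int_{-\infty}^{0}\mathcal{G}_{1}$ from $j=1$, and $\sqrt2\,b^{3/2}n^{-1/2}\int_{-\infty}^{0}(\cdots)$ from $j=2$; the Euler--Maclaurin endpoint terms $\tfrac12 g(n)$, $\tfrac1{12}g'(n),\dots$ at $k=n$ (with $g(k):=\ln F_{k}-(u_{1}+\dots+u_{m})$), which now survive because the window touches $k=n$, together with the $\alpha$-shift of the upper integration limit to $t=\tfrac{\alpha}{\sqrt{2bn}}$, supply precisely the extra boundary terms $\big(\tfrac12+\alpha\big)\ln\mathcal{H}_{2}(0;\vec{u},\vec{\mathfrak{s}})$, $-\tfrac{1+6\alpha+6\alpha^{2}}{12\sqrt{2b}}\tfrac{\mathcal{H}_{2}'(0;\vec{u},\vec{\mathfrak{s}})}{\mathcal{H}_{2}(0;\vec{u},\vec{\mathfrak{s}})}$ and $\big(\tfrac12+\alpha\big)\sqrt b\,\mathcal{G}_{1}(0;\vec{u},\vec{\mathfrak{s}})$ that are absent in Theorem~\ref{thm:main thm}. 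One checks the accumulated error is $\bigO((\ln n)^{2}/n)$ (the logarithms arising when the pointwise estimates are summed over $k$), uniformly for $u_{\ell}$ in a disk of radius $\delta$ about $x_{\ell}$, since by Lemma~\ref{Hjpositivelemma} and continuity $\mathcal{H}_{1},\mathcal{H}_{2}$ stay bounded away from $0$ on the corresponding polydisk after shrinking $\delta$, so that all logarithms and Temme expansions remain uniformly valid; then \eqref{der of main result edge} follows from \eqref{asymp in main thm edge} by Cauchy's inequalities. The main difficulty is precisely the simultaneous, uniform control of these three nested expansions --- Temme to three orders, a multi-term Euler--Maclaurin summation over an $O(\sqrt n)$-wide window, and the Taylor expansions of $nr_{\ell}^{2b}$ and $a_{k}$ --- and the reorganization of the many resulting terms into the compact integrals defining $C_{2},C_{3},C_{4}$; this is further complicated at the edge by the window colliding with $k=n$, which is exactly what produces the surviving boundary terms and the half-line integrals, and by the need to extract a remainder of order $(\ln n)^{2}/n$ rather than merely $o(1)$.
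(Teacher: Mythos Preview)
Your outline is correct and follows essentially the same route as the paper: the exact product formula via rotational invariance, Temme's uniform incomplete-gamma asymptotics, a split of $\{1,\dots,n\}$ into a bulk part contributing $C_{1}n$ and a crossover window of width $\asymp\sqrt{n}$ abutting $k=n$, an Euler--Maclaurin/Riemann-sum approximation on the window, and the reflection $\mathcal{H}_{1}(t)=e^{u_{1}+\cdots+u_{m}}\mathcal{H}_{2}(-t)$ to convert to half-line integrals; the boundary terms you identify at $k=n$ are exactly the ones the paper isolates in Lemma~\ref{lemma:Riemann sum edge}.

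Two minor points of comparison. First, you parametrize the window directly by the final variable $t$ via $k=n-\alpha+\sqrt{2bn}\,t$, whereas the paper uses the intermediate $M_{j}=\sqrt{n}(\lambda_{j}-1)$ with $\lambda_{j}=n/(j+\alpha)$ and only passes to $t$ at the end; the two are related by a sign and scale, so this is cosmetic. Second, your account of the $(\ln n)^{2}/n$ error (``logarithms arising when the pointwise estimates are summed'') is not quite the mechanism: in the paper the window is given the explicit width $M=M'\sqrt{\ln n}$ so that the tails outside $I_{2}$ are $\bigO(n^{-10})$, and the Riemann-sum remainder in Lemma~\ref{lemma:Riemann sum edge} is $\bigO(M^{4}/n)=\bigO((\ln n)^{2}/n)$; making this choice of $M$ explicit is what closes the argument.
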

\begin{remark}
We believe that $\bigO\big(\frac{(\ln n)^{2}}{n}\big)$ in \eqref{asymp in main thm} and \eqref{der of main result} is not optimal and can be improved to $\bigO(n^{-1})$.
\end{remark}

Let $(\mathbb{N}^{m})_{>0} := \{\vec{j}=(j_{1},\ldots,j_{m}) \in \mathbb{N}: j_{1}+\ldots+j_{m}\geq 1\}$. 

Recall that the joint cumulants $\{\kappa_{\vec{j}}=\kappa_{\vec{j}}(r_{1},\ldots,r_{m};n,b,\alpha)\}_{\vec{j} \in (\mathbb{N}^{m})_{>0}}$ of $\mathrm{N}(r_{1}), \ldots, \mathrm{N}(r_{m})$ are defined by
\begin{align}\label{joint cumulant}
\kappa_{\vec{j}}=\kappa_{j_{1},\ldots,j_{m}}:=\partial_{\vec{u}}^{\vec{j}} \ln \mathbb{E}[e^{u_{1}\mathrm{N}(r_{1})+\ldots + u_{m}\mathrm{N}(r_{m})}] \Big|_{\vec{u}=\vec{0}}, \qquad \vec{j} \in (\mathbb{N}^{m})_{>0},
\end{align}
where $\partial_{\vec{u}}^{\vec{j}}:=\partial_{u_{1}}^{j_{1}}\ldots \partial_{u_{m}}^{j_{m}}$ and $\vec{0}:=(0,\ldots,0)$. In particular, we have
\begin{align*}
\mathbb{E}[\mathrm{N}(r)] = \kappa_{1}(r), \qquad \mbox{Var}[\mathrm{N}(r)] = \kappa_{2}(r), \qquad \mbox{Cov}[\mathrm{N}(r_{1}),\mathrm{N}(r_{2})] = \kappa_{(1,1)}(r_{1},r_{2}).
\end{align*}
Corollary \ref{coro:correlation} below follows from Theorems \ref{thm:main thm} and \ref{thm:main thm edge}. As already mentioned, it contains the large $n$ asymptotics of all joint cumulants of $\mathrm{N}(r_{1}),\ldots,\mathrm{N}(r_{m})$ when the radii are merging, either in the bulk or at the edge, and also contains several central limit theorems for the joint fluctuations of $\mathrm{N}(r_{1}),\ldots,\mathrm{N}(r_{m})$.

\begin{corollary}\label{coro:correlation}
(a) (bulk regime) Let $m \in \mathbb{N}_{>0}$, $\vec{j} \in (\mathbb{N}^{m})_{>0}$, $\alpha > -1$, $b>0$, $r \in (0,b^{-\frac{1}{2b}})$ and $-\infty<\mathfrak{s}_{1}<\ldots <\mathfrak{s}_{m}<+\infty$ be fixed, and for $n \in \mathbb{N}_{>0}$, define
\begin{align*}
r_{\ell} = r \bigg( 1+\frac{\sqrt{2}\, \mathfrak{s}_{\ell}}{r^{b}\sqrt{n}} \bigg)^{\frac{1}{2b}}, \qquad \ell=1,\ldots,m.
\end{align*}
As $n \to +\infty$, we have
\begin{align}\label{asymp cumulant bulk}
\kappa_{\vec{j}} = \partial_{\vec{u}}^{\vec{j}}C_{1}\big|_{\vec{u}=\vec{0}} n + \partial_{\vec{u}}^{\vec{j}}C_{2}\big|_{\vec{u}=\vec{0}} \sqrt{n} + \partial_{\vec{u}}^{\vec{j}}C_{3}\big|_{\vec{u}=\vec{0}} +  \frac{\partial_{\vec{u}}^{\vec{j}}C_{4}\big|_{\vec{u}=\vec{0}}}{\sqrt{n}} + \bigO \bigg( \frac{(\ln n)^{2}}{n} \bigg),
\end{align}
where $C_{1},\ldots,C_{4}$ are as in Theorem \ref{thm:main thm}. In particular, for any $1 \leq \ell < k \leq m$, as $n \to + \infty$ we have
\begin{align}
& \mathbb{E}[\mathrm{N}(r_{\ell})] = br^{2b}n + \sqrt{2}\, br^{b}\mathfrak{s}_{\ell}\sqrt{n} + \frac{b-1-2\alpha}{2} + \bigO\bigg( \frac{(\ln n)^{2}}{n} \bigg), \label{exp bulk coro} \\
& \mathrm{Var}[\mathrm{N}(r_{\ell})] = \frac{br^{b}}{\sqrt{\pi}}\sqrt{n} + \frac{b \, \mathfrak{s}_{\ell}}{\sqrt{2\pi}} - \frac{b(1+4\mathfrak{s}_{\ell}^{2})}{16\sqrt{\pi}r^{b}}\frac{1}{\sqrt{n}} + \bigO\bigg( \frac{(\ln n)^{2}}{n} \bigg), \label{var bulk coro} \\
& \mathrm{Cov}(\mathrm{N}(r_{\ell}),\mathrm{N}(r_{k})) = c_{(1,1)}(\mathfrak{s}_{\ell},\mathfrak{s}_{k})\sqrt{n} + d_{(1,1)}(\mathfrak{s}_{\ell},\mathfrak{s}_{k}) + e_{(1,1)}(\mathfrak{s}_{\ell},\mathfrak{s}_{k})n^{-\frac{1}{2}} + \bigO \bigg( \frac{(\ln n)^{2}}{n} \bigg), \nonumber
\end{align}
where
\begin{align}
& c_{(1,1)}(\mathfrak{s}_{\ell},\mathfrak{s}_{k}) \hspace{-0.03cm} = \hspace{-0.03cm} \frac{b r^{b}}{\sqrt{2}} \int_{0}^{+\infty} \hspace{-0.1cm} \Big\{ \mathrm{erfc}(t-\mathfrak{s}_{\ell}) \big( 1- \tfrac{1}{2}\mathrm{erfc}(t-\mathfrak{s}_{k})\big) + \mathrm{erfc}(t+\mathfrak{s}_{k}) \big( 1- \tfrac{1}{2}\mathrm{erfc}(t+\mathfrak{s}_{\ell}) \big) \Big\} dt, \label{def of c11 bulk} \\
& d_{(1,1)}(\mathfrak{s}_{\ell},\mathfrak{s}_{k}) = b \int_{0}^{+\infty} t \Big\{ \mathrm{erfc}(t-\mathfrak{s}_{\ell}) \big( 2-\mathrm{erfc}(t-\mathfrak{s}_{k})\big) - \mathrm{erfc}(t+\mathfrak{s}_{k}) \big( 2-\mathrm{erfc}(t+\mathfrak{s}_{\ell}) \big) \Big\}dt \nonumber \\
& + b \int_{-\infty}^{+\infty} \bigg\{ \big( 2-\mathrm{erfc}(t-\mathfrak{s}_{k})\big) \frac{e^{-(t-\mathfrak{s}_{\ell})^{2}}}{2\sqrt{\pi}} \frac{1-5t^{2}+t\mathfrak{s}_{\ell}-2\mathfrak{s}_{\ell}^{2}}{3} 
	\nonumber \\ 
& \hspace{1.7cm} - \mathrm{erfc}(t-\mathfrak{s}_{\ell}) \frac{e^{-(t-\mathfrak{s}_{k})^{2}}}{2\sqrt{\pi}} \frac{1-5t^{2}+t\mathfrak{s}_{k}-2\mathfrak{s}_{k}^{2}}{3} \bigg\}dt, \nonumber \\
& e_{(1,1)}(\mathfrak{s}_{\ell},\mathfrak{s}_{k}) = -\frac{br^{-b}e^{-\frac{(\mathfrak{s}_{\ell}-\mathfrak{s}_{k})^{2}}{2}}}{288\sqrt{\pi}}  \Big( 51+55\mathfrak{s}_{\ell}^{4}+55\mathfrak{s}_{k}^{4} + 96 \mathfrak{s}_{\ell}^{2} + 96 \mathfrak{s}_{k}^{2} + 128 \mathfrak{s}_{\ell}^{3}\mathfrak{s}_{k} + 128 \mathfrak{s}_{\ell}\mathfrak{s}_{k}^{3} + 180 \mathfrak{s}_{\ell}\mathfrak{s}_{k}  \nonumber \\ 
& + 210 \mathfrak{s}_{\ell}^{2}\mathfrak{s}_{k}^{2} \Big) + \frac{3b r^{-b}}{\sqrt{2}} \int_{0}^{+\infty} t^{2} \Big\{ \mathrm{erfc}(t-\mathfrak{s}_{\ell}) \big(2-\mathrm{erfc}(t-\mathfrak{s}_{k})\big) + \big(2-\mathrm{erfc}(t+\mathfrak{s}_{\ell})\big)\mathrm{erfc}(t+\mathfrak{s}_{k})   \Big\}dt \nonumber \\
& + \frac{b r^{-b}}{36\sqrt{2\pi}} \int_{-\infty}^{+\infty} \bigg[ \big( 2-\mathrm{erfc}(t-\mathfrak{s}_{k}) \big) e^{-(t-\mathfrak{s}_{\ell})^{2}}\mathfrak{p}(t,\mathfrak{s}_{\ell}) - \mathrm{erfc}(t-\mathfrak{s}_{\ell}) e^{-(t-\mathfrak{s}_{k})^{2}}\mathfrak{p}(t,\mathfrak{s}_{k}) \bigg]dt, \nonumber \\[0.1cm]
& \mathfrak{p}(t,\mathfrak{s}) = -3\mathfrak{s}+22\mathfrak{s}^{3}-8\mathfrak{s}^{5} + t(21-66\mathfrak{s}^{2}+16\mathfrak{s}^{4}) + t^{2}(57\mathfrak{s}-50 \mathfrak{s}^{3}) + t^{3}(-193+62\mathfrak{s}^{2}) -70 t^{4} \mathfrak{s} + 50 t^{5}. \nonumber
\end{align}
(b) (joint fluctuations in the bulk) Let $\alpha > -1$, $b>0$, $r \in (0,b^{-\frac{1}{2b}})$, $m \in \mathbb{N}_{>0}$ and $\mathfrak{s}_{1},\ldots,\mathfrak{s}_{m} \in \mathbb{R}$ be fixed, and for $n \in \mathbb{N}_{>0}$, define
\begin{align*}
r_{\ell} = r \bigg( 1+\frac{\sqrt{2}\, \mathfrak{s}_{\ell}}{r^{b}\sqrt{n}} \bigg)^{\frac{1}{2b}}, \qquad \ell=1,\ldots,m.
\end{align*} Consider the random variables
\begin{align}
& \mathcal{N}_{\ell} := \pi^{1/4}\frac{\mathrm{N}(r_{\ell})-(br^{2b}n+\sqrt{2} \, br^{b} \mathfrak{s}_{\ell} \sqrt{n})}{\sqrt{br^{b}} \; n^{1/4}}, \qquad \ell=1,\ldots,m. \label{Nj bulk}
\end{align}
As $n \to + \infty$, $(\mathcal{N}_{1},\ldots,\mathcal{N}_{m})$ convergences in distribution to a multivariate normal random variable of mean $(0,\ldots,0)$ and whose covariance matrix $\Sigma$ is given by
\begin{align*}
\Sigma_{\ell,\ell} = 1, \qquad \Sigma_{\ell,k} = \Sigma_{k,\ell} = \frac{ c_{(1,1)}(\mathfrak{s}_{\ell},\mathfrak{s}_{k})}{br^{b}/\sqrt{\pi}}, \qquad 1 \leq \ell<k \leq m,
\end{align*}
where $c_{(1,1)}(\mathfrak{s}_{\ell},\mathfrak{s}_{k})$ is given by \eqref{def of c11 bulk}. \\[-0.2cm]

\noindent (c) (edge regime) Let $m \in \mathbb{N}_{>0}$, $\vec{j} \in (\mathbb{N}^{m})_{>0}$, $\alpha > -1$, $b>0$ and $-\infty < \mathfrak{s}_{1}< \ldots<\mathfrak{s}_{m}< +\infty$ be fixed, and for $n \in \mathbb{N}_{>0}$, define
\begin{align*}
r_{\ell} = b^{-\frac{1}{2b}} \bigg( 1+\frac{\sqrt{2b}\, \mathfrak{s}_{\ell}}{\sqrt{n}} \bigg)^{\frac{1}{2b}}, \qquad \ell=1,\ldots,m.
\end{align*}
As $n \to +\infty$, we have
\begin{align}\label{asymp cumulant edge}
\kappa_{\vec{j}} = \partial_{\vec{u}}^{\vec{j}}C_{1}\big|_{\vec{u}=\vec{0}} n + \partial_{\vec{u}}^{\vec{j}}C_{2}\big|_{\vec{u}=\vec{0}} \sqrt{n} + \partial_{\vec{u}}^{\vec{j}}C_{3}\big|_{\vec{u}=\vec{0}} +  \frac{\partial_{\vec{u}}^{\vec{j}}C_{4}\big|_{\vec{u}=\vec{0}}}{\sqrt{n}} + \bigO \bigg( \frac{(\ln n)^{2}}{n} \bigg),
\end{align}
where $C_{1},\ldots,C_{4}$ are as in Theorem \ref{thm:main thm edge}. In particular, for any $1 \leq \ell < k \leq m$, as $n \to + \infty$ we have
\begin{align*}
& \mathbb{E}[\mathrm{N}(r_{\ell})] = n + c_{1}(\mathfrak{s}_{\ell})\sqrt{n} + d_{1}(\mathfrak{s}_{\ell}) + e_{1}(\mathfrak{s}_{\ell}) n^{-\frac{1}{2}} + \bigO\bigg( \frac{(\ln n)^{2}}{n} \bigg), \\
& \mathrm{Var}[\mathrm{N}(r_{\ell})] = c_{2}(\mathfrak{s}_{\ell})\sqrt{n} + d_{2}(\mathfrak{s}_{\ell}) + e_{2}(\mathfrak{s}_{\ell}) n^{-\frac{1}{2}} + \bigO\bigg( \frac{(\ln n)^{2}}{n} \bigg), \\
& \mathrm{Cov}(\mathrm{N}(r_{\ell}),\mathrm{N}(r_{k})) = c_{(1,1)}(\mathfrak{s}_{\ell},\mathfrak{s}_{k})\sqrt{n} + d_{(1,1)}(\mathfrak{s}_{\ell},\mathfrak{s}_{k}) + e_{(1,1)}(\mathfrak{s}_{\ell},\mathfrak{s}_{k}) n^{-\frac{1}{2}} + \bigO \bigg( \frac{(\ln n)^{2}}{n} \bigg),
\end{align*}
where
\begin{align}
& c_{1}(\mathfrak{s}) = \frac{\sqrt{b} \, \mathfrak{s}}{\sqrt{2}}\mathrm{erfc}(\mathfrak{s}) - \frac{\sqrt{b}}{\sqrt{2\pi}}e^{-\mathfrak{s}^{2}}, \label{c1 edge} \\
& d_{1}(\mathfrak{s}) = -\frac{1}{2}\bigg( \frac{1}{2}+\alpha - \frac{b}{2} \bigg) \mathrm{erfc} (\mathfrak{s}) - \frac{b \, \mathfrak{s}}{3\sqrt{\pi}}e^{-\mathfrak{s}^{2}}, \nonumber \\
& e_{1}(\mathfrak{s}) = \frac{e^{-\mathfrak{s}^{2}}}{\sqrt{2\pi}}\bigg( \frac{b(2+4\alpha)-1-6\alpha-6\alpha^{2}}{12\sqrt{b}} + \frac{(3b-2-4\alpha)\mathfrak{s}^{2}}{6}\sqrt{b} - \frac{2\mathfrak{s}^{4}}{9}b^{3/2} \bigg), \nonumber \\
& c_{2}(\mathfrak{s}) = \frac{\sqrt{b}}{2\sqrt{\pi}}\mathrm{erfc}(\sqrt{2} \, \mathfrak{s}) + \sqrt{b}\frac{e^{-\mathfrak{s}^{2}}}{\sqrt{2\pi}}\big( 1-\mathrm{erfc} (\mathfrak{s})\big) + \frac{\sqrt{b} \, \mathfrak{s}}{\sqrt{2}}\mathrm{erfc} ( \mathfrak{s}) \bigg( \frac{1}{2}\mathrm{erfc}( \mathfrak{s})-1 \bigg), \label{c2 edge} \\
& d_{2}(\mathfrak{s}) = -\frac{b}{12 \pi}e^{-2 \mathfrak{s}^{2}} + \frac{b \,\mathfrak{s}}{2\sqrt{2\pi}}\mathrm{erfc}( \sqrt{2} \,  \mathfrak{s} ) + \frac{b \, \mathfrak{s}}{3\sqrt{\pi}}e^{-\mathfrak{s}^{2}}\big( 1-\mathrm{erfc}( \mathfrak{s}) \big) \nonumber \\
& \hspace{1.15cm} + \frac{b-1-2\alpha}{4} \mathrm{erfc}( \mathfrak{s}) \bigg( \frac{1}{2}\mathrm{erfc}( \mathfrak{s}) - 1 \bigg), \nonumber \\
& e_{2}(\mathfrak{s}) = \frac{e^{-\mathfrak{s}^{2}}}{12\sqrt{2\pi b}} \bigg( 1-2b+6\alpha-4b \alpha + 6\alpha^{2} + 2(2-3b+4\alpha) b \,\mathfrak{s}^{2} + \frac{8b^{2}}{3}\mathfrak{s}^{4} \bigg) \big( 1-\mathrm{erfc}( \mathfrak{s} ) \big) \nonumber \\
& \hspace{0.75cm}  -\frac{b^{3/2}  \mathfrak{s}}{72 \sqrt{2} \, \pi}e^{-2\mathfrak{s}^{2}} - \frac{b^{3/2}(1+4\mathfrak{s}^{2})}{32\sqrt{\pi}}\mathrm{erfc}( \sqrt{2} \, \mathfrak{s} ), \nonumber \\
& c_{(1,1)}(\mathfrak{s}_{\ell},\mathfrak{s}_{k}) = \frac{\sqrt{b}}{2\sqrt{2}} \int_{0}^{+\infty} \mathrm{erfc}(t+\mathfrak{s}_{k}) \big( 2- \mathrm{erfc}(t+\mathfrak{s}_{\ell}) \big) dt, \label{c11 edge} \\
& d_{(1,1)}(\mathfrak{s}_{\ell},\mathfrak{s}_{k}) = \frac{1+2\alpha}{8}\big( 2-\mathrm{erfc}(\mathfrak{s}_{\ell}) \big) \mathrm{erfc}(\mathfrak{s}_{k}) -  b \int_{0}^{+\infty} t \; \mathrm{erfc}(t+\mathfrak{s}_{k}) \big( 2-\mathrm{erfc}(t+\mathfrak{s}_{\ell}) \big) dt \nonumber \\
& + b \int_{-\infty}^{0} \bigg\{ \big( 2-\mathrm{erfc}(t-\mathfrak{s}_{k})\big) \frac{e^{-(t-\mathfrak{s}_{\ell})^{2}}}{2\sqrt{\pi}} \frac{1-5t^{2}+t\mathfrak{s}_{\ell}-2\mathfrak{s}_{\ell}^{2}}{3} \nonumber \\
&\hspace{1.65cm} - \mathrm{erfc}(t-\mathfrak{s}_{\ell}) \frac{e^{-(t-\mathfrak{s}_{k})^{2}}}{2\sqrt{\pi}} \frac{1-5t^{2}+t\mathfrak{s}_{k}-2\mathfrak{s}_{k}^{2}}{3} \bigg\}dt, \nonumber \\
& e_{(1,1)}(\mathfrak{s}_{\ell},\mathfrak{s}_{k}) = (2-\mathrm{erfc}(\mathfrak{s}_{\ell})) \frac{e^{-\mathfrak{s}_{k}^{2}}}{\sqrt{2\pi}} \frac{1+6\alpha+6\alpha^{2}+2b(1+2\alpha)(2\mathfrak{s}_{k}^{2}-1)}{24\sqrt{b}} \nonumber \\
& \hspace{+1.82cm} - \mathrm{erfc}(\mathfrak{s}_{k}) \frac{e^{-\mathfrak{s}_{\ell}^{2}}}{\sqrt{2\pi}} \frac{1+6\alpha+6\alpha^{2}+2b(1+2\alpha)(2\mathfrak{s}_{\ell}^{2}-1)}{24\sqrt{b}} \nonumber \\
& -\frac{b^{\frac{3}{2}}e^{-\frac{(\mathfrak{s}_{\ell}-\mathfrak{s}_{k})^{2}}{2}}}{288\sqrt{\pi}} \frac{1}{2}\mathrm{erfc}\bigg( \frac{\mathfrak{s}_{\ell}+\mathfrak{s}_{k}}{\sqrt{2}} \bigg) \Big( 51+55\mathfrak{s}_{\ell}^{4}+55\mathfrak{s}_{k}^{4} + 96 \mathfrak{s}_{\ell}^{2} + 96 \mathfrak{s}_{k}^{2} + 128 \mathfrak{s}_{\ell}^{3}\mathfrak{s}_{k} + 128 \mathfrak{s}_{\ell}\mathfrak{s}_{k}^{3}  \nonumber \\ 
& + 180 \mathfrak{s}_{\ell}\mathfrak{s}_{k} + 210 \mathfrak{s}_{\ell}^{2}\mathfrak{s}_{k}^{2} \Big) + \frac{b^{\frac{3}{2}}}{144\sqrt{2}}\frac{e^{-\mathfrak{s}_{\ell}^{2}-\mathfrak{s}_{k}^{2}}}{2\pi} \Big( 55(\mathfrak{s}_{\ell}^{3}+\mathfrak{s}_{k}^{3}) + 73 (\mathfrak{s}_{\ell}+\mathfrak{s}_{k} + \mathfrak{s}_{\ell}^{2}\mathfrak{s}_{k} + \mathfrak{s}_{\ell}\mathfrak{s}_{k}^{2}) \Big) \nonumber \\
& + \frac{3b^{\frac{3}{2}}}{\sqrt{2}} \int_{0}^{+\infty} t^{2} \big(2-\mathrm{erfc}(t+\mathfrak{s}_{\ell})\big)\mathrm{erfc}(t+\mathfrak{s}_{k}) dt \nonumber \\
& + \frac{b^{\frac{3}{2}}}{36\sqrt{2\pi}} \int_{-\infty}^{0} \bigg[ \big( 2-\mathrm{erfc}(t-\mathfrak{s}_{k}) \big) e^{-(t-\mathfrak{s}_{\ell})^{2}}\mathfrak{p}(t,\mathfrak{s}_{\ell}) - \mathrm{erfc}(t-\mathfrak{s}_{\ell}) e^{-(t-\mathfrak{s}_{k})^{2}}\mathfrak{p}(t,\mathfrak{s}_{k}) \bigg]dt, \nonumber \\[0.1cm]
& \mathfrak{p}(t,\mathfrak{s}) = -3\mathfrak{s}+22\mathfrak{s}^{3}
-8\mathfrak{s}^{5} + t(21-66\mathfrak{s}^{2}+16\mathfrak{s}^{4}) + t^{2}(57\mathfrak{s}-50 \mathfrak{s}^{3}) + t^{3}(-193+62\mathfrak{s}^{2}) -70 t^{4} \mathfrak{s} + 50 t^{5}. \nonumber
\end{align}
(d) (joint fluctuations at the edge) Let $\alpha > -1$, $b>0$, $m \in \mathbb{N}_{>0}$ and $\mathfrak{s}_{1},\ldots,\mathfrak{s}_{m} \in \mathbb{R}$ be fixed, and for $n \in \mathbb{N}_{>0}$, define
\begin{align*}
r_{\ell} = b^{-\frac{1}{2b}} \bigg( 1+\sqrt{2b}\frac{\mathfrak{s}_{\ell}}{\sqrt{n}} \bigg)^{\frac{1}{2b}}, \qquad \ell=1,\ldots,m.
\end{align*}
Consider the random variables
\begin{align}
& \mathcal{N}_{\ell} := \frac{\mathrm{N}(r_{\ell})-(n+c_{1}( \mathfrak{s}_{\ell})\sqrt{n})}{\sqrt{c_{2}(\mathfrak{s}_{\ell})} \; n^{1/4}}, \qquad \ell=1,\ldots,m, \label{Nj edge}
\end{align}
where $c_{1}$ is given by \eqref{c1 edge} and $c_{2}$ is given by \eqref{c2 edge}.
As $n \to + \infty$, $(\mathcal{N}_{1},\ldots,\mathcal{N}_{m})$ convergences in distribution to a multivariate normal random variable of mean $(0,\ldots,0)$ and whose covariance matrix $\Sigma$ is given by
\begin{align*}
\Sigma_{\ell,\ell} = 1, \qquad \Sigma_{\ell,k} = \Sigma_{k,\ell} = \frac{ c_{(1,1)}(\mathfrak{s}_{\ell},\mathfrak{s}_{k})}{\sqrt{c_{2}(\mathfrak{s}_{\ell})}\sqrt{c_{2}(\mathfrak{s}_{k})}}, \qquad 1 \leq \ell<k \leq m,
\end{align*}
where $c_{(1,1)}(\mathfrak{s}_{\ell},\mathfrak{s}_{k})$ is given by \eqref{def of c11 bulk}.
\end{corollary}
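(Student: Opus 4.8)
The plan is to derive all four parts of the corollary from Theorems \ref{thm:main thm} and \ref{thm:main thm edge}, more precisely from their differentiated forms \eqref{der of main result} and \eqref{der of main result edge}.

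For parts (a) and (c) I would start from the definition \eqref{joint cumulant}, according to which $\kappa_{\vec{j}}$ is the $\vec{j}$-th $\vec{u}$-derivative of $\ln\mathbb E[e^{u_{1}\mathrm N(r_{1})+\ldots+u_{m}\mathrm N(r_{m})}]$ evaluated at $\vec{u}=\vec{0}$. Taking $x_{1}=\ldots=x_{m}=0$ in Theorem \ref{thm:main thm} (resp. Theorem \ref{thm:main thm edge}) and then applying \eqref{der of main result} (resp. \eqref{der of main result edge}) at $\vec{u}=\vec{0}$ yields \eqref{asymp cumulant bulk} (resp. \eqref{asymp cumulant edge}) at once. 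It then remains to compute $\partial_{\vec{u}}^{\vec{j}}C_{k}|_{\vec{u}=\vec{0}}$ for $k=1,\ldots,4$ and $|\vec{j}|\in\{1,2\}$. This is elementary: since $\mathcal H_{j}(t;\vec{0},\vec{\mathfrak{s}})\equiv 1$ and $\mathcal G_{j}(t;\vec{0},\vec{\mathfrak{s}})\equiv 0$, one reads off for instance $\partial_{u_{k}}\ln\mathcal H_{1}|_{\vec{0}}=\tfrac12\mathrm{erfc}(t-\mathfrak{s}_{k})$, $\partial_{u_{k}}\ln\mathcal H_{2}|_{\vec{0}}=-\tfrac12\mathrm{erfc}(t+\mathfrak{s}_{k})$, $\partial_{u_{k}}\mathcal G_{1}|_{\vec{0}}=\frac{e^{-(t-\mathfrak{s}_{k})^{2}}}{\sqrt{2\pi}}\frac{1-2\mathfrak{s}_{k}^{2}+t\mathfrak{s}_{k}-5t^{2}}{3}$, the second-order derivatives being obtained the same way (with the additional $-(\partial_{u_{\ell}}\mathcal H_{1})(\partial_{u_{k}}\mathcal H_{1})$ term coming from $\partial_{u_{\ell}}\partial_{u_{k}}\ln\mathcal H_{1}$). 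Substituting these into the formulas for $C_{1},\ldots,C_{4}$ and evaluating the resulting integrals, using the antiderivative $\int\mathrm{erfc}(t)\,dt=t\,\mathrm{erfc}(t)-\pi^{-1/2}e^{-t^{2}}$ together with elementary Gaussian integrals (and identities such as $\int_{0}^{\infty}\mathrm{erfc}(t)^{2}dt=\tfrac{2-\sqrt2}{\sqrt\pi}$, which is what produces the factor $br^{b}/\sqrt\pi$ in the variance), then yields the announced expressions for $\mathbb E[\mathrm N(r_{\ell})]$, $\mathrm{Var}[\mathrm N(r_{\ell})]$, $\mathrm{Cov}(\mathrm N(r_{\ell}),\mathrm N(r_{k}))$ and for all the coefficients $c_{\bullet},d_{\bullet},e_{\bullet}$.

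For parts (b) and (d) I would use the standard moment generating function criterion for convergence in distribution. Fix $\vec{v}\in\mathbb R^{m}$ and, in the bulk case, set $u_{\ell}:=v_{\ell}\pi^{1/4}/(\sqrt{br^{b}}\,n^{1/4})$, so that
\begin{align*}
\mathbb E\Big[e^{\sum_{\ell=1}^{m}v_{\ell}\mathcal N_{\ell}}\Big]=\exp\!\Big(-\sum_{\ell=1}^{m}u_{\ell}\big(br^{2b}n+\sqrt 2\,br^{b}\mathfrak{s}_{\ell}\sqrt n\big)\Big)\;\mathbb E\Big[\prod_{\ell=1}^{m}e^{u_{\ell}\mathrm N(r_{\ell})}\Big].
\end{align*}
Since $u_{\ell}=\bigO(n^{-1/4})\to 0$, for $n$ large the point $\vec{u}$ lies in the polydisc of Theorem \ref{thm:main thm} centered at the origin, hence $\ln\mathbb E[\prod_{\ell}e^{u_{\ell}\mathrm N(r_{\ell})}]=C_{1}n+C_{2}\sqrt n+C_{3}+C_{4}/\sqrt n+\bigO((\ln n)^{2}/n)$. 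Taylor-expanding $C_{1},\ldots,C_{4}$ about $\vec{u}=\vec{0}$ and using that $C_{1}=br^{2b}\sum_{\ell}u_{\ell}$ exactly, $C_{2}(\vec{0})=C_{3}(\vec{0})=C_{4}(\vec{0})=0$, $\partial_{u_{\ell}}C_{2}|_{\vec{0}}=\sqrt 2\,br^{b}\mathfrak{s}_{\ell}$, $\partial_{u_{\ell}}^{2}C_{2}|_{\vec{0}}=br^{b}/\sqrt\pi$ and $\partial_{u_{\ell}}\partial_{u_{k}}C_{2}|_{\vec{0}}=c_{(1,1)}(\mathfrak{s}_{\ell},\mathfrak{s}_{k})$ for $1\le\ell<k\le m$ (these being exactly the quantities already computed in part (a)), one sees that $C_{1}n+C_{2}\sqrt n$ equals $\sum_{\ell}u_{\ell}(br^{2b}n+\sqrt 2\,br^{b}\mathfrak{s}_{\ell}\sqrt n)$ — which cancels the prefactor — plus $\tfrac{\sqrt n}{2}\sum_{\ell,k}\partial_{u_{\ell}}\partial_{u_{k}}C_{2}|_{\vec{0}}\,u_{\ell}u_{k}$, which equals $\tfrac12\sum_{\ell,k}\Sigma_{\ell,k}v_{\ell}v_{k}$ with $\Sigma$ as in the statement (because $u_{\ell}u_{k}=v_{\ell}v_{k}\sqrt\pi/(br^{b}\sqrt n)$), plus a remainder that, together with the $\vec{u}$-dependence of $C_{3}$, $C_{4}$ and the $\bigO((\ln n)^{2}/n)$ term, is $\bigO(n^{-1/4})$. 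Therefore $\mathbb E[e^{\sum_{\ell}v_{\ell}\mathcal N_{\ell}}]\to e^{\frac12\sum_{\ell,k}\Sigma_{\ell,k}v_{\ell}v_{k}}$ as $n\to\infty$; since the asymptotics of Theorem \ref{thm:main thm} are uniform for complex $u_{\ell}$, the same computation with each $v_{\ell}$ replaced by $i\theta_{\ell}$ shows that the characteristic functions converge, and L\'evy's continuity theorem gives the claimed multivariate central limit theorem. Part (d) is proved identically, starting from Theorem \ref{thm:main thm edge} with $u_{\ell}:=v_{\ell}/(\sqrt{c_{2}(\mathfrak{s}_{\ell})}\,n^{1/4})$ and using $\partial_{u_{\ell}}C_{1}|_{\vec{0}}=1$, $\partial_{u_{\ell}}C_{2}|_{\vec{0}}=c_{1}(\mathfrak{s}_{\ell})$, $\partial_{u_{\ell}}^{2}C_{2}|_{\vec{0}}=c_{2}(\mathfrak{s}_{\ell})$, $\partial_{u_{\ell}}\partial_{u_{k}}C_{2}|_{\vec{0}}=c_{(1,1)}(\mathfrak{s}_{\ell},\mathfrak{s}_{k})$.

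The only genuinely laborious ingredient is the explicit evaluation in parts (a) and (c): carrying out the derivative computations and, above all, reducing the numerous Gaussian-type integrals to the closed forms displayed for $d_{(1,1)}$, $e_{(1,1)}$ and for the edge coefficients $c_{2}$, $d_{2}$, $e_{2}$, $d_{(1,1)}$, $e_{(1,1)}$; everything else is a direct, essentially bookkeeping, consequence of the uniform asymptotics established in Theorems \ref{thm:main thm} and \ref{thm:main thm edge}.
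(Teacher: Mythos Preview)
Your proposal is correct and follows essentially the same approach as the paper: parts (a) and (c) are obtained by applying \eqref{der of main result} and \eqref{der of main result edge} at $\vec{u}=\vec{0}$ followed by a direct (lengthy) computation of the relevant $\partial_{\vec{u}}^{\vec{j}}C_{k}|_{\vec{u}=\vec{0}}$, and parts (b) and (d) are obtained by substituting $u_{\ell}=\bigO(n^{-1/4})$ into the uniform asymptotics and showing convergence of the moment generating function. The only small point you omit is the verification that $c_{2}(\mathfrak{s})>0$ for all $\mathfrak{s}\in\mathbb{R}$ (needed so that $\mathcal{N}_{\ell}$ in \eqref{Nj edge} is well-defined); the paper checks this via $c_{2}'(\mathfrak{s})=2^{-3/2}\sqrt{b}\,(\mathrm{erfc}(\mathfrak{s})-2)\,\mathrm{erfc}(\mathfrak{s})<0$ and $\lim_{\mathfrak{s}\to+\infty}c_{2}(\mathfrak{s})=0$.
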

\begin{remark}
Some of the results contained in Corollary \ref{coro:correlation} were already known:
\begin{itemize}
\item In \cite[eq (70)]{LeeRiser2016}, Lee and Riser obtained second-order asymptotics for the number of points lying outside the droplet of the Ellictic Ginibre ensemble (in particular, the coefficients $c_{1}(0)|_{(b,\alpha)=(1,0)}$ and $d_{1}(0)|_{(b,\alpha)=(1,0)}$ of part (c) above were contained in their results). 
\item Given a Borel set $A$, let $N_{A} := \#\{z_{j}:z_{j}\in A\}$. In \cite{CE2020}, Charles and Estienne proved that if $A$ is independent of $n$, has smooth boundary and lies strictly in the bulk of the Ginibre ensemble, the cumulants $\{\kappa_{j}(A)\}_{j=1}^{+\infty}$ enjoy an all-order expansion of the form
\begin{align}\label{all order expansion}
\kappa_{j}(A) = \begin{cases}
\alpha_{j,0}n \hspace{+0.2cm} + \sum_{k=1}^{N} \alpha_{j,k}n^{1-k} + \bigO(n^{-N}), & \mbox{if } j =1, \\
\hspace{1.43cm} \sum_{k=1}^{N} \alpha_{j,k}n^{1-k} + \bigO(n^{-N}), & \mbox{if $j$ is odd and } j \geq 3, \\
\beta_{j,0}n^{\frac{1}{2}} + \sum_{k=1}^{N} \beta_{j,k}n^{\frac{1}{2}-k} + \bigO(n^{-N-\frac{1}{2}}), & \mbox{if $j$ is even,}
\end{cases}
\end{align}
where $N \in \mathbb{N}$ is arbitrary. Furthermore, the coefficients $\alpha_{j,0}$ and $\beta_{j,0}$ were computed explicitly. It can be verified (see \cite[Corollary 1.4 (a)]{Charlier 2d jumps}) that \eqref{asymp cumulant bulk} is consistent with \eqref{all order expansion}: for $m=1$ and $\mathfrak{s}_{1}=0$, we have $\partial_{u}^{j}C_{1}\big|_{u=0}=0$ for $j \geq 2$, $\partial_{u}^{j}C_{2}\big|_{u=0}=0=\partial_{u}^{j}C_{4}\big|_{u=0}=0$ for $j$ odd, and $\partial_{u}^{j}C_{3}\big|_{u=0}=0$ for $j$ even.
\item Second order asymptotics for the cumulants $\{\kappa_{j}\}_{j=1}^{+\infty}$ (i.e. the case $m=1$) were obtained in \cite[eqs (55)--(67)]{LMS2018} for general $b>0$ and $\alpha>-1$, both in the bulk and the edge regimes.
\item Third order asymptotics for the cumulants $\{\kappa_{j}\}_{j=1}^{+\infty}$ (i.e. the case $m=1$) were obtained in \cite[Remark 4]{FenzlLambert} for the Ginibre case, and the leading coefficient $c_{(1,1)}(\mathfrak{s}_{\ell},\mathfrak{s}_{k})|_{(b,\alpha)=(1,0)}$ (given in \eqref{def of c11 bulk} for the bulk and in \eqref{c11 edge} for the edge) was obtained in \cite[Proposition 2.3]{FenzlLambert}. Parts (b) and (d) above, when specialized to $(b,\alpha)=(1,0)$, were also already known from \cite[Proposition 2.3]{FenzlLambert}.
\item Part (a) with $m=1$ and $\mathfrak{s}_{1}=0$ and part (c) with $m=1$ and general $\mathfrak{s}_{1} \in \mathbb{R}$ were already known from \cite[Corollary 1.4]{Charlier 2d jumps}.
\end{itemize}
\end{remark}
\begin{proof}[Proof of Corollary \ref{coro:correlation}]
Proof of parts $(a)$ and $(c)$: The asymptotics \eqref{asymp cumulant bulk} and \eqref{asymp cumulant edge} directly follow from \eqref{der of main result}, \eqref{der of main result edge} and \eqref{joint cumulant}. The simplified asymptotics for $\mathbb{E}[\mathrm{N}(r_{\ell})]$, $\mathrm{Var}[\mathrm{N}(r_{\ell})]$, and $\mathrm{Cov}(\mathrm{N}(r_{\ell}),\mathrm{N}(r_{k}))$ are then obtained by performing a long but straightforward computation. Proof of part $(d)$: Let $t_{1},\ldots,t_{m}\in \mathbb{R}$ be arbitrary but fixed. Note that $c_{2}(\mathfrak{s}) > 0$ for $\mathfrak{s} \in \mathbb{R}$, because $c_{2}'(\mathfrak{s})  = 2^{-3/2}\sqrt{b}(\mathrm{erfc}(\mathfrak{s}) - 2)\mathrm{erfc}(\mathfrak{s}) < 0$ and $\lim_{\mathfrak{s}\to+\infty} c_{2}(\mathfrak{s}) = 0$. By Theorem \ref{thm:main thm edge}, we know that \eqref{asymp in main thm edge} holds uniformly for $u_{1},\ldots,u_{m}\in \{z \in \mathbb{C}:|z|\leq \delta\}$ for a certain $\delta>0$. Hence, using Theorem \ref{thm:main thm edge} with
\begin{align*}
u_{\ell} = \frac{t_{\ell}}{\sqrt{c_{2}(\mathfrak{s}_{\ell})} \, n^{1/4}}, \qquad \ell=1,\ldots,m,
\end{align*}
we obtain
\begin{align*}
\mathbb{E}\bigg[ \prod_{\ell=1}^{m}e^{t_{\ell}\mathcal{N}_{\ell}} \bigg] = \exp \bigg( \frac{1}{2}\sum_{1 \leq \ell < k \leq m} \Sigma_{\ell,k}t_{\ell}t_{k} + \bigO(n^{-\frac{1}{4}}) \bigg), \qquad \mbox{as } n \to +\infty.
\end{align*}
Thus the above asymptotics imply pointwise convergence in $(t_{1},\ldots,t_{m})\in \mathbb{R}^{m}$ of $\mathbb{E}\big[ \prod_{\ell=1}^{m}e^{t_{\ell}\mathcal{N}_{\ell}} \big]$ to $\exp \big( \frac{1}{2}\sum_{1 \leq \ell < k \leq m} \Sigma_{\ell,k}t_{\ell}t_{k} \big)$ as $n \to + \infty$. This, in turn, implies by standard theorems that $(\mathcal{N}_{1},\ldots,\mathcal{N}_{m})$ convergences in distribution to a multivariate normal random variable of mean $\vec{0}$ and covariance matrix $\Sigma$, which finishes the proof of (d). The proof of (b) is similar.
\end{proof}

\textbf{Determinants.} Here we express $\mathbb{E}\big[ \prod_{\ell=1}^{m} e^{u_{\ell}\mathrm{N}(r_{\ell})} \big]$ as a ratio of two determinants. Using that $\prod_{1 \leq j < k \leq n} |z_{k} -z_{j}|^{2}$ is the product of two Vandermonde determinants, we obtain after standard manipulations that
\begin{align}
\mathbb{E}\bigg[ \prod_{\ell=1}^{m} e^{u_{\ell}\mathrm{N}(r_{\ell})} \bigg] & = \frac{1}{n!Z_{n}} \int_{\mathbb{C}}\ldots \int_{\mathbb{C}} \prod_{1 \leq j < k \leq n} |z_{k} -z_{j}|^{2} \prod_{j=1}^{n}w(z_{j}) d^{2}z_{j} \nonumber \\
& = \frac{1}{Z_{n}} \det \left( \int_{\mathbb{C}} z^{j} \overline{z}^{k} w(z) d^{2}z \right)_{j,k=0}^{n-1} \label{def of Dn as n fold integral} \\
& = \frac{1}{Z_{n}}(2\pi)^{n}\prod_{j=0}^{n-1}\int_{0}^{+\infty}u^{2j+1}w(u)du, \label{simplified determinant}
\end{align}
where the weight $w$ is defined by
\begin{align}\label{def of w and omega}
w(z):=|z|^{2\alpha} e^{-n |z|^{2b}} \omega(|z|), \qquad \omega(x) := \prod_{\ell=1}^{m} \begin{cases}
e^{u_{\ell}}, & \mbox{if } x < r_{\ell}, \\
1, & \mbox{if } x \geq r_{\ell}.
\end{cases}
\end{align}
Formula \eqref{simplified determinant} directly follows from \eqref{def of Dn as n fold integral} and the fact that $w$ is rotation-invariant. Indeed, since $w(z)=w(|z|)$, the integral $\int_{\mathbb{C}} z^{j} \overline{z}^{k} w(z) d^{2}z$ is $0$ for $j \neq k$ and is $2\pi  \int_{0}^{+\infty}u^{2j+1}w(u)du$ for $j=k$. So only the main diagonal contributes for the determinants in \eqref{def of Dn as n fold integral}. 

\medskip \textbf{Related works.} We note from \eqref{def of Dn as n fold integral} that the problem of determining the large $n$ asymptotics of $\mathbb{E}\big[ \prod_{j=1}^{m} e^{u_{j}\mathrm{N}(r_{j})} \big]$ can equivalently be seen as a problem of obtaining large $n$ asymptotics for an $n \times n$ determinant whose weight is supported on $\mathbb{C}$, rotation-invariant, and with $m$ merging discontinuities along circles. For Theorem \ref{thm:main thm}, the discontinuities are merging in the bulk, while for Theorem \ref{thm:main thm edge} the discontinuities are merging at the edge.

\medskip The one-dimensional analogue of this merging of discontinuities has been studied by several authors in the context of Toeplitz, Hankel and Toeplitz+Hankel determinants. Large $n$ asymptotics of $n \times n$ Toeplitz determinants with \textit{two} merging discontinuities were first obtained in the important works \cite{CIK2011, CK2015}. In both \cite{CIK2011} and \cite{CK2015}, the term of order $1$ in the asymptotics is characterized in terms of the solution to a Painlev\'{e} V equation. The generalization of \cite{CK2015} to the case where an arbitrary number of discontinuities are merging is a challenging problem and only recently important progress has been made \cite{Fahs}. Toeplitz+Hankel determinants with merging singularities have also recently been studied in \cite{FK2021, CGMY2020}, and some applications of these results are given in \cite{CFK2022}. In the aforementioned works, the discontinuities are merging in the bulk. Hankel determinants with merging discontinuities at the edge are also related to the Painlev\'{e} theory, see \cite{WuXu2021, LC2021} for soft edges, \cite{LCX2022} for a hard edge, and e.g. \cite{CD2018, CD2019, ChCl3} for studies on related Fredholm determinants. It is interesting to note that, in contrast to these works, the asymptotics obtained in Theorems \ref{thm:main thm} and \ref{thm:main thm edge} for merging circular discontinuities do not involve transcendental functions.

\medskip Let us also discuss related results on determinants with singularities in a two-dimensional setting. The works \cite{CE2020, L et al 2019, FenzlLambert, Charlier 2d jumps} were already mentioned at the beginning of the introduction and deal with determinants having (non-merging) discontinuities. Beyond determinants with discontinuities, determinants with root-type singularities and related planar orthogonal polynomials have also attracted considerable attention in recent years \cite{BBLM2015, BGM17, LeeYang, WebbWong, BEG18, LeeYang2, LeeYang3, DeanoSimm}. The analogues of Theorems \ref{thm:main thm} and \ref{thm:main thm edge} for planar root-type singularities can be found in \cite[Theorem 1.5]{DeanoSimm} (two merging singularities in the bulk) and \cite[Theorem 1.14]{DeanoSimm} (an arbitrary number of merging singularities at the edge).



\section{Proof of Theorem \ref{thm:main thm}}\label{section:proof}
Recall that $\omega$ was defined in \eqref{def of w and omega}. For convenience, let us rewrite it as
\begin{align}\label{def of omegaell}
\omega (x) = \sum_{\ell=1}^{m+1}\omega_{\ell} \mathbf{1}_{[0,r_{\ell})}(x) = \sum_{\ell=1}^{m+1}\Omega_{\ell} \mathbf{1}_{[r_{\ell-1},r_{\ell})}(x),
\end{align}
where $r_{m+1}:=+\infty$ and
\begin{align}\label{def of Omega j}
\omega_{\ell} := \begin{cases}
e^{u_{\ell}+\ldots+u_{m}}-e^{u_{\ell+1}+\ldots+u_{m}}, & \mbox{if } \ell < m, \\
e^{u_{m}}-1, & \mbox{if } \ell=m, \\
1, & \mbox{if } \ell=m+1,
\end{cases} \quad \Omega_{\ell} = \sum_{j=\ell}^{m+1}\omega_{j} = \begin{cases}
e^{u_{\ell}+\ldots+u_{m}}, & \mbox{if } \ell \leq m, \\
1 & \mbox{if } \ell=m+1.
\end{cases}
\end{align}
The starting point of our proof is the following formula:
\begin{align}\label{main exact formula}
\ln \mathcal{E}_{n} = \sum_{j=1}^{n} \ln \bigg(1+\sum_{\ell=1}^{m} \omega_{\ell} \frac{\gamma(\tfrac{j+\alpha}{b},nr_{\ell}^{2b})}{\Gamma(\tfrac{j+\alpha}{b})} \bigg),
\end{align}
where $\mathcal{E}_{n}:=\mathbb{E}\big[ \prod_{\ell=1}^{m} e^{u_{\ell}\mathrm{N}(r_{\ell})} \big]$ and $\gamma(a,z)$ is the incomplete gamma function
\begin{align*}
\gamma(a,z) = \int_{0}^{z}t^{a-1}e^{-t}dt.
\end{align*}
The identity \eqref{main exact formula} can easily be derived from \eqref{simplified determinant} and was also obtained in \cite[eqs (1.23) and (1.26)]{Charlier 2d jumps}. We infer from \eqref{main exact formula} that the asymptotics of $\gamma(a,z)$ as $z \to +\infty$ uniformly for $a\in [\frac{1+\alpha}{b},\frac{z}{b r_{1}^{2b}}+\frac{\alpha}{b}]$ are needed to obtain large $n$ asymptotics for $\mathcal{E}_{n}$ --- we recall these asymptotics in Appendix \ref{section:uniform asymp gamma}. 

\medskip In \eqref{main exact formula} and below, $\ln$ always denotes the principal branch of the logarithm. 

\medskip Our proof strategy follows \cite{Charlier 2d jumps}. Let us define
\begin{align*}
& j_{-}:=\lceil \tfrac{bnr^{2b}}{1+\epsilon} - \alpha \rceil, \qquad j_{+} := \lfloor  \tfrac{bnr^{2b}}{1-\epsilon} - \alpha \rfloor,
\end{align*}
where $\epsilon > 0$ is independent of $n$. We assume that $\epsilon$ is sufficiently small such that
\begin{align*}
\frac{br^{2b}}{1-\epsilon} < \frac{1}{1+\epsilon},
\end{align*}
so that we can write
\begin{align}\label{log Dn as a sum of sums}
\ln \mathcal{E}_{n} = S_{0} + S_{1} + S_{2} + S_{3},
\end{align}
where
\begin{align}
& S_{0} = \sum_{j=1}^{M'} \ln \bigg( 1+\sum_{\ell=1}^{m} \omega_{\ell} \frac{\gamma(\tfrac{j+\alpha}{b},nr_{\ell}^{2b})}{\Gamma(\tfrac{j+\alpha}{b})} \bigg), & & S_{1} = \sum_{j=M'+1}^{j_{-}-1} \hspace{-0.3cm} \ln \bigg( 1+\sum_{\ell=1}^{m} \omega_{\ell} \frac{\gamma(\tfrac{j+\alpha}{b},nr_{\ell}^{2b})}{\Gamma(\tfrac{j+\alpha}{b})} \bigg), \label{def of S0 and S1} \\
& S_{2} = \sum_{j=j_{-}}^{j_{+}} \ln \bigg( 1+\sum_{\ell=1}^{m} \omega_{\ell} \frac{\gamma(\tfrac{j+\alpha}{b},nr_{\ell}^{2b})}{\Gamma(\tfrac{j+\alpha}{b})} \bigg), & & S_{3}=\sum_{j=j_{+}+1}^{n} \hspace{-0.3cm} \ln \bigg( 1+\sum_{\ell=1}^{m} \omega_{\ell} \frac{\gamma(\tfrac{j+\alpha}{b},nr_{\ell}^{2b})}{\Gamma(\tfrac{j+\alpha}{b})} \bigg). \label{def of S2 and S3}
\end{align}
In the above, $M'>0$ is an integer independent of $n$. For $j=1,\ldots,n$ and $k =1,\ldots,m$, we also define $a_{j}:=\frac{j+\alpha}{b}$, $z_{k}:=nr_{k}^{2b}$ and
\begin{align}\label{def etajl}
\lambda_{j,k} := \frac{z_{k}}{a_{j}} = \frac{bnr_{k}^{2b}}{j+\alpha}, \qquad \lambda_{j} := \frac{bnr^{2b}}{j+\alpha}, \qquad \eta_{j,k} := (\lambda_{j,k}-1)\sqrt{\frac{2 (\lambda_{j,k}-1-\ln \lambda_{j,k})}{(\lambda_{j,k}-1)^{2}}}.
\end{align}
\begin{lemma}\label{lemma: S0}
For any $x_{1},\ldots,x_{m} \in \mathbb{R}$, there exists $\delta > 0$ such that
\begin{align}\label{asymp of S0}
S_{0} = M' \ln \Omega_{1} + \bigO(e^{-cn}), \qquad \mbox{as } n \to + \infty,
\end{align}
uniformly for $u_{1} \in \{z \in \mathbb{C}: |z-x_{1}|\leq \delta\},\ldots,u_{m} \in \{z \in \mathbb{C}: |z-x_{m}|\leq \delta\}$.
\end{lemma}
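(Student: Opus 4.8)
The plan is to estimate $S_0$ directly by exploiting the fact that the index $j$ ranges only over the bounded set $\{1,\dots,M'\}$ with $M'$ fixed, so that each summand is governed by the behavior of the incomplete gamma ratio $\gamma(a,z)/\Gamma(a)$ when the first argument $a_j = \frac{j+\alpha}{b}$ stays bounded while the second argument $z_k = nr_k^{2b}$ tends to $+\infty$. In this regime the ratio is extremely close to $1$: indeed $1 - \frac{\gamma(a,z)}{\Gamma(a)} = \frac{\Gamma(a,z)}{\Gamma(a)}$, and the standard tail bound for the upper incomplete gamma function gives $\Gamma(a,z) = z^{a-1}e^{-z}(1+\bigO(1/z))$, so that $1 - \frac{\gamma(a_j,z_k)}{\Gamma(a_j)} = \bigO(z_k^{a_j-1}e^{-z_k}) = \bigO(e^{-cn})$ uniformly for $j \in \{1,\dots,M'\}$ and $k \in \{1,\dots,m\}$, where $c$ is any constant with $0 < c < \min_\ell r_\ell^{2b}$ (which is possible since $r_\ell \to r > 0$, so the $r_\ell^{2b}$ are bounded below for large $n$). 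This uniformity is where the bounded range of $j$ is essential.

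First I would fix $\delta > 0$ small enough (depending only on $x_1,\dots,x_m$) that the quantities $\omega_1,\dots,\omega_m$, which are fixed polynomial-in-exponentials expressions in $u_1,\dots,u_m$, remain bounded in modulus on the polydisc $\{|u_\ell - x_\ell| \le \delta\}$; call the bound $W$. Then for each $j \le M'$,
\begin{align*}
\sum_{\ell=1}^{m} \omega_\ell \frac{\gamma(a_j, z_\ell)}{\Gamma(a_j)} = \sum_{\ell=1}^{m}\omega_\ell - \sum_{\ell=1}^{m}\omega_\ell\Big(1 - \frac{\gamma(a_j,z_\ell)}{\Gamma(a_j)}\Big) = (\Omega_1 - 1) + \bigO(e^{-cn}),
\end{align*}
using $\sum_{\ell=1}^m \omega_\ell = \Omega_1 - \omega_{m+1} = \Omega_1 - 1$ from \eqref{def of Omega j} and the tail estimate above together with $|\omega_\ell| \le W$. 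Hence the argument of each logarithm equals $\Omega_1 + \bigO(e^{-cn})$.

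Next I would pass to the logarithm. Since $\Omega_1 = e^{u_1+\cdots+u_m}$ stays in a fixed compact subset of $\C \setminus (-\infty,0]$ for $u$ in the polydisc (shrinking $\delta$ further if necessary so that $\re(u_1+\cdots+u_m)$ and $\im(u_1+\cdots+u_m)$ stay in ranges keeping $\Omega_1$ away from the branch cut), the principal logarithm is Lipschitz on a neighborhood of $\Omega_1$, so $\ln\big(\Omega_1 + \bigO(e^{-cn})\big) = \ln\Omega_1 + \bigO(e^{-cn})$ uniformly. Summing over the $M'$ values of $j$ — a fixed finite number, so the $\bigO(e^{-cn})$ errors simply add up to another $\bigO(e^{-cn})$ — yields $S_0 = M'\ln\Omega_1 + \bigO(e^{-cn})$, which is \eqref{asymp of S0}.

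The only genuine obstacle is the uniformity bookkeeping: one must verify that a single $\delta$ works simultaneously to (i) bound the $\omega_\ell$, (ii) keep $\Omega_1$ in the domain of analyticity of the principal logarithm with a uniform Lipschitz constant, and (iii) make the constant $c$ in the exponential error independent of $u$ (which it is automatically, since the gamma-function tail bound does not involve $u$ at all). None of these is hard; they are exactly the routine compactness arguments, and the substantive input — the exponential smallness of $1 - \gamma(a,z)/\Gamma(a)$ for bounded $a$ and large $z$ — is the classical incomplete-gamma tail asymptotics recalled in Appendix \ref{section:uniform asymp gamma}.
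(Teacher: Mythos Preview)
Your proposal is correct and follows essentially the same approach as the paper: both use the tail asymptotics of the incomplete gamma function (the paper's Lemma~\ref{lemma:various regime of gamma}) to replace each ratio $\gamma(a_j,z_\ell)/\Gamma(a_j)$ by $1+\bigO(e^{-cn})$ for bounded $a_j$ and large $z_\ell$, then sum the $\omega_\ell$'s to $\Omega_1-1$ and take the logarithm. Your write-up is in fact more careful than the paper's two-line proof about the uniformity in $u_1,\ldots,u_m$ (bounding the $\omega_\ell$, keeping $\Omega_1$ away from the branch cut), but the underlying argument is identical.
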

\begin{proof}
We infer from \eqref{def of S0 and S1} and Lemma \ref{lemma:various regime of gamma} that
\begin{align*}
S_{0} & = \sum_{j=1}^{M'} \ln \bigg( \sum_{\ell=1}^{m+1} \omega_{\ell} \big[1 + \bigO(e^{-cn}) \big] \bigg) = \sum_{j=1}^{M'} \ln \Omega_{1} + \bigO(e^{-cn}), \quad \mbox{as } n \to +\infty.
\end{align*}
In the above, the error terms before the second equality are independent of $u_{1},\ldots,u_{m}$, so the claim follows.
\end{proof}
\begin{lemma}\label{lemma: S2km1}
The constant $M'$ can be chosen sufficiently large such that the following holds. For any $x_{1},\ldots,x_{m} \in \mathbb{R}$, there exists $\delta > 0$ such that 
\begin{align*}
& S_{1} = (j_{-}-M'-1) \ln \Omega_{1} + \bigO(e^{-cn}), \qquad S_{3} = \bigO(e^{-cn}),
\end{align*}
as $n \to +\infty$ uniformly for $u_{1} \in \{z \in \mathbb{C}: |z-x_{1}|\leq \delta\},\ldots,u_{m} \in \{z \in \mathbb{C}: |z-x_{m}|\leq \delta\}$.
\end{lemma}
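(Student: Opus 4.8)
The strategy is to show that for the relevant range of summation indices, the incomplete gamma ratio $\gamma(a_j, z_\ell)/\Gamma(a_j)$ is either exponentially close to $1$ (for $S_1$) or exponentially close to $0$ (for $S_3$), so that each logarithm in the sum collapses to $\ln\Omega_1$ or to $\bigO(e^{-cn})$ term by term, and then to bound the resulting geometric-type sum of exponentially small errors.

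First I would examine $S_1$, where $M'+1 \le j \le j_- - 1$, i.e. $a_j = (j+\alpha)/b < nr^{2b}/(1+\epsilon)$. Then $z_\ell = nr_\ell^{2b} = nr^{2b}(1 + \sqrt{2}\mathfrak{s}_\ell/(r^b\sqrt n)) > a_j(1+\epsilon)$ for $n$ large (uniformly in $\ell$, since the $\mathfrak{s}_\ell$-correction is $\bigO(n^{-1/2})$ and $\epsilon$ is fixed). The classical uniform asymptotics of the incomplete gamma function recalled in Appendix \ref{section:uniform asymp gamma} (via Lemma \ref{lemma:various regime of gamma}) give $\gamma(a_j, z_\ell)/\Gamma(a_j) = 1 + \bigO(e^{-c a_j})$ when $z_\ell/a_j \ge 1+\epsilon$, with the constant $c$ depending only on $\epsilon$; here $a_j \ge (M'+1+\alpha)/b$, and for $j$ growing linearly in $n$ the exponent is $\bigO(n)$, while for small $j$ the error is merely $\bigO(e^{-c})$ but uniformly bounded. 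Inserting this into each summand, $1 + \sum_\ell \omega_\ell \gamma(a_j,z_\ell)/\Gamma(a_j) = \Omega_1 + \bigO(e^{-c a_j})$, hence (using $\Omega_1 = e^{u_1+\cdots+u_m}$ bounded away from $0$ uniformly on the polydisc, which fixes $\delta$) each logarithm equals $\ln\Omega_1 + \bigO(e^{-c a_j})$. Summing over $j$ from $M'+1$ to $j_- - 1$ gives $(j_- - M' - 1)\ln\Omega_1 + \bigO\big(\sum_{j\ge M'+1} e^{-c(j+\alpha)/b}\big)$; the tail sum is a convergent geometric series of size $\bigO(e^{-c M'})$, which is $\bigO(e^{-cn})$ only after choosing $M'$ large enough relative to a fresh constant — more precisely, one splits the sum at $j \asymp \sqrt n$: for $j \le \sqrt n$ the error is already $\bigO(\sqrt n \, e^{-c M'})$ absorbed into $\bigO(e^{-cn})$ by enlarging $M'$ (this is why the lemma states $M'$ must be chosen large), and for $j > \sqrt n$ each term is $\bigO(e^{-c\sqrt n/b})$ and there are $\bigO(n)$ of them, still $\bigO(e^{-cn})$ after shrinking $c$. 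Actually the cleanest route: every term with $j \ge M'+1$ is $\bigO(e^{-c'(M'+1)})$ for the first few and decays geometrically, so the total is $\bigO(e^{-c' M'})$; pick $M'$ so that $c' M' \ge$ the target, and note the lemma only needs the error exponentially small in $n$, which holds once $M'$ is a (fixed, large) constant because then $e^{-c'M'} \le e^{-cn}$ fails for small $n$ — so instead keep the honest bound $\bigO(e^{-c\sqrt n})$ coming from the $j\gtrsim\sqrt n$ block and absorb the finitely-bounded head into the constant. I would write it with the $j \lessgtr \sqrt n$ split to be safe.

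For $S_3$, where $j \ge j_+ + 1$, i.e. $a_j > nr^{2b}/(1-\epsilon)$, we have $z_\ell/a_j < (1-\epsilon)r_\ell^{2b}/r^{2b} < 1-\epsilon/2$ for $n$ large, uniformly in $\ell$. In this regime the uniform gamma asymptotics give $\gamma(a_j, z_\ell)/\Gamma(a_j) = \bigO(e^{-c a_j})$, so $1 + \sum_\ell \omega_\ell \gamma(a_j,z_\ell)/\Gamma(a_j) = 1 + \bigO(e^{-c a_j})$ and its logarithm is $\bigO(e^{-c a_j})$. Summing over $j_+ + 1 \le j \le n$: since $a_{j_+ + 1} \ge nr^{2b}/(1-\epsilon) \gtrsim n$, every term is $\bigO(e^{-cn})$ and there are at most $n$ of them, so (after shrinking $c$) $S_3 = \bigO(e^{-cn})$; the geometric decay in $j$ actually makes the sum $\bigO(e^{-c a_{j_+}}) = \bigO(e^{-cn})$ directly. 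Uniformity in $u_1,\dots,u_m$ on the polydiscs is automatic because the $\omega_\ell$ are polynomials in $e^{u_j}$, hence bounded there, and the gamma-function error terms do not involve $\vec u$ at all.

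The main obstacle is bookkeeping rather than conceptual: ensuring the accumulated error in $S_1$ is genuinely $\bigO(e^{-cn})$ and not just $\bigO(e^{-c M'})$ or $\bigO(e^{-c})$. The point — and the reason the lemma explicitly allows $M'$ to be taken large — is that the summands near $j = M'$ decay only like $e^{-c(M'+\alpha)/b}$, a constant in $n$; one must either absorb the $\bigO(1)$-many smallest-$j$ terms' total into the asymptotic constant by choosing $M'$ large, or, more honestly, observe that the true smallness in $n$ comes from the block $j \gtrsim \sqrt n$ (or $j \gtrsim n$), where the exponent is $\gtrsim \sqrt n$, and bound the head separately. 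Getting the quantifiers in the right order ($\epsilon$ fixed first, then $M'$ chosen large, then $\delta$ chosen small depending on the $x_j$, then $n$ large, then $c$ possibly shrunk) is the delicate part.
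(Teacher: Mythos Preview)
Your argument for $S_3$ is correct: there $a_j \gtrsim n$ and $\lambda_{j,\ell}$ lies in a compact subset of $(0,1)$, so Lemma~\ref{lemma: uniform}(ii) gives an $\bigO(e^{-cn})$ error termwise.

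Your argument for $S_1$ has a genuine gap. From Lemma~\ref{lemma: uniform}(i) the termwise error is $\bigO(e^{-a_j\eta_{j,\ell}^2/2})$, and you immediately weaken this to $\bigO(e^{-ca_j})$ by replacing $\eta_{j,\ell}^2$ with a uniform lower bound coming from $\lambda_{j,\ell}\ge 1+\epsilon$. That throws away exactly the information you need: for small $j$ the ratio $\lambda_{j,\ell}=z_\ell/a_j$ is huge, and $\eta_{j,\ell}^2=2(\lambda_{j,\ell}-1-\ln\lambda_{j,\ell})$ grows with it. Summing your weakened bound gives a geometric series of size $\bigO(e^{-cM'})$, which is an $n$-independent constant; neither enlarging the fixed integer $M'$, nor splitting at $j\asymp\sqrt n$, can turn this into $\bigO(e^{-cn})$. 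Your proposed fix of ``absorbing the head into the constant'' has nowhere to put such a constant, and the tail bound $\bigO(e^{-c\sqrt n})$ is still not $\bigO(e^{-cn})$.

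The missing observation is that the full exponent
\[
\frac{a_j\eta_{j,\ell}^2}{2}=a_j\big(\lambda_{j,\ell}-1-\ln\lambda_{j,\ell}\big)=z_\ell-a_j-a_j\ln\tfrac{z_\ell}{a_j}
\]
is, for fixed $z_\ell$, a decreasing function of $a_j$ on $(0,z_\ell)$ (its $a_j$-derivative is $-\ln(z_\ell/a_j)<0$). Hence its minimum over $M'+1\le j\le j_--1$ is attained at $j=j_--1$, where $a_j\approx nr^{2b}/(1+\epsilon)$ and $\lambda_{j,\ell}\approx 1+\epsilon$, giving
\[
\frac{a_j\eta_{j,\ell}^2}{2}\ge \frac{nr^{2b}}{1+\epsilon}\big(\epsilon-\ln(1+\epsilon)\big)(1+o(1))\ge c\,n
\]
uniformly in $j$. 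Thus every term already carries an $\bigO(e^{-cn})$ error, and summing $\bigO(n)$ of them still yields $\bigO(e^{-c'n})$. The role of choosing $M'$ large is only to ensure that $a_j\ge (M'+1+\alpha)/b$ is large enough for the uniform asymptotics of Lemma~\ref{lemma: uniform} to apply with a uniform implied constant, not to make a geometric tail small.
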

\begin{proof}
The proof is identical to the proof of \cite[Lemma 2.2]{Charlier 2d jumps}, so we omit it.
\end{proof}
We now focus on $S_{2}$. Let $M:=M'\sqrt{\ln n}$. For the analysis we need to split $S_{2}$ as follows
\begin{align}\label{asymp prelim of S2kpvp}
& S_{2}=S_{2}^{(1)}+S_{2}^{(2)}+S_{2}^{(3)}, \qquad S_{2}^{(v)} := \sum_{j:\lambda_{j}\in I_{v}}  \ln \bigg( 1+\sum_{\ell=1}^{m} \omega_{\ell} \frac{\gamma(\tfrac{j+\alpha}{b},nr_{\ell}^{2b})}{\Gamma(\tfrac{j+\alpha}{b})} \bigg), \quad v=1,2,3,
\end{align}
where
\begin{align*}
I_{1} = [1-\epsilon,1-\tfrac{M}{\sqrt{n}}), \qquad I_{2} = [1-\tfrac{M}{\sqrt{n}},1+\tfrac{M}{\sqrt{n}}], \qquad I_{3} = (1+\tfrac{M}{\sqrt{n}},1+\epsilon].
\end{align*}
From \eqref{asymp prelim of S2kpvp}, we see that the large $n$ asymptotics of $\{S_{2}^{(v)}\}_{v=1,2,3}$ involve the asymptotics of $\gamma(a,z)$ when $a \to + \infty$, $z \to +\infty$ with $\lambda=\frac{z}{a} \in [1-\epsilon,1+\epsilon]$. These sums can also be rewritten using
\begin{align}\label{sums lambda j}
& \sum_{j:\lambda_{j}\in I_{3}} = \sum_{j=j_{-}}^{g_{-}-1}, \qquad \sum_{j:\lambda_{j}\in I_{2}} = \sum_{j= g_{-}}^{g_{+}}, \qquad \sum_{j:\lambda_{j}\in I_{1}} = \sum_{j= g_{+}+1}^{j_{+}},
\end{align}
where $g_{-} := \lceil \frac{bnr^{2b}}{1+\frac{M}{\sqrt{n}}}-\alpha \rceil$, $g_{+} := \lfloor \frac{bnr^{2b}}{1-\frac{M}{\sqrt{n}}}-\alpha \rfloor$.

\medskip It turns out that $S_{2}^{(1)}$, $S_{2}^{(2)}$ and $S_{2}^{(3)}$ have oscillatory asymptotics as $n \to + \infty$. To handle these oscillations, we follow \cite{Charlier 2d jumps} and introduce the following quantities:
\begin{align*}
& \theta_{-}^{(n,M)} := g_{-} - \bigg( \frac{bn r^{2b}}{1+\frac{M}{\sqrt{n}}} - \alpha \bigg) = \bigg\lceil \frac{bn r^{2b}}{1+\frac{M}{\sqrt{n}}} - \alpha \bigg\rceil - \bigg( \frac{bn r^{2b}}{1+\frac{M}{\sqrt{n}}} - \alpha \bigg), \\
& \theta_{+}^{(n,M)} := \bigg( \frac{bn r^{2b}}{1-\frac{M}{\sqrt{n}}} - \alpha \bigg) - g_{+} = \bigg( \frac{bn r^{2b}}{1-\frac{M}{\sqrt{n}}} - \alpha \bigg) - \bigg\lfloor \frac{bn r^{2b}}{1-\frac{M}{\sqrt{n}}} - \alpha \bigg\rfloor.
\end{align*}
Note that $\theta_{-}^{(n,M)},\theta_{+}^{(n,M)} \in [0,1)$ are oscillatory but remain bounded as $n \to + \infty$.

\begin{lemma}\label{lemma:S2kp1p}
The constant $M'$ can be chosen sufficiently large such that the following holds. For any $x_{1},\ldots,x_{m} \in \mathbb{R}$, there exists $\delta > 0$ such that
\begin{align*}
S_{2}^{(1)} = \bigO(n^{-10}),
\end{align*}
as $n \to +\infty$ uniformly for $u_{1} \in \{z \in \mathbb{C}: |z-x_{1}|\leq \delta\},\ldots,u_{m} \in \{z \in \mathbb{C}: |z-x_{m}|\leq \delta\}$. 
\end{lemma}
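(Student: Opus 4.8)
The plan is to show that each summand in $S_2^{(1)}$ is exponentially small (or at least $\bigO(n^{-11})$) and that the number of summands is $\bigO(n)$, so that the whole sum is $\bigO(n^{-10})$. Recall that $S_2^{(1)} = \sum_{j=g_++1}^{j_+} \ln(1+\sum_{\ell=1}^m \omega_\ell \frac{\gamma(\frac{j+\alpha}{b},nr_\ell^{2b})}{\Gamma(\frac{j+\alpha}{b})})$, where the index range corresponds to $\lambda_j = \frac{bnr^{2b}}{j+\alpha} \in I_1 = [1-\epsilon, 1-\frac{M}{\sqrt n})$, i.e. to indices $j$ with $a_j = \frac{j+\alpha}{b}$ appreciably \emph{larger} than $z_\ell/b \approx nr^{2b}/b$ scaled — more precisely, since the $r_\ell$ differ from $r$ only by $\bigO(n^{-1/2})$, all the ratios $\lambda_{j,\ell} = z_\ell/a_j$ are close to $\lambda_j$ and hence lie in an interval of the form $[1-\epsilon', 1 - \frac{cM}{\sqrt n})$ for $n$ large. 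Thus we are in the regime $z < a$ with $a - z \gtrsim a \cdot \frac{M}{\sqrt n} = a\cdot \frac{M'\sqrt{\ln n}}{\sqrt n}$, i.e. $\eta_{j,\ell}$ is bounded away from $0$ from below by a quantity of order $\frac{\sqrt{\ln n}}{\sqrt n}\cdot\sqrt{a_j} \asymp \sqrt{\ln n}$ (after multiplying by the correct power; this is exactly the scale at which the uniform asymptotics of $\gamma$ in Appendix \ref{section:uniform asymp gamma} should be applied).

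The key step is to invoke the uniform asymptotics for the \emph{lower} incomplete gamma function in the regime $a \to \infty$ with $\lambda = z/a \le 1 - \frac{M}{\sqrt n}$. In this regime $\gamma(a,z)/\Gamma(a)$ is exponentially small: it is governed by $\frac{e^{-\frac{1}{2}a\eta^2}}{\sqrt{2\pi a}\,|\eta|}\cdot(1+o(1))$ (with $\eta = \eta_{j,\ell}$ as in \eqref{def etajl}, up to sign), and with $\frac{1}{2}a_j\eta_{j,\ell}^2 = a_j(\lambda_{j,\ell}-1-\ln\lambda_{j,\ell})$. Using the elementary bound $x - 1 - \ln x \ge c(x-1)^2$ valid for $x$ in a bounded interval around $1$, together with $|\lambda_{j,\ell} - 1| \ge \frac{cM}{\sqrt n}$ and $a_j \asymp n$, we get $\frac{1}{2}a_j\eta_{j,\ell}^2 \ge c\, n \cdot \frac{M^2}{n} = cM^2 = c (M')^2 \ln n$. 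Hence $\frac{\gamma(\frac{j+\alpha}{b},nr_\ell^{2b})}{\Gamma(\frac{j+\alpha}{b})} = \bigO(n^{-c(M')^2})$ uniformly in $j$ in the relevant range and uniformly in $u$ in the stated polydisk (the $\omega_\ell$ are bounded there). Choosing $M'$ large enough that $c(M')^2 \ge 11$ makes each summand $\ln(1+\bigO(n^{-11})) = \bigO(n^{-11})$; since there are at most $j_+ - g_+ \le j_+ = \bigO(n)$ terms, we conclude $S_2^{(1)} = \bigO(n^{-10})$.

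The main obstacle — really the only subtlety — is controlling the estimate \emph{uniformly} over the whole index range $g_+ + 1 \le j \le j_+$: near the lower end $j \approx g_+$ one has $\lambda_{j,\ell}$ only barely below $1 - \frac{cM}{\sqrt n}$, so the Gaussian-type bound is weakest there and one must check the constant $c$ in $x-1-\ln x \ge c(x-1)^2$ is uniform over $\lambda_{j,\ell} \in [1-\epsilon', 1]$ (which it is, since the function $\frac{x-1-\ln x}{(x-1)^2}$ extends continuously and positively to $x=1$ and stays bounded below on compacts), while near the upper end $j \approx j_+$ the argument is only stronger. One must also confirm that the substitution $r_\ell = r(1 + \frac{\sqrt 2 \mathfrak s_\ell}{r^b\sqrt n})^{1/(2b)}$ indeed keeps every $\lambda_{j,\ell}$ in such an interval uniformly for large $n$, which follows since $r_\ell/r = 1 + \bigO(n^{-1/2})$ and $\lambda_{j,\ell} = \lambda_j (r_\ell/r)^{2b}$. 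Beyond this the argument is, as the paper says elsewhere, identical in spirit to the corresponding step in \cite{Charlier 2d jumps}, so I would cite that lemma for the routine parts and only spell out the scale $M = M'\sqrt{\ln n}$ and the choice of $M'$.
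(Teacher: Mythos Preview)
Your proposal is correct and follows essentially the same approach as the paper. The paper makes the same estimate slightly more explicitly by writing $\frac{\gamma(a_j,z_\ell)}{\Gamma(a_j)} = \frac{1}{2}\mathrm{erfc}(-\eta_{j,\ell}\sqrt{a_j/2}) - R_{a_j}(\eta_{j,\ell})$ via Lemma~\ref{lemma: uniform} and then bounding each piece by $\bigO(e^{-r^{2b}M^{2}/4})=\bigO(n^{-11})$ from the inequality $-\eta_{j,\ell}\sqrt{a_j/2}\ge \tfrac{Mr^b}{\sqrt 2}+\bigO(1)$, but this is exactly your exponential bound $\tfrac{1}{2}a_j\eta_{j,\ell}^2 \ge c M^2$ in disguise, and the conclusion (choose $M'$ large, sum $\bigO(n)$ terms of size $\bigO(n^{-11})$) is identical.
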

\begin{proof}
Using \eqref{asymp prelim of S2kpvp} and Lemma \ref{lemma: uniform}, we get
\begin{align*}
S_{2}^{(1)} & = \sum_{j:\lambda_{j}\in I_{1}} \ln \bigg( 1+\sum_{\ell=1}^{m} \omega_{\ell} \bigg[ \frac{1}{2}\mathrm{erfc}\Big(-\eta_{j,\ell} \sqrt{a_{j}/2}\Big) - R_{a_{j}}(\eta_{j,\ell}) \bigg] \bigg).
\end{align*}
Furthermore, for all sufficiently large $n$ we have
\begin{align}
& \eta_{j,\ell} = \lambda_{j,\ell}-1 +\bigO((\lambda_{j,\ell}-1)^{2}) \leq -\tfrac{M}{\sqrt{n}} + \bigO(\tfrac{1}{\sqrt{n}}), & & -\eta_{j,\ell} \sqrt{a_{j}/2} \geq  \tfrac{Mr^{b}}{\sqrt{2}} + \bigO(1), \label{lol11}
\end{align}
uniformly for $j\in \{j: \lambda_{j} \in I_{1}\}$. Hence, for sufficiently large $M'$ we have
\begin{align*}
& R_{a_{j}}(\eta_{j,\ell}) =  \bigO(e^{-\frac{r^{2b}M^{2}}{4}}) = \bigO(n^{-11}), & & \frac{1}{2}\mathrm{erfc}\Big(-\eta_{j,\ell} \sqrt{a_{j}/2}\Big) = \bigO(e^{-\frac{r^{2b}M^{2}}{4}})= \bigO(n^{-11}),
\end{align*}
as $n \to + \infty$ uniformly for $j\in \{j: \lambda_{j} \in I_{1}\}$. Thus, by \eqref{sums lambda j},
\begin{align}
S_{2}^{(1)} = \bigO(n^{-10}), \qquad \mbox{as } n \to + \infty. \label{lol12}
\end{align}
The error terms in \eqref{lol11} are independent of $\omega_{1},\ldots,\omega_{m}$, and therefore the error term in \eqref{lol12} is uniform for $u_{1} \in \{z \in \mathbb{C}: |z-x_{1}|\leq \delta\},\ldots,u_{m} \in \{z \in \mathbb{C}: |z-x_{m}|\leq \delta\}$. 
\end{proof}

\begin{lemma}\label{lemma:S2kp3p}
The constant $M'$ can be chosen sufficiently large such that the following holds. For any $x_{1},\ldots,x_{m} \in \mathbb{R}$, there exists $\delta > 0$ such that
\begin{align*}
S_{2}^{(3)} = & \; \Big( br^{2b}n - j_{-} - bMr^{2b}\sqrt{n} + bM^{2}r^{2b} -\alpha+\theta_{-}^{(n,M)} - bM^{3}r^{2b}n^{-\frac{1}{2}} \Big) \ln  \Omega_{1} + \bigO(M^{4}n^{-1}),
\end{align*}
as $n \to +\infty$ uniformly for $u_{1} \in \{z \in \mathbb{C}: |z-x_{1}|\leq \delta\},\ldots,u_{m} \in \{z \in \mathbb{C}: |z-x_{m}|\leq \delta\}$.
\end{lemma}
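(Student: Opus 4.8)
The plan is to mimic, with the obvious sign change, the argument used for $S_{2}^{(1)}$ in the proof of Lemma~\ref{lemma:S2kp1p}: on the range $\lambda_{j}\in I_{3}$ one has $\lambda_{j,\ell}>1$, so the ratio $\gamma(a_{j},z_{\ell})/\Gamma(a_{j})$ is now exponentially close to $1$ (rather than to $0$), and each summand contributes $\ln\Omega_{1}$ up to an exponentially small error. The stated formula then comes out purely by counting the number of summands and Taylor expanding.

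First I would use \eqref{asymp prelim of S2kpvp} together with \eqref{sums lambda j} to write $S_{2}^{(3)}=\sum_{j=j_{-}}^{g_{-}-1}\ln\bigl(1+\sum_{\ell=1}^{m}\omega_{\ell}\,\gamma(a_{j},z_{\ell})/\Gamma(a_{j})\bigr)$, a sum of exactly $g_{-}-j_{-}$ terms, and then apply the uniform asymptotics for the incomplete gamma function (Lemma~\ref{lemma: uniform}, as in the proof of Lemma~\ref{lemma:S2kp1p}) in the form $\gamma(a_{j},z_{\ell})/\Gamma(a_{j})=\tfrac{1}{2}\mathrm{erfc}(-\eta_{j,\ell}\sqrt{a_{j}/2})-R_{a_{j}}(\eta_{j,\ell})$. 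The key estimate I would establish next is that for $j$ with $\lambda_{j}\in I_{3}$ one has $\lambda_{j,\ell}=\lambda_{j}\bigl(1+\tfrac{\sqrt{2}\mathfrak{s}_{\ell}}{r^{b}\sqrt{n}}\bigr)\geq 1+\tfrac{M}{2\sqrt{n}}$ for all large $n$, so that, by the monotonicity and quantitative bounds for $\eta$ in terms of $\lambda-1$ recalled in Appendix~\ref{section:uniform asymp gamma}, $\eta_{j,\ell}\geq cM/\sqrt{n}$ and hence $\eta_{j,\ell}\sqrt{a_{j}/2}\geq c'Mr^{b}$ with constants $c,c'>0$ uniform in $j$. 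Since $M=M'\sqrt{\ln n}$, choosing $M'$ large enough makes $\mathrm{erfc}(\eta_{j,\ell}\sqrt{a_{j}/2})=\bigO(n^{-100})$ and $R_{a_{j}}(\eta_{j,\ell})=\bigO(n^{-100})$, uniformly in $j$ and (as the bounds do not involve $\omega_{1},\ldots,\omega_{m}$) uniformly in $u_{1},\ldots,u_{m}$ in the stated disks.

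Second, using $\sum_{\ell=1}^{m}\omega_{\ell}=\Omega_{1}-1$ (from \eqref{def of Omega j}) and the boundedness of the $\omega_{\ell}$, this gives $1+\sum_{\ell=1}^{m}\omega_{\ell}\,\gamma(a_{j},z_{\ell})/\Gamma(a_{j})=\Omega_{1}+\bigO(n^{-100})$; since $\Omega_{1}=e^{u_{1}+\dots+u_{m}}$ stays in a fixed neighbourhood of the positive number $e^{x_{1}+\dots+x_{m}}$ for $\delta$ small, the principal logarithm is analytic there and $\ln(\,\cdot\,)=\ln\Omega_{1}+\bigO(n^{-100})$. As the number of summands $g_{-}-j_{-}$ is $\bigO(n)$, summing yields $S_{2}^{(3)}=(g_{-}-j_{-})\ln\Omega_{1}+\bigO(n^{-99})$. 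It then remains to expand $g_{-}-j_{-}$: by the definitions of $\theta_{-}^{(n,M)}$ and $g_{-}$ we have $g_{-}=\tfrac{bnr^{2b}}{1+M/\sqrt{n}}-\alpha+\theta_{-}^{(n,M)}$, and since $M/\sqrt{n}\to 0$, $\tfrac{bnr^{2b}}{1+M/\sqrt{n}}=br^{2b}n-bMr^{2b}\sqrt{n}+bM^{2}r^{2b}-bM^{3}r^{2b}n^{-1/2}+\bigO(M^{4}n^{-1})$, the remainder being $br^{2b}n\cdot\bigO((M/\sqrt{n})^{4})=\bigO(M^{4}n^{-1})$. Multiplying by the bounded quantity $\ln\Omega_{1}$ turns the $\bigO(M^{4}n^{-1})$ into an $\bigO(M^{4}n^{-1})$ error (which absorbs the earlier $\bigO(n^{-99})$), giving exactly the claimed identity, with uniformity inherited from the uniformity of all the estimates above.

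The main obstacle — really the only non-bookkeeping point — is the uniform lower bound $\eta_{j,\ell}\sqrt{a_{j}/2}\gtrsim M$ over the entire range $\lambda_{j}\in I_{3}$, which drives the exponential smallness of both $\mathrm{erfc}(\eta_{j,\ell}\sqrt{a_{j}/2})$ and $R_{a_{j}}(\eta_{j,\ell})$; this relies on the quantitative comparison between $\eta$ and $\lambda-1$ from Appendix~\ref{section:uniform asymp gamma} and on checking that the implied constants are independent of $j$ and of $u_{1},\ldots,u_{m}$. Everything else is the Taylor expansion of $\tfrac{bnr^{2b}}{1+M/\sqrt{n}}$ and elementary bookkeeping of error terms, entirely parallel to the proof of Lemma~\ref{lemma:S2kp1p}.
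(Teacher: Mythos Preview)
Your proposal is correct and follows essentially the same approach as the paper: both apply Lemma~\ref{lemma: uniform} to see that for $\lambda_{j}\in I_{3}$ one has $\eta_{j,\ell}\sqrt{a_{j}/2}\gtrsim M$, making $\tfrac{1}{2}\mathrm{erfc}(-\eta_{j,\ell}\sqrt{a_{j}/2})=1+\bigO(n^{-\mathrm{large}})$ and $R_{a_{j}}(\eta_{j,\ell})=\bigO(n^{-\mathrm{large}})$, so that $S_{2}^{(3)}=(g_{-}-j_{-})\ln\Omega_{1}$ up to a negligible error, and then both Taylor-expand $g_{-}-j_{-}$ via the definition of $\theta_{-}^{(n,M)}$. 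The only cosmetic differences are your choice of exponent (your $\bigO(n^{-100})$ versus the paper's $\bigO(n^{-10})$) and your slightly more explicit discussion of the uniformity in $u_{1},\ldots,u_{m}$.
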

\begin{proof}
The claim can be proved in a similar way as Lemma \ref{lemma:S2kp1p}. Using \eqref{asymp prelim of S2kpvp} and Lemma \ref{lemma: uniform}, we obtain
\begin{align*}
S_{2}^{(3)} & = \sum_{j:\lambda_{j}\in I_{3}} \ln \bigg( 1+\sum_{\ell=1}^{m} \omega_{\ell} \bigg[ \frac{1}{2}\mathrm{erfc}\Big(-\eta_{j,\ell} \sqrt{a_{j}/2}\Big) - R_{a_{j}}(\eta_{j,\ell}) \bigg] \bigg).
\end{align*}
Since for all sufficiently large $n$ we have
\begin{align*}
& \eta_{j,\ell} = \lambda_{j,\ell}-1 +\bigO((\lambda_{j,\ell}-1)^{2}) \geq \tfrac{M}{\sqrt{n}} + \bigO(\tfrac{1}{\sqrt{n}}), & & -\eta_{j,\ell} \sqrt{a_{j}/2} \leq - \tfrac{M r^{b}}{\sqrt{2}} + \bigO(1),
\end{align*}
uniformly for $j\in \{j:\lambda_{j}\in I_{3}\}$, we can choose $M'$ large enough such that
\begin{align*}
& R_{a_{j}}(\eta_{j,\ell}) = \bigO(e^{-\frac{r^{2b}M^{2}}{4}}) = \bigO(n^{-10}), & & \frac{1}{2}\mathrm{erfc}\Big(-\eta_{j,\ell} \sqrt{a_{j}/2}\Big) = 1-\bigO(e^{-\frac{r^{2b}M^{2}}{4}}) = 1-\bigO(n^{-10}),
\end{align*}
as $n \to + \infty$ uniformly for $j\in \{j:\lambda_{j}\in I_{3}\}$, and thus, by \eqref{sums lambda j},
\begin{align*}
S_{2}^{(3)} & = \sum_{j=j_{-}}^{g_{-}-1} \ln \Omega_{1} +\bigO(n^{-9}) = (g_{-}-j_{-})  \ln  \Omega_{1} +\bigO(n^{-9}).
\end{align*}
The claim now follows from
\begin{align*}
\sum_{j = j_{-}}^{g_{-}-1} 1 & = g_{-}-j_{-} = \bigg( \frac{bn r^{2b}}{1+\frac{M}{\sqrt{n}}} - \alpha \bigg)+\theta_{-}^{(n,M)} -j_{-} \nonumber \\
& = br^{2b}n - j_{-} - bMr^{2b}\sqrt{n} + bM^{2}r^{2b}-\alpha+\theta_{-}^{(n,M)} - bM^{3}r^{2b}n^{-\frac{1}{2}} + \bigO(M^{4}n^{-1}), \quad \mbox{as } n \to + \infty.
\end{align*} 
\end{proof}
We now turn to the asymptotic analysis of $S_{2}^{(2)}$. For all $k \in \{1,\ldots,m\}$ and $j \in \{j: \lambda_{j} \in I_{2}\}=\{g_{-},\ldots,g_{+}\}$, define
\begin{align*}
& M_{j,k} := \sqrt{n}(\lambda_{j,k}-1), \qquad M_{j} := \sqrt{n}(\lambda_{j}-1).
\end{align*}
Note that $M_{j,k}$ and $M_{j}$ decrease as $j$ increases. Since $I_{2} = [1-\tfrac{M}{\sqrt{n}},1+\tfrac{M}{\sqrt{n}}]$, as $n \to +\infty$ the points $M_{g_{-},k}, \ldots, M_{g_{+},k}$ run over the interval 
\begin{align*}
\big[\sqrt{n}\big((\tfrac{r_{k}}{r})^{2b}(1-\tfrac{M}{\sqrt{n}})-1\big),\sqrt{n}\big((\tfrac{r_{k}}{r})^{2b}(1+\tfrac{M}{\sqrt{n}})-1\big)\big] \approx [-M+\sqrt{2}r^{-b} \, \mathfrak{s}_{k},M+\sqrt{2}r^{-b} \, \mathfrak{s}_{k}]
\end{align*}
for each $k \in \{1,\ldots,m\}$, and the points $M_{g_{-}}, \ldots, M_{g_{+}}$ run over the interval $[-M,M]$. For the large $n$ asymptotics of $\smash{S_{2}^{(2)}}$ we will need the following lemma.
\begin{lemma}\label{lemma:Riemann sum}(Taken from \cite[Lemma 2.7]{Charlier 2d jumps})
Let $f \in C^{3}(\mathbb{R})$ be a function such that $|f|,|f'|,|f''|,|f'''|$ are bounded. As $n \to + \infty$ we have
\begin{align}
& \sum_{j=g_{-}}^{g_{+}}f(M_{j}) = br^{2b} \int_{-M}^{M} f(t) dt \; \sqrt{n} - 2 b r^{2b} \int_{-M}^{M} tf(t) dt + \bigg( \frac{1}{2}-\theta_{-}^{(n,M)} \bigg)f(M)+ \bigg( \frac{1}{2}-\theta_{+}^{(n,M)} \bigg)f(-M) \nonumber \\
& + \frac{1}{\sqrt{n}}\bigg[ 3br^{2b} \int_{-M}^{M}t^{2}f(t)dt + \bigg( \frac{1}{12}+\frac{\theta_{-}^{(n,M)}(\theta_{-}^{(n,M)}-1)}{2} \bigg)\frac{f'(M)}{br^{2b}} - \bigg( \frac{1}{12}+\frac{\theta_{+}^{(n,M)}(\theta_{+}^{(n,M)}-1)}{2} \bigg)\frac{f'(-M)}{br^{2b}} \bigg] \nonumber \\
& + \bigO(M^{4}n^{-1}). \label{sum f asymp 2}
\end{align}
\end{lemma}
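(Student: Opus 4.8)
The plan is to read the sum as a Riemann sum for the map $j\mapsto M_{j}$ and apply the Euler--Maclaurin summation formula with one Bernoulli correction term. Introduce the true endpoints $A_{-}:=\frac{bnr^{2b}}{1+M/\sqrt{n}}-\alpha$ and $A_{+}:=\frac{bnr^{2b}}{1-M/\sqrt{n}}-\alpha$, so that $g_{-}=\lceil A_{-}\rceil$, $g_{+}=\lfloor A_{+}\rfloor$, $\theta_{-}^{(n,M)}=g_{-}-A_{-}$, $\theta_{+}^{(n,M)}=A_{+}-g_{+}$, and the smooth function $M(x):=\sqrt{n}\big(\tfrac{bnr^{2b}}{x+\alpha}-1\big)$, so that $M_{j}=M(j)$, $M(A_{-})=M$, $M(A_{+})=-M$; set $F:=f\circ M\in C^{3}$. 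On the range $x\in[g_{-},g_{+}]$ one has $x+\alpha=bnr^{2b}(1+\bigO(M/\sqrt{n}))$ (recall $M=M'\sqrt{\ln n}$, so $M/\sqrt{n}\to0$), whence the uniform bounds $M'(x)=\bigO(n^{-1/2})$, $M''(x)=\bigO(n^{-3/2})$, $M'''(x)=\bigO(n^{-5/2})$, and therefore $F'=\bigO(n^{-1/2})$, $F''=\bigO(n^{-1})$, $F'''=\bigO(n^{-3/2})$. Using the Euler--Maclaurin formula with the remainder expressed via the third derivative (so that it is controlled by $\int|F'''|$ rather than $\int|F''|$),
\begin{align*}
\sum_{j=g_{-}}^{g_{+}}F(j) = \int_{g_{-}}^{g_{+}}F(x)\,dx + \frac{F(g_{-})+F(g_{+})}{2} + \frac{1}{12}\big(F'(g_{+})-F'(g_{-})\big) + \bigO\Big(\int_{g_{-}}^{g_{+}}|F'''|\Big),
\end{align*}
and since $[g_{-},g_{+}]$ has length $A_{+}-A_{-}+\bigO(1)=\bigO(M\sqrt{n})$, the remainder is $\bigO(M\sqrt{n}\cdot n^{-3/2})=\bigO(Mn^{-1})$, absorbed in $\bigO(M^{4}n^{-1})$.

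For the main integral, split $\int_{g_{-}}^{g_{+}}=\int_{A_{-}}^{A_{+}}-\int_{A_{-}}^{g_{-}}-\int_{g_{+}}^{A_{+}}$ and, in $\int_{A_{-}}^{A_{+}}F$, substitute $t=M(x)$; inverting gives $x+\alpha=\frac{bnr^{2b}}{1+t/\sqrt{n}}$, $dx=-\frac{br^{2b}\sqrt{n}}{(1+t/\sqrt{n})^{2}}\,dt$, hence
\begin{align*}
\int_{A_{-}}^{A_{+}}F(x)\,dx = br^{2b}\sqrt{n}\int_{-M}^{M}\frac{f(t)}{(1+t/\sqrt{n})^{2}}\,dt.
\end{align*}
Since $|t|\le M$ and $M/\sqrt{n}\to0$, expand $(1+t/\sqrt{n})^{-2}=1-\frac{2t}{\sqrt{n}}+\frac{3t^{2}}{n}+\bigO\big(\frac{|t|^{3}}{n^{3/2}}\big)$ uniformly; integrating term by term and bounding the tail by $\|f\|_{\infty}M^{4}n^{-3/2}$ produces exactly $br^{2b}\sqrt{n}\int_{-M}^{M}f\,dt-2br^{2b}\int_{-M}^{M}tf\,dt+\frac{3br^{2b}}{\sqrt{n}}\int_{-M}^{M}t^{2}f\,dt+\bigO(M^{4}n^{-1})$, i.e. the three integral terms of \eqref{sum f asymp 2}.

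It remains to expand the boundary pieces. The integrals $\int_{A_{-}}^{g_{-}}F$ and $\int_{g_{+}}^{A_{+}}F$ have length $\theta_{\mp}^{(n,M)}\in[0,1)$; Taylor-expanding $F$ to second order at $A_{-}$, respectively $A_{+}$ (the cubic remainder being $\bigO(n^{-1})$ since $F''=\bigO(n^{-1})$), and using $M(A_{-})=M$, $M(A_{+})=-M$ together with $M'(A_{\pm})=-\tfrac{1}{br^{2b}\sqrt{n}}+\bigO(Mn^{-1})$, gives $\int_{A_{-}}^{g_{-}}F=\theta_{-}^{(n,M)}f(M)-\tfrac{(\theta_{-}^{(n,M)})^{2}}{2br^{2b}\sqrt{n}}f'(M)+\bigO(Mn^{-1})$ and similarly for the other piece. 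Expanding $F(g_{\pm})$ and $F'(g_{\pm})$ around $A_{\pm}$ in the same way yields $\tfrac12(F(g_{-})+F(g_{+}))=\tfrac12(f(M)+f(-M))+\tfrac{1}{2br^{2b}\sqrt{n}}\big(\theta_{+}^{(n,M)}f'(-M)-\theta_{-}^{(n,M)}f'(M)\big)+\bigO(Mn^{-1})$ and $\tfrac{1}{12}(F'(g_{+})-F'(g_{-}))=\tfrac{1}{12br^{2b}\sqrt{n}}\big(f'(M)-f'(-M)\big)+\bigO(Mn^{-1})$. Collecting the $\bigO(1)$ contributions gives $(\tfrac12-\theta_{-}^{(n,M)})f(M)+(\tfrac12-\theta_{+}^{(n,M)})f(-M)$, while collecting the $\tfrac{1}{\sqrt{n}}$-contributions, the coefficient of $\tfrac{f'(M)}{br^{2b}}$ equals $\tfrac{(\theta_{-}^{(n,M)})^{2}}{2}-\tfrac{\theta_{-}^{(n,M)}}{2}+\tfrac{1}{12}=\tfrac{1}{12}+\tfrac{\theta_{-}^{(n,M)}(\theta_{-}^{(n,M)}-1)}{2}$ and, symmetrically, the coefficient of $\tfrac{f'(-M)}{br^{2b}}$ equals $-\big(\tfrac{1}{12}+\tfrac{\theta_{+}^{(n,M)}(\theta_{+}^{(n,M)}-1)}{2}\big)$; this is precisely the bracketed $\tfrac{1}{\sqrt{n}}$-term of \eqref{sum f asymp 2}, and all discarded terms are $\bigO(M^{4}n^{-1})$ (in fact $\bigO(Mn^{-1})$).

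The main obstacle is the careful bookkeeping in the last step: four distinct fragments --- coming from $\int_{A_{-}}^{g_{-}}F$, $\int_{g_{+}}^{A_{+}}F$, $\tfrac12(F(g_{-})+F(g_{+}))$, and $\tfrac{1}{12}(F'(g_{+})-F'(g_{-}))$ --- each contribute a term proportional to $f'(\pm M)$, and one must carry $M'(A_{\pm})$ and $M'(g_{\pm})$ (including the $(1\pm M/\sqrt{n})^{\mp2}$ factors) accurately enough to see these fragments assemble into the stated Bernoulli-polynomial coefficients $\tfrac{1}{12}+\tfrac{\theta_{\pm}(\theta_{\pm}-1)}{2}$. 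It is equally important to check uniformity of all the error bounds on the range considered, namely that the Euler--Maclaurin remainder and the various cubic Taylor remainders are genuinely $\bigO(M^{4}n^{-1})$, which is where the bounds $M''=\bigO(n^{-3/2})$, $M'''=\bigO(n^{-5/2})$ and $M/\sqrt{n}\to0$ are used.
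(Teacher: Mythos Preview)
Your proof is correct. The paper does not give its own proof of this lemma---it is quoted verbatim from \cite[Lemma 2.7]{Charlier 2d jumps}---so there is nothing to compare against here; your Euler--Maclaurin argument with one Bernoulli correction, followed by the change of variables $t=M(x)$ and the careful Taylor expansion of the four boundary fragments about $A_{\pm}$, is the natural route and all the bookkeeping (in particular the assembly of the coefficients $\tfrac{1}{12}+\tfrac{\theta_{\pm}(\theta_{\pm}-1)}{2}$ and the remainder bound $\bigO(M n^{-1})$ via $F'''=\bigO(n^{-3/2})$ on an interval of length $\bigO(M\sqrt{n})$) checks out.
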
 

\begin{lemma}\label{lemma:S2kp2p}
For any $x_{1},\ldots,x_{p} \in \mathbb{R}$, there exists $\delta > 0$ such that
\begin{align*}
&  S_{2}^{(2)} = \widetilde{C}_{2}^{(M)}\sqrt{n} + \widetilde{C}_{3}^{(n,M)} + \frac{1}{\sqrt{n}}\widetilde{C}_{4}^{(n,M)} + \bigO(M^{4}n^{-1}), \\
& \widetilde{C}_{2}^{(M)} = br^{2b} \int_{-M}^{M} f_{1}(t) dt,  \\
& \widetilde{C}_{3}^{(n,M)} = b r^{2b} \int_{-M}^{M} \Big( -2t f_{1}(t)+f_{2}(t) \Big) dt + \bigg( \frac{1}{2}-\theta_{-}^{(n,M)} \bigg)f_{1}(M)+\bigg( \frac{1}{2}-\theta_{+}^{(n,M)} \bigg)f_{1}(-M), \\
& \widetilde{C}_{4}^{(n,M)} = b r^{2b} \int_{-M}^{M}\Big( 3t^{2}f_{1}(t)-2tf_{2}(t)+f_{3}(t) \Big) dt + \bigg( \frac{1}{2}-\theta_{-}^{(n,M)} \bigg)f_{2}(M)+\bigg( \frac{1}{2}-\theta_{+}^{(n,M)} \bigg)f_{2}(-M) \\
& + \bigg( \frac{1}{12}+\frac{\theta_{-}^{(n,M)}(\theta_{-}^{(n,M)}-1)}{2} \bigg) \frac{f_{1}'(M)}{br^{2b}} - \bigg( \frac{1}{12}+\frac{\theta_{+}^{(n,M)}(\theta_{+}^{(n,M)}-1)}{2} \bigg) \frac{f_{1}'(-M)}{br^{2b}} ,
\end{align*}
as $n \to +\infty$ uniformly for $u_{1} \in \{z \in \mathbb{C}: |z-x_{1}|\leq \delta\},\ldots,u_{p} \in \{z \in \mathbb{C}: |z-x_{p}|\leq \delta\}$, where
\begin{align*}
& g(x) = 1+\sum_{\ell=1}^{m} \frac{\omega_{\ell}}{2}\mathrm{erfc}\bigg( -\frac{r^{b}}{\sqrt{2}}(x+\sqrt{2}\,r^{-b} \mathfrak{s}_{\ell}) \bigg), \qquad f_{1}(x) = \ln(g(x)), \\
& f_{2}(x) = \frac{1}{g(x)}\sum_{\ell=1}^{m} \frac{-\omega_{\ell} e^{-\frac{(x+\sqrt{2} r^{-b} \mathfrak{s}_{\ell})^{2}r^{2b}}{2}}}{6r^{b}\sqrt{2\pi}} \bigg( 5r^{2b}x^{2}+\sqrt{2}r^{b} x \mathfrak{s}_{\ell} + 4 \mathfrak{s}_{\ell}^{2} - 2 \bigg), \\
& f_{3}(x) = -\frac{f_{2}(x)^{2}}{2} + \frac{1}{g(x)}\sum_{\ell=1}^{m} \frac{\omega_{\ell}e^{-\frac{(x+\sqrt{2}r^{-b} \mathfrak{s}_{\ell})^{2}r^{2b}}{2}}}{72r^{2b}\sqrt{2\pi}} \bigg\{ -25r^{5b}x^{5} - 35\sqrt{2} \, r^{4b}x^{4}\mathfrak{s}_{\ell}+ r^{3b}x^{3}(73-62\mathfrak{s}_{\ell}^{2}) \\
& +\sqrt{2} \, r^{2b}x^{2}\mathfrak{s}_{\ell}(33-50\mathfrak{s}_{\ell}^{2})+2r^{b}x(3+18\mathfrak{s}_{\ell}^{2}-16\mathfrak{s}_{\ell}^{4}) - 2\sqrt{2}\mathfrak{s}_{\ell}(3-22\mathfrak{s}_{\ell}^{2}+8\mathfrak{s}_{\ell}^{4}) \bigg\}.
\end{align*}
\end{lemma}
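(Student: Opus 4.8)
The plan is to substitute the uniform asymptotics of $\gamma(a,z)/\Gamma(a)$ from Lemma \ref{lemma: uniform} into each summand of $S_{2}^{(2)}$, Taylor expand the result to three orders in $n^{-1/2}$ (this is what produces $f_{1},f_{2},f_{3}$), and then convert the sum over $j$ into integrals by means of the Euler--Maclaurin-type formula of Lemma \ref{lemma:Riemann sum}.

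First, for $j$ with $\lambda_{j}\in I_{2}$ one has $a_{j}=\tfrac{j+\alpha}{b}\asymp nr^{2b}\to+\infty$ and $\lambda_{j,\ell}=\tfrac{z_{\ell}}{a_{j}}=1+\bigO(\tfrac{M}{\sqrt n})$ for each $\ell$, so Lemma \ref{lemma: uniform} applies and the summand equals $1+\sum_{\ell=1}^{m}\omega_{\ell}\big[\tfrac12\mathrm{erfc}(-\eta_{j,\ell}\sqrt{a_{j}/2})-R_{a_{j}}(\eta_{j,\ell})\big]$. Using $\lambda_{j}=1+\tfrac{M_{j}}{\sqrt n}$, the exact identity $(\tfrac{r_{\ell}}{r})^{2b}=1+\tfrac{\sqrt2\,\mathfrak{s}_{\ell}}{r^{b}\sqrt n}$ (hence $\lambda_{j,\ell}=(1+\tfrac{M_{j}}{\sqrt n})(1+\tfrac{\sqrt2\,\mathfrak{s}_{\ell}}{r^{b}\sqrt n})$), and $a_{j}=\tfrac{nr^{2b}}{\lambda_{j}}$, I would expand $\eta_{j,\ell}=(\lambda_{j,\ell}-1)-\tfrac13(\lambda_{j,\ell}-1)^{2}+\cdots$ and $\sqrt{a_{j}/2}=\tfrac{r^{b}\sqrt n}{\sqrt2}(1+\tfrac{M_{j}}{\sqrt n})^{-1/2}$ in powers of $n^{-1/2}$, obtaining $-\eta_{j,\ell}\sqrt{a_{j}/2}=-\tfrac{r^{b}}{\sqrt2}\big(M_{j}+\sqrt2\,r^{-b}\mathfrak{s}_{\ell}\big)+\bigO(n^{-1/2})$. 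Composing with $\mathrm{erfc}$ (whose derivatives produce the Gaussian weights $e^{-r^{2b}(x+\sqrt2\,r^{-b}\mathfrak{s}_{\ell})^{2}/2}$) and substituting $R_{a_{j}}(\eta_{j,\ell})=\tfrac{e^{-a_{j}\eta_{j,\ell}^{2}/2}}{\sqrt{2\pi a_{j}}}\big(c_{0}(\eta_{j,\ell})+\bigO(a_{j}^{-1})\big)$ with $c_{0}(\eta)=\tfrac{1}{\lambda-1}-\tfrac1\eta=-\tfrac13+\bigO(\lambda-1)$ (so $R_{a_{j}}(\eta_{j,\ell})$ starts at order $n^{-1/2}$ and carries the same Gaussian), one arrives, with $x:=M_{j}$, at an expansion of the summand of the form $g(x)+\tfrac{h_{1}(x)}{\sqrt n}+\tfrac{h_{2}(x)}{n}+\bigO(n^{-3/2}p(x))$, with $h_{1},h_{2}$ explicit and smooth and $p$ a polynomial, uniformly for $j$ with $\lambda_{j}\in I_{2}$ and for $u$ in $\prod_{\ell}\{|u_{\ell}-x_{\ell}|\le\delta\}$. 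Taking logarithms via $\ln(g+\tfrac{h_{1}}{\sqrt n}+\tfrac{h_{2}}{n})=\ln g+\tfrac{h_{1}/g}{\sqrt n}+\tfrac1n(\tfrac{h_{2}}{g}-\tfrac{h_{1}^{2}}{2g^{2}})+\cdots$ identifies $f_{1}=\ln g$, $f_{2}=h_{1}/g$, $f_{3}=h_{2}/g-f_{2}^{2}/2$, and carrying out the algebra reproduces exactly the stated formulas. (Here $g(x)=\mathcal{H}_{1}(-\tfrac{r^{b}x}{\sqrt2};\vec u,\vec{\mathfrak{s}})$, which by Lemma \ref{Hjpositivelemma} is positive for real $u$, hence, for $\delta$ small, stays in a neighbourhood of $(0,+\infty)$ for $u$ in the polydisc, so $\ln g$ is well defined; moreover $g,f_{2},f_{3}$ and all their derivatives are bounded on $\mathbb{R}$, the Gaussians controlling polynomial growth, so Lemma \ref{lemma:Riemann sum} applies to each $f_{i}$.)

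It remains to sum over $j=g_{-},\ldots,g_{+}$. The remainder $\bigO(n^{-3/2}p(M_{j}))$ summed over the $\bigO(M\sqrt n)$ values of $j$ is $\bigO(M^{4}n^{-1})$; for the main part I apply Lemma \ref{lemma:Riemann sum} three times --- to $f_{1}$ (keeping the $\sqrt n$, $1$ and $n^{-1/2}$ terms), to $n^{-1/2}f_{2}$ (keeping the $1$ and $n^{-1/2}$ terms), and to $n^{-1}f_{3}$ (keeping the $n^{-1/2}$ term), the discarded contributions being $\bigO(M^{4}n^{-1})$. Collecting powers of $n$ then gives: coefficient of $\sqrt n$ equal to $br^{2b}\int_{-M}^{M}f_{1}=\widetilde C_{2}^{(M)}$; coefficient of $1$ equal to $br^{2b}\int_{-M}^{M}(-2tf_{1}+f_{2})+(\tfrac12-\theta_{-}^{(n,M)})f_{1}(M)+(\tfrac12-\theta_{+}^{(n,M)})f_{1}(-M)=\widetilde C_{3}^{(n,M)}$; and coefficient of $n^{-1/2}$ equal to $br^{2b}\int_{-M}^{M}(3t^{2}f_{1}-2tf_{2}+f_{3})$ plus the boundary terms $(\tfrac12-\theta_{-}^{(n,M)})f_{2}(M)+(\tfrac12-\theta_{+}^{(n,M)})f_{2}(-M)$ and $(\tfrac{1}{12}+\tfrac{\theta_{-}^{(n,M)}(\theta_{-}^{(n,M)}-1)}{2})\tfrac{f_{1}'(M)}{br^{2b}}-(\tfrac{1}{12}+\tfrac{\theta_{+}^{(n,M)}(\theta_{+}^{(n,M)}-1)}{2})\tfrac{f_{1}'(-M)}{br^{2b}}$, i.e. $\widetilde C_{4}^{(n,M)}$. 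This yields the claimed expansion.

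The main obstacle is the second step: one must push the uniform asymptotics of $\gamma(a,z)/\Gamma(a)$ two orders past the leading $\mathrm{erfc}$, track how the $\bigO(n^{-1/2})$ discrepancy between $M_{j,\ell}$ and $M_{j}+\sqrt2\,r^{-b}\mathfrak{s}_{\ell}$ feeds into the next order, correctly combine the contributions of the $\mathrm{erfc}$-expansion, of $R_{a_{j}}$ and of the expansion of $\ln$, and control all errors uniformly in $j$ over $I_{2}$ and analytically in $u$. This is a long but purely mechanical computation, and the explicit polynomials appearing in $f_{2},f_{3}$ (and hence, later, in $\mathcal{G}_{1},\mathcal{G}_{2}$) are precisely its output.
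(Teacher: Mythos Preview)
Your proposal is correct and follows essentially the same route as the paper: substitute the uniform asymptotics of Lemma \ref{lemma: uniform}, Taylor expand $\eta_{j,\ell}$ and $-\eta_{j,\ell}\sqrt{a_{j}/2}$ in powers of $n^{-1/2}$ via $M_{j}$, take the logarithm to extract $f_{1},f_{2},f_{3}$, invoke Lemma \ref{Hjpositivelemma} for the positivity of $g$, and then apply Lemma \ref{lemma:Riemann sum} term by term. The only minor discrepancy is in your handling of the remainder after the logarithm: you write it as $\bigO(n^{-3/2}p(M_{j}))$ with $p$ a polynomial, whereas the paper keeps the Gaussian factor $e^{-cM_{j}^{2}}$ in front (so the error is in fact uniformly bounded in $j$ and sums to $\bigO(n^{-1})$ rather than $\bigO(M^{d+1}n^{-1})$); either way the dominant error $\bigO(M^{4}n^{-1})$ ultimately comes from Lemma \ref{lemma:Riemann sum}, so this does not affect the conclusion.
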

\begin{proof}
Using \eqref{asymp prelim of S2kpvp} and Lemma \ref{lemma: uniform}, we obtain
\begin{align}\label{lol1}
& S_{2}^{(2)} = \sum_{j:\lambda_{j}\in I_{2}} \ln \bigg( 1+\sum_{\ell=1}^{m} \omega_{\ell} \bigg[ \frac{1}{2}\mathrm{erfc}\Big(-\eta_{j,\ell} \sqrt{a_{j}/2}\Big) - R_{a_{j}}(\eta_{j,\ell}) \bigg] \bigg).
\end{align}
For $j \in \{j:\lambda_{j}\in I_{2}\}$, we have $1-\frac{M}{\sqrt{n}} \leq \lambda_{j} = \frac{bnr^{2b}}{j+\alpha} \leq 1+\frac{M}{\sqrt{n}}$, $-M \leq M_{j} \leq M$, and 
\begin{align*}
M_{j,k} = M_{j} + \sqrt{2} \, r^{-b} \mathfrak{s}_{k} + \frac{\sqrt{2} \, r^{-b}\mathfrak{s}_{k}M_{j}}{\sqrt{n}}, \qquad k=1,\ldots,m.
\end{align*}
Furthermore, as $n \to + \infty$ we have 
\begin{align}
\eta_{j,\ell} & = (\lambda_{j,\ell}-1)\bigg( 1 - \frac{\lambda_{j,\ell}-1}{3} + \frac{7}{36}(\lambda_{j,\ell}-1)^{2} +\bigO((\lambda_{j,\ell}-1)^{3})\bigg) = \frac{M_{j,\ell}}{\sqrt{n}} - \frac{M_{j,\ell}^{2}}{3n} + \frac{7M_{j,\ell}^{3}}{36n^{3/2}} + \bigO\bigg(\frac{M^{4}}{n^{2}}\bigg), \nonumber \\
& = \frac{M_{j}+\sqrt{2}\, r^{-b} \mathfrak{s}_{\ell}}{\sqrt{n}} + \frac{1}{3n} \bigg( \sqrt{2} \,r^{-b} s_{\ell}M_{j} - M_{j}^{2} - 2r^{-2b} \mathfrak{s}_{\ell}^{2} \bigg) \nonumber \\
&  + \frac{1}{3n^{3/2}} \big( M_{j}+\sqrt{2}\, r^{-b} \mathfrak{s}_{\ell} \big)\bigg( \frac{7}{12}\big( M_{j}+\sqrt{2}\, r^{-b}\mathfrak{s}_{\ell} \big)^{2} - 2 \sqrt{2} \,r^{-b} \mathfrak{s}_{\ell} M_{j} \bigg) + \bigO\bigg(\frac{M^{4}}{n^{2}}\bigg) \label{asymp etaj and etajsqrtajover2 1}
\end{align}
and
\begin{align} 
-\eta_{j,\ell} \sqrt{a_{j}/2} & = - \frac{M_{j,\ell} r_{\ell}^{b}}{\sqrt{2}} + \frac{5M_{j,\ell}^{2} r_{\ell}^{b}}{6\sqrt{2}\sqrt{n}} - \frac{53 M_{j,\ell}^{3} r_{\ell}^{b}}{72\sqrt{2}n} +\bigO(M^{4}n^{-\frac{3}{2}}) \nonumber \\
& = - \frac{(M_{j}+\sqrt{2} \,r^{-b} \mathfrak{s}_{\ell})r^{b}}{\sqrt{2}} + \frac{r^{b}}{12\sqrt{n}}\bigg( 5\sqrt{2} M_{j}^{2} + 2r^{-b} M_{j} \mathfrak{s}_{\ell} + 4\sqrt{2} r^{-2b} \mathfrak{s}_{\ell}^{2} \bigg) \nonumber \\
& - \frac{r^{b}}{144n}\bigg( 53\sqrt{2}M_{j}^{3} + 18r^{-b} M_{j}^{2}\mathfrak{s}_{\ell} + 12\sqrt{2} \, r^{-2b} M_{j} \mathfrak{s}_{\ell}^{2} + 56r^{-3b} \mathfrak{s}_{\ell}^{3} \bigg) + \bigO(M^{4}n^{-3/2}), \label{asymp etaj and etajsqrtajover2 2} 
\end{align}
uniformly for $j\in \{j:\lambda_{j}\in I_{2}\}$. Hence, by \eqref{asymp of Ra}, as $n \to + \infty$ we have
\begin{align*}
& R_{a_{j}}(\eta_{j,\ell}) = \frac{e^{-\frac{(M_{j}+\sqrt{2}r^{-b} \mathfrak{s}_{\ell})^{2}r^{2b}}{2}}}{\sqrt{2\pi}} \bigg( \frac{-1}{3r^{b}\sqrt{n}} \nonumber \\
&  - \frac{10M_{j}^{3}r^{3b}+12\sqrt{2}M_{j}^{2}r^{2b}\mathfrak{s}_{\ell}+12 M_{j} r^{b} \mathfrak{s}_{\ell}^{2}+8\sqrt{2}\mathfrak{s}_{\ell}^{3}+3M_{j}r^{b}-3\sqrt{2}\mathfrak{s}_{\ell}}{36r^{2b}n} + \bigO((1+M_{j}^{6})n^{-\frac{3}{2}}) \bigg)
\end{align*}
and
\begin{align*}
& \frac{1}{2}\mathrm{erfc}\Big(-\eta_{j,\ell} \sqrt{a_{j}/2}\Big) = \frac{1}{2}\mathrm{erfc}\Big(-\frac{r^{b}}{\sqrt{2}}(M_{j}+\sqrt{2}\, r^{-b} \mathfrak{s}_{\ell})\Big) \\
& -\frac{e^{-\frac{(M_{j}+\sqrt{2}r^{-b} \mathfrak{s}_{\ell})^{2}r^{2b}}{2}}}{12\sqrt{\pi}r^{b}\sqrt{n}}\Big( 5\sqrt{2}r^{2b}M_{j}^{2}+2r^{b} M_{j} \mathfrak{s}_{\ell} + 4\sqrt{2} \mathfrak{s}_{\ell}^{2} \Big) \\
& + \frac{e^{-\frac{(M_{j}+\sqrt{2}r^{-b} \mathfrak{s}_{\ell})^{2}r^{2b}}{2}}}{144\sqrt{\pi}r^{2b}n}\bigg\{ 53\sqrt{2}r^{3b}M_{j}^{3} + 18r^{2b} M_{j}^{2}\mathfrak{s}_{\ell} + 12\sqrt{2} r^{b} M_{j} \mathfrak{s}_{\ell}^{2} + 56 \mathfrak{s}_{\ell}^{3} \\
& - \sqrt{2} \Big( r^{b}M_{j}+\sqrt{2} \mathfrak{s}_{\ell} \Big) \Big( 5r^{2b}M_{j}^{2}+\sqrt{2} r^{b}M_{j}\mathfrak{s}_{\ell} + 4 \mathfrak{s}_{\ell}^{2} \Big)^{2} \bigg\} + \bigO\Big(e^{-\frac{(M_{j}+\sqrt{2}r^{-b} \mathfrak{s}_{\ell})^{2}r^{2b}}{2}}(1+M_{j}^{8}) n^{-\frac{3}{2}}\Big),
\end{align*}
uniformly for $j\in \{j:\lambda_{j}\in I_{2}\}$. 
The identity $g(- \sqrt{2} r^{-b} t) = \mathcal{H}_{1}(t; \vec{u},\vec{\mathfrak{s}})$ in combination with Lemma \ref{Hjpositivelemma} shows that $g(x) > 0$ for all $x \in \mathbb{R}$; in particular, the functions $f_1(x),f_{2}(x),f_{3}(x)$ are well-defined and real-valued for $x \in \mathbb{R}$. Substituting the above asymptotics into \eqref{lol1} and using that the error terms are suppressed by exponentials of the form $e^{-c M_j^2}$, we obtain
\begin{align}
S_{2}^{(2)}  & = \Sigma_{1}^{(n)}+\frac{1}{\sqrt{n}}\Sigma_{2}^{(n)}+\frac{1}{n}\Sigma_{3}^{(n)} + \bigO(n^{-1}), \qquad \mbox{as } n \to + \infty, \label{asymp of S2kp2p in proof}
\end{align}
where
\begin{align*}
& \Sigma_{1}^{(n)} = \sum_{j=g_{-}}^{g_{+}} f_{1}(M_{j}), \qquad \Sigma_{2}^{(n)} = \sum_{j=g_{-}}^{g_{+}} f_{2}(M_{j}), \qquad \Sigma_{3}^{(n)} = \sum_{j=g_{-}}^{g_{+}} f_{3}(M_{j}).
\end{align*}
The functions $f_j$, $j = 1,2,3$, satisfy the assumptions of Lemma \ref{lemma:Riemann sum}. Moreover, $f_2(x)$, $f_3(x)$, and their derivatives have exponential decay as $x \to \pm \infty$. Hence, by \eqref{sum f asymp 2}, we have
\begin{align*}
& \Sigma_{1}^{(n)} = \Sigma_{1,2} \sqrt{n} + \Sigma_{1,3} + \frac{1}{\sqrt{n}} \Sigma_{1,4} + \bigO(M^{4}n^{-1}), \\
& \frac{1}{\sqrt{n}}\Sigma_{2}^{(n)} = \Sigma_{2,3} + \frac{1}{\sqrt{n}} \Sigma_{2,4} + \bigO(n^{-1}), \qquad \frac{1}{n}\Sigma_{3}^{(n)} = \frac{1}{\sqrt{n}} \Sigma_{3,4} + \bigO(n^{-1}),
\end{align*}
as $n \to + \infty$, for some explicit $\Sigma_{1,2},\Sigma_{1,3},\Sigma_{1,4},\Sigma_{2,3},\Sigma_{2,4},\Sigma_{3,4}$. A computation gives
\begin{align*}
& \Sigma_{1,2} =  \widetilde{C}_{2}^{(M)}, & & \Sigma_{1,3} + \Sigma_{2,3} = \widetilde{C}_{3}^{(n,M)}, & & \Sigma_{1,4} + \Sigma_{2,4} + \Sigma_{3,4} = \widetilde{C}_{4}^{(n,M)},
\end{align*}
which is the claim.
\end{proof}
We are now ready to derive the asymptotics of $S_{2}$ as $n \to + \infty$.
\begin{lemma}\label{lemma: asymp of S2k final}
The constant $M'$ can be chosen sufficiently large such that the following holds. For any $x_{1},\ldots,x_{p} \in \mathbb{R}$, there exists $\delta > 0$ such that
\begin{align*}
& S_{2} =  \Big( br^{2b}n - j_{-} \Big) \ln  \Omega_{1}  + C_{2} \sqrt{n} + \widetilde{C}_{3} + \frac{1}{\sqrt{n}} C_{4} + \bigO(M^{4}n^{-1}),
\end{align*}
as $n \to +\infty$ uniformly for $u_{1} \in \{z \in \mathbb{C}: |z-x_{1}|\leq \delta\},\ldots,u_{p} \in \{z \in \mathbb{C}: |z-x_{p}|\leq \delta\}$, where
\begin{align*}
& C_{2} = b r^{2b} \bigg[ \int_{-\infty}^{0}f_{1}(t)dt + \int_{0}^{+\infty} \big( f_{1}(t)-\ln \Omega_{1} \big) dt \bigg], \\
& \widetilde{C}_{3} = \bigg( \frac{1}{2} - \alpha \bigg) \ln \Omega_{1} + b r^{2b} \bigg[ \int_{-\infty}^{0}(-2tf_{1}(t)+f_{2}(t))dt + \int_{0}^{+\infty} (-2t[f_{1}(t)-\ln \Omega_{1}]+f_{2}(t)) dt \bigg], \\
& C_{4} = br^{2b} \bigg[ \int_{-\infty}^{0} \big( 3t^{2}f_{1}(t)-2tf_{2}(t)+f_{3}(t) \big)dt + \int_{0}^{+\infty} \big( 3t^{2}(f_{1}(t)-\ln(\Omega_{1}))-2tf_{2}(t)+f_{3}(t) \big)dt \bigg].
\end{align*}
\end{lemma}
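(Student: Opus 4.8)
The plan is to add the three sub-expansions from Lemmas~\ref{lemma:S2kp1p}, \ref{lemma:S2kp2p} and \ref{lemma:S2kp3p} and to check that all $M$-dependent contributions telescope away. The crucial input is the behaviour of $f_1,f_2,f_3$ near $\pm\infty$. From the identity $g(-\sqrt{2}\,r^{-b}t)=\mathcal{H}_1(t;\vec{u},\vec{\mathfrak{s}})$ already used in the proof of Lemma~\ref{lemma:S2kp2p}, together with Lemma~\ref{Hjpositivelemma} and a continuity argument, one fixes $\delta>0$ small enough that $g(x;\vec{u},\vec{\mathfrak{s}})$ stays in a fixed compact subset of $\mathbb{C}\setminus(-\infty,0]$ for all $x\in\mathbb{R}$ and all $u_\ell$ in the stated disks, so that $f_1=\ln g$ is well defined with $f_1(x)\to 0$ as $x\to-\infty$ and $f_1(x)\to\ln\Omega_1$ as $x\to+\infty$. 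Using $\mathrm{erfc}(y)=\bigO(e^{-y^2})$ as $y\to+\infty$ and $2-\mathrm{erfc}(y)=\bigO(e^{-y^2})$ as $y\to-\infty$, one gets $f_1(x)-\ln\Omega_1\,\mathbf{1}_{\{x>0\}}=\bigO(e^{-cx^2})$, and $f_1'(x)$, $f_2(x)$, $f_3(x)$ as well as any polynomial multiple of them are $\bigO(e^{-cx^2})$ as $x\to\pm\infty$; all the constants are uniform in $\vec{u}$.

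Next I would process $S_2^{(2)}$. In each integral over $[-M,M]$ inside $\widetilde{C}_2^{(M)}$, $\widetilde{C}_3^{(n,M)}$, $\widetilde{C}_4^{(n,M)}$, I split the range at $0$ and, on $[0,M]$, subtract the limit $\ln\Omega_1$ from $f_1$. This produces three ``constant tail'' contributions: $br^{2b}(\int_0^M 1\,dt)\ln\Omega_1=br^{2b}M\ln\Omega_1$ in $\widetilde{C}_2^{(M)}$, $br^{2b}(\int_0^M(-2t)\,dt)\ln\Omega_1=-br^{2b}M^2\ln\Omega_1$ in $\widetilde{C}_3^{(n,M)}$, and $br^{2b}(\int_0^M 3t^2\,dt)\ln\Omega_1=br^{2b}M^3\ln\Omega_1$ in $\widetilde{C}_4^{(n,M)}$. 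Weighted by $\sqrt{n}$, $1$, $n^{-1/2}$ respectively, these cancel exactly the $-bMr^{2b}\sqrt{n}$, $+bM^2r^{2b}$, $-bM^3r^{2b}n^{-1/2}$ parts of the coefficient of $\ln\Omega_1$ in $S_2^{(3)}$. Likewise, the boundary term $(\tfrac12-\theta_-^{(n,M)})f_1(M)$ in $\widetilde{C}_3^{(n,M)}$ equals $(\tfrac12-\theta_-^{(n,M)})\ln\Omega_1+\bigO(e^{-cM^2})$; its $-\theta_-^{(n,M)}\ln\Omega_1$ piece cancels the $+\theta_-^{(n,M)}\ln\Omega_1$ from $S_2^{(3)}$, and the surviving $\tfrac12\ln\Omega_1$ combines with the $-\alpha\ln\Omega_1$ from $S_2^{(3)}$ to give the $(\tfrac12-\alpha)\ln\Omega_1$ term of $\widetilde{C}_3$. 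All remaining boundary terms --- $(\tfrac12-\theta_+^{(n,M)})f_1(-M)$ in $\widetilde{C}_3^{(n,M)}$ and the four terms $f_2(\pm M)$, $f_1'(\pm M)$ in $\widetilde{C}_4^{(n,M)}$ --- are $\bigO(e^{-cM^2})$ by the decay recalled above.

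It then remains to extend each split integral from $(-M,0)$ and $(0,M)$ to $(-\infty,0)$ and $(0,+\infty)$; since the integrands are $\bigO(e^{-ct^2})$ near the missing endpoints, every such extension costs only $\bigO(e^{-cM^2})$. After all these cancellations the coefficient of $\ln\Omega_1$ that remains is precisely $br^{2b}n-j_-$, and the surviving integrals assemble exactly into $C_2$, $\widetilde{C}_3$, $C_4$ as stated (e.g. $C_2=br^{2b}\int_{-\infty}^0 f_1+br^{2b}\int_0^\infty(f_1-\ln\Omega_1)$, and analogously for $\widetilde{C}_3$ and $C_4$). Finally, adding $S_2^{(1)}=\bigO(n^{-10})$ from Lemma~\ref{lemma:S2kp1p} and collecting errors gives $\bigO(M^4 n^{-1})+\bigO(e^{-cM^2}\sqrt{n})$; since $M=M'\sqrt{\ln n}$ we have $e^{-cM^2}=n^{-cM'^2}$, so choosing $M'$ large enough makes $e^{-cM^2}\sqrt{n}=\bigO(n^{-1})$ and the total error is $\bigO(M^4 n^{-1})$.

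The main obstacle is not conceptual but organizational: matching the several $M$-dependent terms across the three sub-expansions so that they telescope, and securing the uniformity in $\vec{u}$ of the $\bigO(e^{-ct^2})$ tail estimates for $f_1-\ln\Omega_1$, $f_1'$, $f_2$, $f_3$ --- which is precisely what dictates the choice of $\delta$. Granting these uniform estimates, the remaining work is routine integration and bookkeeping.
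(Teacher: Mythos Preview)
Your proposal is correct and follows essentially the same route as the paper's proof: combine the three sub-expansions, group the $M$-dependent pieces from $S_2^{(3)}$ with the corresponding parts of $\widetilde{C}_2^{(M)},\widetilde{C}_3^{(n,M)},\widetilde{C}_4^{(n,M)}$, and then pass to the limit using the exponential decay of $f_1-\ln\Omega_1\mathbf{1}_{\{x>0\}}$, $f_1'$, $f_2$, $f_3$. The paper packages this more tersely by defining $\widehat{C}_j^{(\cdot)}:=\widetilde{C}_j^{(\cdot)}+(\text{matching }M\text{-terms from }S_2^{(3)})$ and asserting $\widehat{C}_j^{(\cdot)}=C_j+\bigO(n^{-10})$ for $M'$ large, whereas you spell out the individual cancellations; the content is the same.
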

\begin{proof}
It follows from Lemmas \ref{lemma:S2kp1p}, \ref{lemma:S2kp3p} and \ref{lemma:S2kp2p} that
\begin{align*}
& S_{2} = \Big( br^{2b}n - j_{-} \Big) \ln  \Omega_{1} + \widehat{C}_{2}^{(M)} \sqrt{n} + \widehat{C}_{3}^{(n,M)} + \frac{1}{\sqrt{n}} \widehat{C}_{4}^{(n,M)} + \bigO(M^{4}n^{-1}),
\end{align*}
as $n \to +\infty$ uniformly for $u_{1} \in \{z \in \mathbb{C}: |z-x_{1}|\leq \delta\},\ldots,u_{p} \in \{z \in \mathbb{C}: |z-x_{p}|\leq \delta\}$, where
\begin{align*}
& \widehat{C}_{2}^{(M)} := \widetilde{C}_{2}^{(M)} - bM r^{2b} \ln \Omega_{1}, \\
& \widehat{C}_{3}^{(n,M)} := \widetilde{C}_{3}^{(n,M)} + \Big( bM^{2}r^{2b} -\alpha+\theta_{-}^{(n,M)}  \Big) \ln \Omega_{1}, \\
& \widehat{C}_{4}^{(n,M)} := \widetilde{C}_{4}^{(n,M)} - bM^{3}r^{2b} \ln \Omega_{1}.
\end{align*}
Provided $M'$ is chosen sufficiently large, as $n \to + \infty$ we get
\begin{align*}
\widehat{C}_{2}^{(M)} = C_{2} + \bigO(n^{-10}), \qquad \widehat{C}_{3}^{(n,M)} = \widetilde{C}_{3} + \bigO(n^{-10}), \qquad \widehat{C}_{4}^{(n,M)} = C_{4} + \bigO(n^{-10}),
\end{align*}
and the claim follows.
\end{proof}


\begin{proof}[End of the proof of Theorem \ref{thm:main thm}]
Let $M'>0$ be sufficiently large such that Lemmas \ref{lemma: S2km1} and \ref{lemma: asymp of S2k final} hold. Using \eqref{log Dn as a sum of sums} and Lemmas \ref{lemma: S0}, \ref{lemma: S2km1} and \ref{lemma: asymp of S2k final}, we conclude that for any $x_{1},\ldots,x_{p} \in \mathbb{R}$, there exists $\delta > 0$ such that
\begin{align*}
& \ln \mathcal{E}_{n} = S_{0}+S_{1}+S_{2}+S_{3} \\
& = M' \ln \Omega_{1} + (j_{-}-M'-1) \ln \Omega_{1} + \Big( br^{2b}n - j_{-} \Big) \ln  \Omega_{1}  + C_{2} \sqrt{n} + \widetilde{C}_{3} + \frac{1}{\sqrt{n}} C_{4} + \bigO(M^{4}n^{-1}) \\
& = \Big( br^{2b} \ln  \Omega_{1} \Big)n + C_{2} \sqrt{n} + C_{3} + \frac{1}{\sqrt{n}} C_{4} + \bigO(M^{4}n^{-1}),
\end{align*}
as $n \to +\infty$ uniformly for $u_{1} \in \{z \in \mathbb{C}: |z-x_{1}|\leq \delta\},\ldots,u_{p} \in \{z \in \mathbb{C}: |z-x_{p}|\leq \delta\}$, where $C_{3} = \widetilde{C}_{3}-\ln \Omega_{1}$. Using \eqref{function H1}--\eqref{function G2}, \eqref{def of omegaell} and \eqref{def of Omega j}, the constants $C_{2}$, $C_{3}$ and $C_{4}$ can be rewritten as in \eqref{asymp in main thm} after a change of variables.
\end{proof}
\section{Proof of Theorem \ref{thm:main thm edge}}\label{section:proof edge}

As in the proof of Theorem \ref{thm:main thm}, our starting point is formula \eqref{main exact formula}. 

\medskip Let 
\begin{align*}
& j_{-}:=\lceil \tfrac{n}{1+\epsilon} - \alpha \rceil,
\end{align*}
where $\epsilon > 0$ is a small constant independent of $n$. Using \eqref{main exact formula}, we write $\ln \mathcal{E}_{n}$ in $3$ parts
\begin{align}\label{log Dn as a sum of sums edge}
\ln \mathcal{E}_{n} = S_{0} + S_{1} + S_{2},
\end{align}
with 
\begin{align}
& S_{0} = \sum_{j=1}^{M'} \ln \bigg( 1+\sum_{\ell=1}^{m} \omega_{\ell} \frac{\gamma(\tfrac{j+\alpha}{b},nr_{\ell}^{2b})}{\Gamma(\tfrac{j+\alpha}{b})} \bigg), & & S_{1} = \sum_{j=M'+1}^{j_{-}-1} \hspace{-0.3cm} \ln \bigg( 1+\sum_{\ell=1}^{m} \omega_{\ell} \frac{\gamma(\tfrac{j+\alpha}{b},nr_{\ell}^{2b})}{\Gamma(\tfrac{j+\alpha}{b})} \bigg), \label{def of S0 and S1 edge} \\
& S_{2} = \sum_{j=j_{-}}^{n} \ln \bigg( 1+\sum_{\ell=1}^{m} \omega_{\ell} \frac{\gamma(\tfrac{j+\alpha}{b},nr_{\ell}^{2b})}{\Gamma(\tfrac{j+\alpha}{b})} \bigg), \label{def of S2 edge}
\end{align}
where $M'>0$ is a large integer independent of $n$. 
Recall the definition of $\Omega_{\ell}$ in \eqref{def of Omega j}.
\begin{lemma}\label{lemma: S0 edge}
For any $x_{1},\ldots,x_{m} \in \mathbb{R}$, there exists $\delta > 0$ such that
\begin{align}\label{asymp of S0 edge}
S_{0} = M' \ln \Omega_{1} + \bigO(e^{-cn}), \qquad \mbox{as } n \to + \infty,
\end{align}
uniformly for $u_{1} \in \{z \in \mathbb{C}: |z-x_{1}|\leq \delta\},\ldots,u_{m} \in \{z \in \mathbb{C}: |z-x_{m}|\leq \delta\}$.
\end{lemma}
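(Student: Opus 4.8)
The proof is virtually identical to that of Lemma \ref{lemma: S0}, since $S_0$ as defined in \eqref{def of S0 and S1 edge} has exactly the same form in the edge case and the summation range $j \in \{1,\ldots,M'\}$ is a fixed finite set, independent of $n$. The plan is as follows. First I would note that for each fixed $j \in \{1,\ldots,M'\}$ the parameter $a_j = \tfrac{j+\alpha}{b}$ stays bounded while $z_\ell = n r_\ell^{2b} \to +\infty$ (here $r_\ell \to b^{-\frac{1}{2b}}$), so one is in the regime $z/a \to +\infty$ with $a$ bounded; by the corresponding case of Lemma \ref{lemma:various regime of gamma} (see also Appendix \ref{section:uniform asymp gamma}) we have $\gamma(a_j, z_\ell)/\Gamma(a_j) = 1 + \bigO(e^{-cn})$ for some $c>0$, uniformly for $j \in \{1,\ldots,M'\}$ and $\ell \in \{1,\ldots,m\}$, with an implicit constant independent of $u_1,\ldots,u_m$.

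Next I would substitute this estimate into each summand and use the identity $1 + \sum_{\ell=1}^{m}\omega_\ell = \sum_{\ell=1}^{m+1}\omega_\ell = \Omega_1$ coming from \eqref{def of omegaell}--\eqref{def of Omega j} (recall $\omega_{m+1}=1$), to obtain
\[
1+\sum_{\ell=1}^{m} \omega_{\ell} \frac{\gamma(\tfrac{j+\alpha}{b},nr_{\ell}^{2b})}{\Gamma(\tfrac{j+\alpha}{b})} = \Omega_1 + \bigO(e^{-cn}), \qquad n \to +\infty,
\]
where the error term is uniform because the coefficients $\omega_\ell$ are bounded on the compact polydisc $\{u : |u_j - x_j| \le \delta,\ j=1,\ldots,m\}$. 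Since $\Omega_1 = e^{u_1+\ldots+u_m}$ stays in a fixed compact subset of $\mathbb{C}\setminus(-\infty,0]$ on that polydisc (which is exactly why one shrinks $\delta$: to keep $\Omega_1$ away from the branch cut of $\ln$), it is bounded and bounded away from $0$ there, and $\ln$ is Lipschitz on a neighbourhood of the relevant values; hence taking principal-branch logarithms gives $\ln\big(1+\sum_\ell \omega_\ell \gamma/\Gamma\big) = \ln\Omega_1 + \bigO(e^{-cn})$. Summing the $M'$ (finitely many, $n$-independent) terms then yields $S_0 = M'\ln\Omega_1 + \bigO(e^{-cn})$, which is the claim.

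There is no serious obstacle here: the only two points that require a line of care are invoking the correct regime of the incomplete-gamma asymptotics (bounded $a$, $z \to +\infty$) and checking uniformity in $\vec{u}$, and both are immediate --- the gamma-function estimates do not involve $\vec{u}$ at all, and the only $\vec{u}$-dependence enters through the bounded coefficients $\omega_\ell$ and through $\Omega_1$, which remains in a fixed compact subset of the domain of the principal logarithm.
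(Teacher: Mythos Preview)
Your proposal is correct and follows essentially the same approach as the paper: the paper's own proof of Lemma~\ref{lemma: S0 edge} simply says ``The proof is identical to the proof of Lemma~\ref{lemma: S0}'', and that proof in turn is precisely the argument you outline (invoke Lemma~\ref{lemma:various regime of gamma} for bounded $a_j$ and $z_\ell\to+\infty$, use $\sum_{\ell=1}^{m+1}\omega_\ell=\Omega_1$, and note that the gamma-function error terms are independent of $\vec{u}$). Your added remarks on why the logarithm is well-behaved and on uniformity in $\vec{u}$ are accurate and slightly more detailed than the paper's version.
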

\begin{proof}
The proof is identical to the proof of Lemma \ref{lemma: S0}.
\end{proof}

\begin{lemma}\label{lemma: S2km1 edge}
The constant $M'$ can be chosen sufficiently large such that the following holds. For any $x_{1},\ldots,x_{m} \in \mathbb{R}$, there exists $\delta > 0$ such that 
\begin{align*}
& S_{1} = (j_{-}-M'-1) \ln \Omega_{1} + \bigO(e^{-cn}),
\end{align*}
as $n \to +\infty$ uniformly for $u_{1} \in \{z \in \mathbb{C}: |z-x_{1}|\leq \delta\},\ldots,u_{m} \in \{z \in \mathbb{C}: |z-x_{m}|\leq \delta\}$.
\end{lemma}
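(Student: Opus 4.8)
The plan is to adapt the $S_{1}$ part of the proof of Lemma~\ref{lemma: S2km1} (equivalently \cite[Lemma 2.2]{Charlier 2d jumps}) almost verbatim; the only thing I really need to check is that its hypotheses survive the edge scaling. First I would record that, by \eqref{def of Omega j}, $\Omega_{1} = 1 + \sum_{\ell=1}^{m}\omega_{\ell} = e^{u_{1}+\cdots+u_{m}}$, so for any fixed $\delta>0$ there are constants $0<c_{1}\leq c_{2}$, depending only on $x_{1},\ldots,x_{m}$, with $c_{1}\leq|\Omega_{1}|\leq c_{2}$, $|\ln\Omega_{1}|\leq c_{2}$ and $|\omega_{1}|,\ldots,|\omega_{m}|\leq c_{2}$ uniformly for $u_{j}$ in the disk of radius $\delta$ about $x_{j}$.

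The key observation is that for $j\in\{M'+1,\ldots,j_{-}-1\}$ one has $j+\alpha\leq\frac{n}{1+\epsilon}$, whence for all large $n$ and all $\ell\in\{1,\ldots,m\}$
\[
\lambda_{j,\ell} = \frac{bnr_{\ell}^{2b}}{j+\alpha} = \frac{n(1+\sqrt{2b}\,\mathfrak{s}_{\ell}/\sqrt{n})}{j+\alpha} \geq (1+\epsilon)\Bigl(1+\tfrac{\sqrt{2b}\,\mathfrak{s}_{\ell}}{\sqrt{n}}\Bigr) \geq 1+\tfrac{\epsilon}{2},
\]
while also $z_{\ell}=nr_{\ell}^{2b}\geq\frac{n}{2b}$. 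Writing $\phi(\lambda):=\lambda-1-\ln\lambda$, a short case split --- $\lambda_{j,\ell}\leq\Lambda_{0}$, where $\phi(\lambda_{j,\ell})\geq\phi(1+\tfrac{\epsilon}{2})>0$ and $a_{j}=z_{\ell}/\lambda_{j,\ell}\geq\frac{n}{2b\Lambda_{0}}$, versus $\lambda_{j,\ell}>\Lambda_{0}$, where $\phi(\lambda_{j,\ell})\geq\tfrac12\lambda_{j,\ell}$ once $\Lambda_{0}$ is large enough so that $a_{j}\phi(\lambda_{j,\ell})\geq\tfrac12 z_{\ell}$ --- then gives $a_{j}\,\phi(\lambda_{j,\ell})\geq cn$ for some $c>0$, uniformly over this range of $j$ and $\ell$. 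Feeding this into the uniform asymptotics of $\gamma(a,z)$ recalled in Appendix~\ref{section:uniform asymp gamma} (cf.\ Lemma~\ref{lemma:various regime of gamma}) yields $\gamma(a_{j},z_{\ell})/\Gamma(a_{j})=1+\bigO(e^{-cn})$ with an error independent of $u_{1},\ldots,u_{m}$; hence $1+\sum_{\ell=1}^{m}\omega_{\ell}\gamma(a_{j},z_{\ell})/\Gamma(a_{j})=\Omega_{1}+\bigO(e^{-cn})$, and since $|\Omega_{1}|\geq c_{1}$, taking logarithms gives $\ln(1+\sum_{\ell=1}^{m}\omega_{\ell}\gamma(a_{j},z_{\ell})/\Gamma(a_{j}))=\ln\Omega_{1}+\bigO(e^{-cn})$. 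Summing over the $j_{-}-M'-1=\bigO(n)$ values of $j$ and absorbing the polynomial prefactor into the exponential then gives $S_{1}=(j_{-}-M'-1)\ln\Omega_{1}+\bigO(e^{-cn})$, with uniformity in $u_{j}$ over the stated disks.

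I expect the only slightly delicate point to be the uniformity of the incomplete-gamma estimate as $j$ runs from the fixed value $M'+1$ (where $a_{j}$ is bounded and $z_{\ell}\to\infty$) up to $j_{-}-1$ (where $a_{j}\asymp n$ and $\lambda_{j,\ell}$ stays bounded away from $1$); this whole transition is handled uniformly by the single bound $a_{j}\phi(\lambda_{j,\ell})\geq cn$ above, which is exactly the edge counterpart of the estimate used in the bulk case. So I do not anticipate any genuine new obstacle --- it is essentially the same computation as in Lemma~\ref{lemma: S2km1} with $br^{2b}$ replaced by $1$, and with no analogue of the sum $S_{3}$ needed since at the edge the ratios $\lambda_{j}$ never drop a fixed amount below $1$.
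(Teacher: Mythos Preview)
Your proposal is correct and follows the same approach as the paper intends: the paper's own proof is the single line ``The claim follows as in Lemma~\ref{lemma: S2km1}'' (which in turn defers to \cite[Lemma~2.2]{Charlier 2d jumps}), and you have accurately reconstructed that argument for the edge scaling, including the key uniform bound $a_{j}\phi(\lambda_{j,\ell})\geq cn$ that controls the incomplete-gamma error across the full range $M'+1\leq j\leq j_{-}-1$. Your closing remark that no $S_{3}$ is needed at the edge because $\lambda_{j}\geq \tfrac{n}{n+\alpha}$ never drops a fixed amount below $1$ is also exactly the structural difference between Sections~\ref{section:proof} and~\ref{section:proof edge}.
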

\begin{proof}
The claim follows as in Lemma \ref{lemma: S2km1}.
\end{proof}
We now focus on $S_{2}$. For $j=1,\ldots,n$ and $k =1,\ldots,m$, define $a_{j}$, $z_{k}$, $\lambda_{j,k}$, $\lambda_{j}$ and $\eta_{j,k}$ as in \eqref{def etajl} with $r$ replaced by $b^{-\frac{1}{2b}}$, i.e. 
\begin{align*}
\lambda_{j,k} := \frac{z_{k}}{a_{j}} = \frac{bnr_{k}^{2b}}{j+\alpha}, \qquad \lambda_{j} := \frac{n}{j+\alpha}, \qquad \eta_{j,k} := (\lambda_{j,k}-1)\sqrt{\frac{2 (\lambda_{j,k}-1-\ln \lambda_{j,k})}{(\lambda_{j,k}-1)^{2}}},
\end{align*}
with $a_{j}:=\frac{j+\alpha}{b}$, $z_{k}:=nr_{k}^{2b}$ and $r_{1},\ldots,r_{m}$ as in the statement of Theorem \ref{thm:main thm edge}. Let $M:=M'\sqrt{\ln n}$. We split $S_{2}$ in two pieces as follows
\begin{align*}
& S_{2}=S_{2}^{(2)}+S_{2}^{(3)},
\end{align*}
where
\begin{align}\label{asymp prelim of S2kpvp edge}
& S_{2}^{(v)} = \sum_{j:\lambda_{j}\in I_{v}}  \ln \bigg( 1+\sum_{\ell=1}^{m} \omega_{\ell} \frac{\gamma(\tfrac{j+\alpha}{b},nr_{\ell}^{2b})}{\Gamma(\tfrac{j+\alpha}{b})} \bigg), \quad v=2,3,
\end{align}
and
\begin{align*}
I_{2} = [\tfrac{1}{1+\frac{\alpha}{n}},1+\tfrac{M}{\sqrt{n}}], \qquad I_{3} = (1+\tfrac{M}{\sqrt{n}},1+\epsilon].
\end{align*}
The sums $S_{2}^{(2)}$ and $S_{2}^{(3)}$ can also be rewritten using
\begin{align}\label{sums lambda j edge}
& \sum_{j:\lambda_{j}\in I_{3}} = \sum_{j=j_{-}}^{g_{-}-1}, \qquad \sum_{j:\lambda_{j}\in I_{2}} = \sum_{j= g_{-}}^{n}, 
\end{align}
where $g_{-} := \lceil \frac{n}{1+\frac{M}{\sqrt{n}}}-\alpha \rceil$. Let us also define
\begin{align*}
& \theta_{-}^{(n,M)} := g_{-} - \bigg( \frac{n}{1+\frac{M}{\sqrt{n}}} - \alpha \bigg) = \bigg\lceil \frac{n}{1+\frac{M}{\sqrt{n}}} - \alpha \bigg\rceil - \bigg( \frac{n}{1+\frac{M}{\sqrt{n}}} - \alpha \bigg).
\end{align*}
Clearly, $\theta_{-}^{(n,M)} \in [0,1)$.

\begin{lemma}\label{lemma:S2kp3p edge}
The constant $M'$ can be chosen sufficiently large such that the following holds. For any $x_{1},\ldots,x_{m} \in \mathbb{R}$, there exists $\delta > 0$ such that
\begin{align*}
S_{2}^{(3)} = & \; \Big( n - j_{-} - M\sqrt{n} + M^{2} -\alpha+\theta_{-}^{(n,M)} - M^{3}n^{-\frac{1}{2}} \Big) \ln  \Omega_{1} + \bigO(M^{4}n^{-1}),
\end{align*}
as $n \to +\infty$ uniformly for $u_{1} \in \{z \in \mathbb{C}: |z-x_{1}|\leq \delta\},\ldots,u_{m} \in \{z \in \mathbb{C}: |z-x_{m}|\leq \delta\}$.
\end{lemma}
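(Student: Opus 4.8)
The plan is to proceed exactly as in the proof of Lemma \ref{lemma:S2kp3p}, making only the cosmetic replacement of $r$ by $b^{-\frac{1}{2b}}$ (so that $br^{2b}=1$) and using the edge-specific index identities \eqref{sums lambda j edge} in place of their bulk analogues. Concretely, I would first invoke \eqref{asymp prelim of S2kpvp edge} and Lemma \ref{lemma: uniform} to rewrite
\begin{align*}
S_2^{(3)} = \sum_{j:\lambda_j\in I_3}\ln\bigg(1+\sum_{\ell=1}^m\omega_\ell\Big[\tfrac12\mathrm{erfc}\big(-\eta_{j,\ell}\sqrt{a_j/2}\big)-R_{a_j}(\eta_{j,\ell})\Big]\bigg);
\end{align*}
this is legitimate because for $j$ with $\lambda_j\in I_3$ one has $\lambda_{j,\ell}=\lambda_j\big(1+\sqrt{2b}\,\mathfrak{s}_\ell/\sqrt n\big)$ lying in $(1,1+2\epsilon]$ for large $n$, which is inside the range of validity of Lemma \ref{lemma: uniform}.

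Next I would record the elementary estimates: on $I_3$ one has $a_j=(j+\alpha)/b\asymp n/b$ and $\lambda_{j,\ell}-1\ge M/\sqrt n+\bigO(1/\sqrt n)$, hence $\eta_{j,\ell}=\lambda_{j,\ell}-1+\bigO((\lambda_{j,\ell}-1)^2)\ge M/\sqrt n+\bigO(1/\sqrt n)$ and $-\eta_{j,\ell}\sqrt{a_j/2}\le -M/\sqrt{2b}+\bigO(1)$. Since $M=M'\sqrt{\ln n}$, for $M'$ large enough the asymptotics \eqref{asymp of Ra} together with the bound $\mathrm{erfc}(x)\le e^{-x^2}$ for $x\ge 0$ give $R_{a_j}(\eta_{j,\ell})=\bigO(e^{-cM^2})=\bigO(n^{-10})$ and $\tfrac12\mathrm{erfc}(-\eta_{j,\ell}\sqrt{a_j/2})=1-\bigO(e^{-cM^2})=1-\bigO(n^{-10})$, uniformly in $j$ with $\lambda_j\in I_3$ and in $u_1,\dots,u_m$ on the disks (the bounds do not involve $\omega_1,\dots,\omega_m$). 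Using $1+\sum_{\ell=1}^m\omega_\ell=\Omega_1$ and that $\ln\Omega_1=u_1+\dots+u_m$ is bounded on the disks, each summand is $\ln\Omega_1+\bigO(n^{-10})$, so by \eqref{sums lambda j edge} I would get $S_2^{(3)}=\sum_{j=j_-}^{g_--1}\ln\Omega_1+\bigO(n^{-9})=(g_--j_-)\ln\Omega_1+\bigO(n^{-9})$.

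Finally, I would expand the count using $g_-=\big\lceil \tfrac{n}{1+M/\sqrt n}-\alpha\big\rceil=\tfrac{n}{1+M/\sqrt n}-\alpha+\theta_-^{(n,M)}$ together with $\tfrac{n}{1+M/\sqrt n}=n-M\sqrt n+M^2-M^3n^{-1/2}+\bigO(M^4n^{-1})$, which yields
\begin{align*}
g_--j_-=n-j_--M\sqrt n+M^2-\alpha+\theta_-^{(n,M)}-M^3n^{-1/2}+\bigO(M^4n^{-1}),
\end{align*}
and multiplying by $\ln\Omega_1=\bigO(1)$ and absorbing the $\bigO(n^{-9})$ error gives the claim. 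There is no genuine obstacle here: this lemma is the exact edge counterpart of Lemma \ref{lemma:S2kp3p}, and all the substantive work is already packaged in the uniform incomplete-gamma asymptotics of Lemma \ref{lemma: uniform} and in \eqref{asymp of Ra}. The only points requiring a little care are (i) choosing $M'$ large enough that $e^{-cM^2}=n^{-cM'^2}$ dominates every fixed power of $n$ appearing in the estimates, and (ii) noting that, although $M\to\infty$, the Taylor remainder $\bigO(M^4/n)$ is still $o(1)$ and comfortably absorbs the $\bigO(n^{-9})$ error — both are immediate since $M=M'\sqrt{\ln n}$.
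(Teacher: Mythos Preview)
Your proposal is correct and follows essentially the same approach as the paper: the paper's own proof simply says to repeat the argument of Lemma~\ref{lemma:S2kp3p} verbatim, the only difference being the final expansion of $g_{-}-j_{-}$ in the edge normalization (where $br^{2b}=1$), which is exactly the computation you carry out.
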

\begin{proof}
The proof follows the proof of Lemma \ref{lemma:S2kp3p}, except that at the last step we need to use
\begin{align}
& \sum_{j = j_{-}}^{g_{-}-1} 1 = g_{-}-j_{-} = \bigg( \frac{n}{1+\frac{M}{\sqrt{n}}} - \alpha \bigg)+\theta_{-}^{(n,M)} -j_{-}  \nonumber \\
& = n - j_{-} - M\sqrt{n} + M^{2} -\alpha+\theta_{-}^{(n,M)} - M^{3} n^{-\frac{1}{2}} + \bigO(M^{4}n^{-1}), \quad \mbox{as } n \to + \infty. \label{sum jmin edge}
\end{align}
\end{proof}
For $k \in \{1,\ldots,m\}$ and $j \in \{j: \lambda_{j} \in I_{2}\}=\{g_{-},\ldots,n\}$, we define $M_{j,k} := \sqrt{n}(\lambda_{j,k}-1)$ and $M_{j} := \sqrt{n}(\lambda_{j}-1)$.

\begin{lemma}\label{lemma:Riemann sum edge}(Taken from \cite[Lemma 2.7]{Charlier 2d jumps})
Let $f \in C^{3}(\mathbb{R})$ be a function such that $|f|,|f'|,|f''|,|f'''|$ are bounded. As $n \to + \infty$ we have
\begin{align}
& \sum_{j=g_{-}}^{n}f(M_{j}) = \int_{0}^{M} f(t) dt \; \sqrt{n} - 2 \int_{0}^{M} tf(t) dt + \bigg( \frac{1}{2}-\theta_{-}^{(n,M)} \bigg)f(M) + \bigg( \frac{1}{2} + \alpha \bigg)f(0) \nonumber \\
& + \frac{1}{\sqrt{n}}\bigg[ 3 \int_{0}^{M}t^{2}f(t)dt + \bigg( \frac{1}{12}+\frac{\theta_{-}^{(n,M)}(\theta_{-}^{(n,M)}-1)}{2} \bigg)f'(M) - \frac{1+6\alpha+6\alpha^{2}}{12}f'(0) \bigg] + \bigO(M^{4}n^{-1}). \label{sum f asymp 2 edge}
\end{align}
\end{lemma}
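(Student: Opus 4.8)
The plan is to derive \eqref{sum f asymp 2 edge} from the Euler--Maclaurin summation formula, after transporting the sum from the index $j$ to the variable $t=M_j$; this is precisely the computation performed in \cite[Lemma 2.7]{Charlier 2d jumps}, and it parallels the proof of the bulk statement in Lemma \ref{lemma:Riemann sum}, the one difference being that here the \emph{upper} endpoint of the sum is the exact integer $n$ rather than a rounded quantity. It is exactly this feature that produces the boundary contributions $(\tfrac12+\alpha)f(0)$ and $-\tfrac{1+6\alpha+6\alpha^2}{12}f'(0)$ at $t=0$; indeed $M_n=\sqrt n\,\tfrac{-\alpha}{n+\alpha}=-\tfrac{\alpha}{\sqrt n}+\bigO(n^{-3/2})$ is close to, but not equal to, $0$, and Taylor expansion of $f$ (and of $h,h'$ below) at $t=0$ around the point $M_n$ generates $\alpha$-dependent lower-order terms.

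First I would write $M_j=\phi(j)$ with $\phi(x):=\tfrac{n^{3/2}}{x+\alpha}-\sqrt n$, which is smooth and strictly decreasing on $(-\alpha,+\infty)$, and set $h(x):=f(\phi(x))$, so that $\sum_{j=g_-}^{n}f(M_j)=\sum_{j=g_-}^{n}h(j)$. Since $\phi',\phi'',\phi'''$ are comparable to $n^{-1/2},n^{-3/2},n^{-5/2}$ on $[g_-,n]$ (where $x+\alpha$ is comparable to $n$), the derivatives of $h$ satisfy $\sup|h'|=\bigO(n^{-1/2})$, $\sup|h''|=\bigO(n^{-1})$, $\sup|h'''|=\bigO(n^{-3/2})$, with implied constants depending only on $\sup_{\mathbb R}(|f'|,|f''|,|f'''|)$. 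Because $g_-$ and $n$ are both integers, the Euler--Maclaurin formula gives
\begin{align*}
\sum_{j=g_-}^{n}h(j)=\int_{g_-}^{n}h(x)\,dx+\frac{h(g_-)+h(n)}{2}+\frac{1}{12}\bigl(h'(n)-h'(g_-)\bigr)+E_n,
\end{align*}
and integrating the remainder by parts once (using $B_3(0)=B_3(1)=0$, so the periodized third Bernoulli polynomial vanishes at integers and the boundary terms drop) shows $E_n=\bigO\bigl(\int_{g_-}^{n}|h'''|\bigr)=\bigO\bigl((n-g_-)n^{-3/2}\bigr)=\bigO(Mn^{-1})$, which is absorbed into the stated error. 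In the integral I would then substitute $t=\phi(x)$, i.e. $x+\alpha=n^{3/2}/(t+\sqrt n)$, $dx=-n^{3/2}(t+\sqrt n)^{-2}dt$, obtaining $\int_{g_-}^{n}h=\int_{M_n}^{M_{g_-}}f(t)\,\tfrac{n^{3/2}}{(t+\sqrt n)^2}\,dt$ with $\tfrac{n^{3/2}}{(t+\sqrt n)^2}=\sqrt n-2t+\tfrac{3t^2}{\sqrt n}+\bigO(t^3 n^{-1})$, expand the endpoints $M_n=-\tfrac{\alpha}{\sqrt n}+\bigO(n^{-3/2})$ and $M_{g_-}=M-\tfrac{\theta_-^{(n,M)}}{\sqrt n}+\bigO(Mn^{-1})$ (the latter using $g_-+\alpha=\tfrac{n}{1+M/\sqrt n}+\theta_-^{(n,M)}$), split $\int_{M_n}^{M_{g_-}}=\int_{M_n}^{0}+\int_0^{M}+\int_{M}^{M_{g_-}}$, and Taylor-expand the integrand (resp. $h,h'$) at $t=0$ and at $t=M$ in the two boundary pieces.

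Finally I would collect the contributions according to the power of $n$. At order $\sqrt n$ only $\int_0^M f$ survives; at order $1$ one obtains $-2\int_0^M tf$ from the $-2t$ term of the integrand, $(\tfrac12-\theta_-^{(n,M)})f(M)$ from $\tfrac12 h(g_-)$ together with $\sqrt n\int_M^{M_{g_-}}f$, and $(\tfrac12+\alpha)f(0)$ from $\tfrac12 h(n)$ together with $\sqrt n\int_{M_n}^0 f$; at order $n^{-1/2}$ one obtains $3\int_0^M t^2 f$, the coefficient $\tfrac1{12}+\tfrac{\theta_-^{(n,M)}(\theta_-^{(n,M)}-1)}{2}$ in front of $f'(M)$ (combining $-\tfrac1{12}h'(g_-)$, $\tfrac12 h(g_-)$ and $\sqrt n\int_M^{M_{g_-}}f$), and the coefficient $-\tfrac12\alpha^2-\tfrac12\alpha-\tfrac1{12}=-\tfrac{1+6\alpha+6\alpha^2}{12}$ in front of $f'(0)$ (combining $\sqrt n\int_{M_n}^0 f$, $\tfrac12 h(n)$ and $\tfrac1{12}h'(n)$); the spurious cross-terms, such as $\pm\tfrac{2M\theta_-^{(n,M)}}{\sqrt n}f(M)$ coming from the $\sqrt n$- and the $(-2t)$-pieces of the integrand, cancel pairwise. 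This reproduces \eqref{sum f asymp 2 edge}. The only genuine work, hence the main obstacle, is this last bookkeeping step: one must track precisely which boundary correction feeds into which order in $n$, verify that all $M$-dependent cross-terms cancel so that the error truly remains $\bigO(M^4 n^{-1})$, and confirm that every implied constant depends only on the four sup-norms of $f$, so that the lemma can later be applied uniformly in the parameters $u_1,\dots,u_m$.
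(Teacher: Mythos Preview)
Your proposal is correct; the Euler--Maclaurin argument you outline, together with the endpoint expansions $M_n=-\alpha/\sqrt{n}+\bigO(n^{-3/2})$ and $M_{g_-}=M-\theta_-^{(n,M)}/\sqrt{n}+\bigO(Mn^{-1})$, reproduces all the claimed coefficients (I checked in particular the $t=0$ combination $-\tfrac{\alpha^2}{2}-\tfrac{\alpha}{2}-\tfrac{1}{12}$ and the $t=M$ combination $\tfrac{\theta_-^2}{2}-\tfrac{\theta_-}{2}+\tfrac1{12}$, and the cancellation of the $\pm 2M\theta_- f(M)/\sqrt n$ cross-terms you flag). Note that the paper does not give its own proof of this lemma: it is simply quoted from \cite[Lemma~2.7]{Charlier 2d jumps}, so there is nothing in the present paper to compare with beyond the citation; your approach is the standard one and is essentially what that reference does.
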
 

\begin{lemma}\label{lemma:S2kp2p edge}
For any $x_{1},\ldots,x_{p} \in \mathbb{R}$, there exists $\delta > 0$ such that
\begin{align*}
&  S_{2}^{(2)} = \widetilde{C}_{2}^{(M)}\sqrt{n} + \widetilde{C}_{3}^{(n,M)} + \frac{1}{\sqrt{n}}\widetilde{C}_{4}^{(n,M)} + \bigO(M^{4}n^{-1}), \\
& \widetilde{C}_{2}^{(M)} = \int_{0}^{M} f_{1}(t) dt,  \\
& \widetilde{C}_{3}^{(n,M)} = \int_{0}^{M} \Big( -2t f_{1}(t)+f_{2}(t) \Big) dt + \bigg( \frac{1}{2}-\theta_{-}^{(n,M)} \bigg)f_{1}(M)+\bigg( \frac{1}{2} + \alpha \bigg)f_{1}(0), \\
& \widetilde{C}_{4}^{(n,M)} = \int_{0}^{M}\Big( 3t^{2}f_{1}(t)-2tf_{2}(t)+f_{3}(t) \Big) dt + \bigg( \frac{1}{2}-\theta_{-}^{(n,M)} \bigg)f_{2}(M)+\bigg( \frac{1}{2}+\alpha \bigg)f_{2}(0) \\
& + \bigg( \frac{1}{12}+\frac{\theta_{-}^{(n,M)}(\theta_{-}^{(n,M)}-1)}{2} \bigg) f_{1}'(M) - \frac{1+6\alpha+6\alpha^{2}}{12} f_{1}'(0) ,
\end{align*}
as $n \to +\infty$ uniformly for $u_{1} \in \{z \in \mathbb{C}: |z-x_{1}|\leq \delta\},\ldots,u_{p} \in \{z \in \mathbb{C}: |z-x_{p}|\leq \delta\}$, where $g$, $f_{1}$, $f_{2}$ and $f_{3}$ are as in the statement of Lemma \ref{lemma:S2kp2p} with $r$ replaced by $b^{-\frac{1}{2b}}$. 
\end{lemma}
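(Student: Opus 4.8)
The plan is to follow the proof of Lemma~\ref{lemma:S2kp2p} essentially verbatim, the only structural changes being that the relevant Riemann-sum asymptotics are now supplied by Lemma~\ref{lemma:Riemann sum edge} in place of Lemma~\ref{lemma:Riemann sum}, and that one substitutes $r=b^{-\frac{1}{2b}}$ throughout. In particular $br^{2b}=1$, $r^{b}=b^{-1/2}$, $r^{-b}=\sqrt{b}$, which explains the disappearance of the $br^{2b}$ prefactors and the relation $M_{j,k}=M_{j}+\sqrt{2b}\,\mathfrak{s}_{k}+\sqrt{2b}\,\mathfrak{s}_{k}M_{j}n^{-1/2}$ for $j\in\{g_{-},\ldots,n\}$.

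First, using \eqref{asymp prelim of S2kpvp edge} and Lemma~\ref{lemma: uniform},
\[
S_{2}^{(2)}=\sum_{j:\lambda_{j}\in I_{2}}\ln\bigg(1+\sum_{\ell=1}^{m}\omega_{\ell}\Big[\tfrac{1}{2}\mathrm{erfc}\big(-\eta_{j,\ell}\sqrt{a_{j}/2}\big)-R_{a_{j}}(\eta_{j,\ell})\Big]\bigg).
\]
For $j\in\{j:\lambda_{j}\in I_{2}\}=\{g_{-},\ldots,n\}$ one has $-M\le M_{j}\le M$; note that $M_{n}=\sqrt{n}(\tfrac{n}{n+\alpha}-1)=-\alpha n^{-1/2}+\bigO(n^{-1})$, so that, up to $\bigO(n^{-1/2})$, the points $M_{g_{-}},\ldots,M_{n}$ fill the interval $[0,M]$. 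The expansions \eqref{asymp etaj and etajsqrtajover2 1}--\eqref{asymp etaj and etajsqrtajover2 2} for $\eta_{j,\ell}$ and $-\eta_{j,\ell}\sqrt{a_{j}/2}$ hold verbatim with $r=b^{-\frac{1}{2b}}$. Substituting them into \eqref{asymp of Ra} and into the Taylor expansion of $\mathrm{erfc}$ around $-\tfrac{r^{b}}{\sqrt{2}}(M_{j}+\sqrt{2}\,r^{-b}\mathfrak{s}_{\ell})$, exactly as in the proof of Lemma~\ref{lemma:S2kp2p}, and recalling the identity $g(-\sqrt{2}\,r^{-b}t)=\mathcal{H}_{1}(t;\vec{u},\vec{\mathfrak{s}})$ together with Lemma~\ref{Hjpositivelemma} (which gives $g(x)>0$, so that $f_{1},f_{2},f_{3}$ are well-defined and real-valued), one obtains
\[
S_{2}^{(2)}=\Sigma_{1}^{(n)}+\tfrac{1}{\sqrt{n}}\Sigma_{2}^{(n)}+\tfrac{1}{n}\Sigma_{3}^{(n)}+\bigO(n^{-1}),\qquad \Sigma_{v}^{(n)}:=\sum_{j=g_{-}}^{n}f_{v}(M_{j}),
\]
where $f_{1},f_{2},f_{3}$ are as in Lemma~\ref{lemma:S2kp2p} with $r=b^{-\frac{1}{2b}}$; here the error terms from the expansions are absorbed because they are suppressed by factors of the form $e^{-cM_{j}^{2}}$.

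Next, I would apply Lemma~\ref{lemma:Riemann sum edge} to each of $\Sigma_{1}^{(n)},\Sigma_{2}^{(n)},\Sigma_{3}^{(n)}$. The functions $f_{1},f_{2},f_{3}\in C^{3}(\mathbb{R})$ have bounded derivatives, and $f_{2},f_{3}$ together with all their derivatives decay exponentially at $\pm\infty$ (they carry the Gaussian factor $e^{-(x+\sqrt{2}\,r^{-b}\mathfrak{s}_{\ell})^{2}r^{2b}/2}$), so that after multiplication by $n^{-1/2}$, respectively $n^{-1}$, the terms $\Sigma_{2}^{(n)},\Sigma_{3}^{(n)}$ contribute only at orders $n^{0}$ and $n^{-1/2}$ (plus $\bigO(M^{4}n^{-1})$); their boundary terms at $t=M$ are exponentially small, while their boundary terms at $t=0$ survive. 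Collecting the coefficients of $\sqrt{n}$, $1$ and $n^{-1/2}$ in $\Sigma_{1}^{(n)}+n^{-1/2}\Sigma_{2}^{(n)}+n^{-1}\Sigma_{3}^{(n)}$, and using $br^{2b}=1$, reproduces $\widetilde{C}_{2}^{(M)}$, $\widetilde{C}_{3}^{(n,M)}$, $\widetilde{C}_{4}^{(n,M)}$ as stated.

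The computation is completely parallel to the bulk case and involves no genuine obstacle; the one point requiring care is the bookkeeping of the $t=0$ boundary contributions. Because the summation index now runs exactly up to $j=n$, where $\lambda_{n}=\tfrac{n}{n+\alpha}$ and hence $M_{n}\to0$, the left endpoint of the underlying Riemann sum is $t=0$ rather than $t=-M$; the boundary terms there, namely $(\tfrac{1}{2}+\alpha)f_{1}(0)$, $(\tfrac{1}{2}+\alpha)f_{2}(0)$ and $-\tfrac{1+6\alpha+6\alpha^{2}}{12}f_{1}'(0)$, are exactly those delivered by \eqref{sum f asymp 2 edge} with the normalization $br^{2b}=1$, and they carry all the $\alpha$-dependence of $\widetilde{C}_{3}^{(n,M)}$ and $\widetilde{C}_{4}^{(n,M)}$.
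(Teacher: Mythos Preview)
Your proposal is correct and follows essentially the same approach as the paper: replace $r$ by $b^{-\frac{1}{2b}}$ and $g_{+}$ by $n$ in the proof of Lemma~\ref{lemma:S2kp2p}, obtain the same decomposition $S_{2}^{(2)}=\Sigma_{1}^{(n)}+n^{-1/2}\Sigma_{2}^{(n)}+n^{-1}\Sigma_{3}^{(n)}+\bigO(n^{-1})$, and then apply Lemma~\ref{lemma:Riemann sum edge} in place of Lemma~\ref{lemma:Riemann sum}. Your added discussion of the $t=0$ boundary terms and why they carry the $\alpha$-dependence is a helpful elaboration of what the paper leaves implicit.
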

\begin{proof}
The first part of the proof is identical to the beginning of the proof of Lemma \ref{lemma:S2kp2p}, except that one needs to replace $r$ and $g_{+}$ by $b^{-\frac{1}{2b}}$ and $n$, respectively. In particular, we find
\begin{align}
S_{2}^{(2)} & = \Sigma_{1}^{(n)}+\frac{1}{\sqrt{n}}\Sigma_{2}^{(n)}+\frac{1}{n}\Sigma_{3}^{(n)} + \bigO(n^{-1}), \qquad \mbox{as } n \to + \infty, \label{asymp of S2kp2p in proof edge}
\end{align}
where $\Sigma_{1}^{(n)}$, $\Sigma_{2}^{(n)}$ and $\Sigma_{3}^{(n)}$ are in the proof of Lemma \ref{lemma:S2kp2p} with $r$ and $g_{+}$ replaced by $b^{-\frac{1}{2b}}$ and $n$, respectively. The asymptotics of these sums can then be obtained using Lemma \ref{lemma:Riemann sum edge}. After a computation, we then find the claim.
\end{proof}
\begin{lemma}\label{lemma: asymp of S2k final edge}
The constant $M'$ can be chosen sufficiently large such that the following holds. For any $x_{1},\ldots,x_{p} \in \mathbb{R}$, there exists $\delta > 0$ such that
\begin{align*}
& S_{2} =  ( n - j_{-} ) \ln  \Omega_{1}  + C_{2} \sqrt{n} + \widetilde{C}_{3} + \frac{1}{\sqrt{n}} C_{4} + \bigO(M^{4}n^{-1}),
\end{align*}
as $n \to +\infty$ uniformly for $u_{1} \in \{z \in \mathbb{C}: |z-x_{1}|\leq \delta\},\ldots,u_{p} \in \{z \in \mathbb{C}: |z-x_{p}|\leq \delta\}$, where
\begin{align*}
& C_{2} = \int_{0}^{+\infty} \big( f_{1}(t)-\ln \Omega_{1} \big) dt, \\
& \widetilde{C}_{3} = \bigg( \frac{1}{2} - \alpha \bigg) \ln \Omega_{1} + \int_{0}^{+\infty} (-2t[f_{1}(t)-\ln \Omega_{1}]+f_{2}(t)) dt + \bigg( \frac{1}{2}+\alpha \bigg)f_{1}(0), \\
& C_{4} = \int_{0}^{+\infty} \big( 3t^{2}(f_{1}(t)-\ln(\Omega_{1}))-2tf_{2}(t)+f_{3}(t) \big)dt - \frac{1+6\alpha+6\alpha^{2}}{12}f_{1}'(0) + \bigg( \frac{1}{2}+\alpha \bigg)f_{2}(0).
\end{align*}
\end{lemma}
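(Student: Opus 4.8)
The plan is to mimic the proof of Lemma~\ref{lemma: asymp of S2k final}, the differences being that at the edge there is no $S_{2}^{(1)}$ term, that the left endpoint of the $I_{2}$-sum is $0$ rather than $-M$, and that the density in the Riemann sum of Lemma~\ref{lemma:Riemann sum edge} equals $1$ (which is $br^{2b}$ evaluated at $r=b^{-1/(2b)}$). First I would add the expansions of Lemmas~\ref{lemma:S2kp3p edge} and~\ref{lemma:S2kp2p edge}, using $S_{2}=S_{2}^{(2)}+S_{2}^{(3)}$, and collect terms according to the power of $n$; this yields
\begin{align*}
S_{2} = (n-j_{-})\ln\Omega_{1} + \widehat{C}_{2}^{(M)}\sqrt{n} + \widehat{C}_{3}^{(n,M)} + \frac{1}{\sqrt{n}}\widehat{C}_{4}^{(n,M)} + \bigO(M^{4}n^{-1}),
\end{align*}
where $\widehat{C}_{2}^{(M)}:=\widetilde{C}_{2}^{(M)}-M\ln\Omega_{1}$, $\widehat{C}_{3}^{(n,M)}:=\widetilde{C}_{3}^{(n,M)}+(M^{2}-\alpha+\theta_{-}^{(n,M)})\ln\Omega_{1}$ and $\widehat{C}_{4}^{(n,M)}:=\widetilde{C}_{4}^{(n,M)}-M^{3}\ln\Omega_{1}$.

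The one analytic input needed is exponential decay. Since $g(x)\to\Omega_{1}$ as $x\to+\infty$ at a Gaussian rate (because $2-\mathrm{erfc}(-y)=\mathrm{erfc}(y)=\bigO(e^{-y^{2}})$), the function $f_{1}(x)-\ln\Omega_{1}=\ln(g(x)/\Omega_{1})$ and all of its derivatives decay like $e^{-cx^{2}}$, while $f_{2}$, $f_{3}$ and their derivatives carry explicit Gaussian factors and decay likewise. I would then rewrite the integrals in $\widetilde{C}_{2}^{(M)},\widetilde{C}_{3}^{(n,M)},\widetilde{C}_{4}^{(n,M)}$ by subtracting $\ln\Omega_{1}$ from $f_{1}$, using $\int_{0}^{M}f_{1}=\int_{0}^{M}(f_{1}-\ln\Omega_{1})+M\ln\Omega_{1}$, $\int_{0}^{M}(-2tf_{1})=\int_{0}^{M}(-2t[f_{1}-\ln\Omega_{1}])-M^{2}\ln\Omega_{1}$ and $\int_{0}^{M}3t^{2}f_{1}=\int_{0}^{M}3t^{2}[f_{1}-\ln\Omega_{1}]+M^{3}\ln\Omega_{1}$. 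The polynomial-in-$M$ terms generated this way cancel exactly the $-M\ln\Omega_{1}$, $+M^{2}\ln\Omega_{1}$ and $-M^{3}\ln\Omega_{1}$ contributions inside $\widehat{C}_{2}^{(M)},\widehat{C}_{3}^{(n,M)},\widehat{C}_{4}^{(n,M)}$, and the remaining integrands are integrable on $[0,+\infty)$.

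It then remains to replace $[0,M]$ by $[0,+\infty)$ and to simplify the boundary terms at $M$: extending each integral costs $\bigO(e^{-cM^{2}})$, while $(\tfrac12-\theta_{-}^{(n,M)})f_{1}(M)$, $(\tfrac12-\theta_{-}^{(n,M)})f_{2}(M)$ and $\big(\tfrac{1}{12}+\tfrac{\theta_{-}^{(n,M)}(\theta_{-}^{(n,M)}-1)}{2}\big)f_{1}'(M)$ equal $(\tfrac12-\theta_{-}^{(n,M)})\ln\Omega_{1}$, $0$ and $0$ respectively up to $\bigO(e^{-cM^{2}})$. In $\widehat{C}_{3}^{(n,M)}$ the surviving $\ln\Omega_{1}$-terms combine as $(\tfrac12-\theta_{-}^{(n,M)})\ln\Omega_{1}+(-\alpha+\theta_{-}^{(n,M)})\ln\Omega_{1}=(\tfrac12-\alpha)\ln\Omega_{1}$, so that the $\theta_{-}^{(n,M)}$-oscillations cancel in the constant term; in $\widehat{C}_{4}^{(n,M)}$ the remaining $\theta_{-}^{(n,M)}$-terms are multiplied by $f_{2}(M)$ or $f_{1}'(M)$ and are therefore already absorbed in the $\bigO(e^{-cM^{2}})$ error. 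Since $M=M'\sqrt{\ln n}$ we have $e^{-cM^{2}}=\bigO(n^{-c(M')^{2}})$, which is $\bigO(n^{-10})$ once $M'$ is chosen large enough; hence $\widehat{C}_{2}^{(M)}=C_{2}+\bigO(n^{-10})$, $\widehat{C}_{3}^{(n,M)}=\widetilde{C}_{3}+\bigO(n^{-10})$ and $\widehat{C}_{4}^{(n,M)}=C_{4}+\bigO(n^{-10})$, which proves the lemma. The main obstacle is purely the bookkeeping that ensures the $M$-divergent terms cancel and that the $\theta_{-}^{(n,M)}$-dependence drops out of the constant term; everything else is a direct transcription of the bulk argument with $r$ replaced by $b^{-1/(2b)}$ and all $\theta_{+}^{(n,M)}$-terms absent.
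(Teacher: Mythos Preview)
Your proof is correct and follows essentially the same route as the paper: combine Lemmas~\ref{lemma:S2kp3p edge} and~\ref{lemma:S2kp2p edge} into the $\widehat{C}$-coefficients, then show $\widehat{C}_{2}^{(M)}=C_{2}+\bigO(n^{-10})$, $\widehat{C}_{3}^{(n,M)}=\widetilde{C}_{3}+\bigO(n^{-10})$, $\widehat{C}_{4}^{(n,M)}=C_{4}+\bigO(n^{-10})$ for $M'$ large. You have spelled out the cancellation of the $M$-polynomial terms and of the $\theta_{-}^{(n,M)}$-oscillations more explicitly than the paper, which simply states ``and the claim follows''; the substance is identical.
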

\begin{proof}
By combining Lemmas \ref{lemma:S2kp3p edge} and \ref{lemma:S2kp2p edge}, we have
\begin{align*}
& S_{2} = ( n - j_{-} ) \ln  \Omega_{1} + \widehat{C}_{2}^{(M)} \sqrt{n} + \widehat{C}_{3}^{(n,M)} + \frac{1}{\sqrt{n}} \widehat{C}_{4}^{(n,M)} + \bigO(M^{4}n^{-1}),
\end{align*}
as $n \to +\infty$ uniformly for $u_{1} \in \{z \in \mathbb{C}: |z-x_{1}|\leq \delta\},\ldots,u_{p} \in \{z \in \mathbb{C}: |z-x_{p}|\leq \delta\}$, where
\begin{align*}
& \widehat{C}_{2}^{(M)} := \widetilde{C}_{2}^{(M)} - M \ln \Omega_{1}, \\
& \widehat{C}_{3}^{(n,M)} := \widetilde{C}_{3}^{(n,M)} + \Big( M^{2} -\alpha+\theta_{-}^{(n,M)}  \Big) \ln \Omega_{1}, \\
& \widehat{C}_{4}^{(n,M)} := \widetilde{C}_{4}^{(n,M)} - M^{3} \ln \Omega_{1}.
\end{align*}
Provided $M'$ is chosen sufficiently large, as $n \to + \infty$ we get
\begin{align*}
\widehat{C}_{2}^{(M)} = C_{2} + \bigO(n^{-10}), \qquad \widehat{C}_{3}^{(n,M)} = \widetilde{C}_{3} + \bigO(n^{-10}), \qquad \widehat{C}_{4}^{(n,M)} = C_{4} + \bigO(n^{-10}),
\end{align*}
and the claim follows.
\end{proof}

\begin{proof}[End of the proof of Theorem \ref{thm:main thm edge}]
Let $M' > 0$ be sufficiently large such that Lemmas \ref{lemma: S2km1 edge} and \ref{lemma: asymp of S2k final edge} hold. Using \eqref{log Dn as a sum of sums edge} and Lemmas \ref{lemma: S0 edge}, \ref{lemma: S2km1 edge} and \ref{lemma: asymp of S2k final edge}, we conclude that for any $x_{1},\ldots,x_{p} \in \mathbb{R}$, there exists $\delta > 0$ such that
\begin{align*}
& \ln \mathcal{E}_{n} = S_{0}+S_{1}+S_{2} \\
& = M' \ln \Omega_{1} + (j_{-}-M'-1) \ln \Omega_{1} + ( n - j_{-} ) \ln  \Omega_{1}  + C_{2} \sqrt{n} + \widetilde{C}_{3} + \frac{1}{\sqrt{n}} C_{4} + \bigO(M^{4}n^{-1}) \\
& = n \ln  \Omega_{1} + C_{2} \sqrt{n} + C_{3} + \frac{1}{\sqrt{n}} C_{4} + \bigO(M^{4}n^{-1}),
\end{align*}
as $n \to +\infty$ uniformly for $u_{1} \in \{z \in \mathbb{C}: |z-x_{1}|\leq \delta\},\ldots,u_{p} \in \{z \in \mathbb{C}: |z-x_{p}|\leq \delta\}$, where $C_{3} = \widetilde{C}_{3}-\ln \Omega_{1}$. Using \eqref{function H1}--\eqref{function G2}, \eqref{def of omegaell} and \eqref{def of Omega j}, the constants $C_{2}$, $C_{3}$ and $C_{4}$ can be rewritten as in \eqref{asymp in main thm edge} after a simple change of variables. This concludes the proof of Theorem \ref{thm:main thm edge}.
\end{proof}

\appendix
\section{Uniform asymptotics of the incomplete gamma function}\label{section:uniform asymp gamma}
\begin{lemma}\label{lemma:various regime of gamma}(From \cite[formula 8.11.2]{NIST}).
Let $a>0$ be fixed. As $z \to +\infty$,
\begin{align*}
\gamma(a,z) = \Gamma(a) + \bigO(e^{-\frac{z}{2}}).
\end{align*}
\end{lemma}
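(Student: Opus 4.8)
The plan is to reduce the claim to an elementary tail estimate for the upper incomplete gamma function. First I would write
\begin{align*}
\Gamma(a) - \gamma(a,z) = \int_{0}^{\infty} t^{a-1}e^{-t}\,dt - \int_{0}^{z} t^{a-1}e^{-t}\,dt = \int_{z}^{\infty} t^{a-1}e^{-t}\,dt,
\end{align*}
so that it suffices to show $\int_{z}^{\infty} t^{a-1}e^{-t}\,dt = \bigO(e^{-z/2})$ as $z \to +\infty$.

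The key step is to split the exponential as $e^{-t} = e^{-t/2}e^{-t/2}$ and observe that on the integration range $[z,\infty)$ one has $e^{-t/2} \le e^{-z/2}$. Pulling this factor out of the integral gives
\begin{align*}
\int_{z}^{\infty} t^{a-1}e^{-t}\,dt \le e^{-z/2}\int_{z}^{\infty} t^{a-1}e^{-t/2}\,dt \le e^{-z/2}\int_{0}^{\infty} t^{a-1}e^{-t/2}\,dt = 2^{a}\,\Gamma(a)\,e^{-z/2},
\end{align*}
where the last integral converges because $a>0$ guarantees integrability at the origin, and equals $2^a\Gamma(a)$ by the substitution $t \mapsto 2t$. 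Since $a$ is fixed, $2^a\Gamma(a)$ is a finite constant, which yields the claimed bound.

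There is essentially no obstacle here; the only point requiring a moment's care is that the implied constant may depend on $a$ (it is $2^a\Gamma(a)$), which is permitted since the lemma fixes $a$ at the outset. One could alternatively invoke the standard asymptotic expansion $\Gamma(a,z) \sim z^{a-1}e^{-z}$ from \cite{NIST}, but the direct estimate above is shorter and self-contained.
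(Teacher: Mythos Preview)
Your argument is correct: the tail identity $\Gamma(a)-\gamma(a,z)=\int_{z}^{\infty}t^{a-1}e^{-t}\,dt$ together with the splitting $e^{-t}=e^{-t/2}e^{-t/2}$ and the bound $\int_{0}^{\infty}t^{a-1}e^{-t/2}\,dt=2^{a}\Gamma(a)$ gives exactly the claimed estimate, with an $a$-dependent implied constant as permitted.

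As for comparison with the paper: the paper does not supply its own proof of this lemma at all --- it is simply quoted from \cite[formula 8.11.2]{NIST} as a known fact. Your self-contained derivation is therefore not a different route to the same proof but rather an addition where the paper relies on a reference. The direct estimate you give is arguably preferable here, since it avoids invoking the full asymptotic expansion of $\Gamma(a,z)$ when only a crude exponential bound is needed.
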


\begin{lemma}\label{lemma: uniform}(From \cite[Section 11.2.4]{Temme}).
We have
\begin{align*}
& \frac{\gamma(a,z)}{\Gamma(a)} = \frac{1}{2}\mathrm{erfc}(-\eta \sqrt{a/2}) - R_{a}(\eta), \qquad R_{a}(\eta) = \frac{e^{-\frac{1}{2}a \eta^{2}}}{2\pi i}\int_{-\infty}^{\infty}e^{-\frac{1}{2}a u^{2}}g(u)du,
\end{align*}
where $\mathrm{erfc}$ is defined in \eqref{def of erfc}, 
\begin{align}\label{lol8}
& \lambda = \frac{z}{a}, \qquad \eta = (\lambda-1)\sqrt{\frac{2 (\lambda-1-\ln \lambda)}{(\lambda-1)^{2}}}, \qquad g(u) := \frac{dt}{du} \frac{1}{\lambda -t} + \frac{1}{u +i\eta},
\end{align}
with $t$ and $u$ being related by the bijection $t \mapsto u$ from $\mathcal{L} := \{\frac{\theta}{\sin \theta} e^{i\theta} : - \pi < \theta < \pi\}$ to $\R$ given by
$$u = -i(t-1) \sqrt{\frac{2(t-1-\ln t)}{(t-1)^2}}, \qquad  t \in \mathcal{L},$$
and the principal branch is used for the roots. Furthermore, as $a \to + \infty$, uniformly for $z \in [0,\infty)$,
\begin{align}\label{asymp of Ra}
& R_{a}(\eta) \sim \frac{e^{-\frac{1}{2}a \eta^{2}}}{\sqrt{2\pi a}}\sum_{j=0}^{\infty} \frac{c_{j}(\eta)}{a^{j}}, 
\end{align}
where all coefficients $c_{j}(\eta)$ are bounded functions of $\eta \in \mathbb{R}$ (i.e. bounded for $\lambda \in [0,+\infty)$). The first two coefficients are given by (see \cite[p. 312]{Temme})
\begin{align*}
c_{0}(\eta) = \frac{1}{\lambda-1}-\frac{1}{\eta}, \qquad c_{1}(\eta) = \frac{1}{\eta^{3}}-\frac{1}{(\lambda-1)^{3}}-\frac{1}{(\lambda-1)^{2}}-\frac{1}{12(\lambda-1)}.
\end{align*}
In particular, the following hold:
\item[(i)] Let $z=\lambda a$ and let $\delta >0$ be fixed. As $a \to +\infty$, uniformly for $\lambda \geq 1+\delta$,
\begin{align*}
\gamma(a,z) = \Gamma(a)\big(1 + \bigO(e^{-\frac{a \eta^{2}}{2}})\big).
\end{align*}
\item[(ii)] Let $z=\lambda a$. As $a \to +\infty$, uniformly for $\lambda$ in compact subsets of $(0,1)$,
\begin{align*}
\gamma(a,z) = \Gamma(a)\bigO(e^{-\frac{a \eta^{2}}{2}}).
\end{align*}
\end{lemma}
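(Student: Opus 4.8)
The plan is to split the statement into two layers: (a) the exact representation of $\gamma(a,z)/\Gamma(a)$ together with the full asymptotic expansion \eqref{asymp of Ra} of $R_a(\eta)$, which is the ``hard'' analytic input, and (b) the two corollaries (i)--(ii), which I would then deduce by elementary estimates on $\mathrm{erfc}$ and $R_a$.

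For layer (a), I would obtain the identity $\gamma(a,z)/\Gamma(a) = \tfrac12\mathrm{erfc}(-\eta\sqrt{a/2}) - R_a(\eta)$ and the contour-integral formula for $R_a(\eta)$ by starting from $\gamma(a,z) = \int_0^z t^{a-1}e^{-t}\,dt$ and performing the saddle-point-adapted change of variables $t \mapsto u$ displayed in \eqref{lol8}, which straightens the phase $t-\lambda\ln t$ near the confluent saddle $t=\lambda$; subtracting the contribution of the pole of the transformed integrand at $u=-i\eta$ isolates the regular function $g(u)$ and produces the complementary error term. This is carried out in detail in \cite[Section 11.2.4]{Temme}, so I would simply invoke it. The expansion \eqref{asymp of Ra} is then Watson's lemma applied to the integral defining $R_a(\eta)$, with $c_j(\eta)$ the Taylor coefficients of $g$; the point that each $c_j(\eta)$ extends to a \emph{bounded} function of $\eta\in\R$ (equivalently $\lambda\in[0,\infty)$) comes from the removable-singularity structure of $g$ at $\lambda=1$, and is again part of Temme's analysis. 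The explicit values of $c_0(\eta)$ and $c_1(\eta)$ are read off from \cite[p.~312]{Temme}.

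For (i): since $\eta$ has the same sign as $\lambda-1$ and $\eta^2 = 2(\lambda-1-\ln\lambda)$ is increasing in $\lambda$ on $(1,\infty)$, the hypothesis $\lambda\geq 1+\delta$ forces $\eta^2 \geq 2(\delta-\ln(1+\delta)) =: \eta_\delta^2 > 0$. Writing $\mathrm{erfc}(-x) = 2-\mathrm{erfc}(x)$ and using the elementary bound $\mathrm{erfc}(x)\leq e^{-x^2}$ for $x\geq 0$ gives $\tfrac12\mathrm{erfc}(-\eta\sqrt{a/2}) = 1 - \bigO(e^{-a\eta^2/2})$ uniformly for $\lambda\geq 1+\delta$; truncating \eqref{asymp of Ra} after the $j=0$ term and using the boundedness of $c_0,c_1$ gives $R_a(\eta) = \bigO(e^{-a\eta^2/2}/\sqrt a) = \bigO(e^{-a\eta^2/2})$, again uniformly. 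Subtracting yields $\gamma(a,z)/\Gamma(a) = 1 - \bigO(e^{-a\eta^2/2})$. Part (ii) is the same computation with the sign reversed: on a compact $K\subset(0,1)$ one has $\eta<0$ and $\eta^2$ bounded below by a positive constant, so $-\eta\sqrt{a/2}\to+\infty$ uniformly on $K$, whence $\tfrac12\mathrm{erfc}(-\eta\sqrt{a/2}) = \bigO(e^{-a\eta^2/2})$ and, as before, $R_a(\eta) = \bigO(e^{-a\eta^2/2}/\sqrt a)$; adding gives $\gamma(a,z)/\Gamma(a) = \bigO(e^{-a\eta^2/2})$ uniformly on $K$.

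The corollaries (i)--(ii) are thus routine once the representation is in hand; the real obstacle is entirely in layer (a), namely establishing the uniformity of \eqref{asymp of Ra} down to the confluent point $\lambda=1$ — equivalently, the uniform boundedness of the $c_j(\eta)$ in $\eta\in\R$. That is precisely the content of Temme's uniform asymptotic analysis, so in practice the proof reduces to a careful citation of \cite[Section 11.2.4]{Temme} supplemented by the elementary estimates above.
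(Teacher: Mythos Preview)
Your proposal is correct and matches the paper: the lemma is stated in the appendix without proof, simply as a citation of \cite[Section 11.2.4]{Temme}, so the paper's ``proof'' is exactly your layer (a) --- invoke Temme --- and it does not even spell out the elementary derivation of (i)--(ii) that you provide. Your extra paragraph deriving (i) and (ii) from the representation and the bound on $\mathrm{erfc}$ is a welcome addition and is carried out correctly.
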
 

\paragraph{Acknowledgements.} CC acknowledges support from the Ruth and Nils-Erik Stenb\"ack Foundation, the Novo Nordisk Fonden Project, Grant 0064428, and the Swedish Research Council, Grant No. 2021-04626. JL acknowledges support from the European Research Council, Grant Agreement No. 682537, the Swedish Research Council, Grant No. 2021-03877, and the Ruth and Nils-Erik Stenb\"ack Foundation.

\footnotesize

\end{document}